\newcommand{\rightdoublearrow}{%
  \rightarrow\mkern-10mu\protect\joinrel\rightarrow}
\newcommand{\rightdoublearrowfill@}
  {\arrowfill@\relbar\relbar\rightdoublearrow}
\newcommand{\mapstodoublearrowfill@}
  {\arrowfill@{\mapstochar\relbar}\relbar\rightdoublearrow}
\newcommand{\xrightdoublearrow}[2][]
  {\ext@arrow 3{15}59\rightdoublearrowfill@{#1}{#2}}
\newcommand{\xmapstodoublearrow}[2][]
  {\ext@arrow 3{15}59\mapstodoublearrowfill@{#1}{#2}}
\newcommand{\Bio}{\mathcal{B}}
\newcommand{\Ind}{\mathrel{\mathcal{I}}}
\newcommand{\act}{\mathsf{act}}
\newcommand{\mini}{\mathsf{min}}
\newcommand{\init}{\mathsf{init}}
\newcommand{\C}{\mathcal{C}}
\newcommand{\cs}[2]{(#1;#2)}
\newcommand{\Set}{\mathcal{S}}
\newcommand{\eqi}{\mathop{{=}^{?}}}
\newcommand{\neqi}{\mathop{{\neq}^{?}}}
\newcommand{\dedi}[3]{#1 \mathop{\vdash^{?}_{#2}} #3} 
\newcommand{\Sol}{\mathsf{Sol}}
\newcommand{\xMapsto}[2][]{\ext@arrow 0599{\Mapstofill@}{#1}{#2}}
\def\Mapstofill@{\arrowfill@{\Mapstochar\Relbar}\Relbar\Rightarrow}
\newcommand{\pair}[2]{\langle #1,#2 \rangle}
\newcommand{\test}[3]{\mathtt{if}\ #1\ \mathtt{then}\ #2\ \mathtt{else}\ #3}
\newcommand{\testt}[2]{\mathtt{if}\ #1\ \mathtt{then}\ #2}
\newcommand{\Out}{\mathtt{out}}
\newcommand{\In} {\mathtt{in}}
\newcommand{\ok}{\mathsf{ok}}
\newcommand{\ska}{\mathit{ska}}
\newcommand{\skb}{\mathit{skb}}
\newcommand{\start}{\mathsf{start}}
\newcommand{\aenc}[2]{\mathsf{aenc}(#1,#2)}
\newcommand{\adec}[2]{\mathsf{adec}(#1,#2)}
\newcommand{\pk}[1]{\mathsf{pk}(#1)}
\newcommand{\pub}{\mathsf{pub}}
\newcommand{\projl}[1]{\pi_1(#1)}
\newcommand{\projr}[1]{\pi_2(#1)}
\newcommand{\fv}{\mathit{fv}} 
\newcommand{\fvs}{\mathit{fv}^2} 
\newcommand{\fvp}{\mathit{fv}^1} 
\newcommand{\inp}[2]{\In(#1,#2)} 
\newcommand{\out}[2]{\Out(#1,#2)} 
\newcommand{\sint}[1]{\lrstep{#1}} 
\newcommand{\sintf}[2]{\xrightdoublearrow{#1}_{#2}} 
\newcommand{\sintc}[1]{\lrstep{#1}_c}         
\newcommand{\ssym}[1]{\xmapsto{#1}}           
\newcommand{\ssymf}[2]{\xmapstodoublearrow{#1}_{#2}}   
\newcommand{\ssymc}[1]{\xmapsto{#1}_c}        
\newcommand{\ssymd}[1]{\xmapsto{#1}_d}          
\newcommand{\eint}{\approx}                      
\newcommand{\esymd}{\approx_d^{2}}                   
\newcommand{\esymdf}{\approx_d^{1}}                   
\newcommand{\incesymd}{\sqsubseteq_d^{2}} 
\newcommand{\incesymdf}{\sqsubseteq_d^{1}} 
\newcommand{\incesym}{\sqsubseteq_s} 
\newcommand{\inceintc}{\sqsubseteq_c} 
\newcommand{\esymdp}{\approx_d^{1}}                   
\newcommand{\incesymdp}{\sqsubseteq_d^1}
\newcommand{\esym}{\approx_s}                      
\newcommand{\estat}{\sim}                        
\newcommand{\och}{\prec}                   
\newcommand{\vect}[1]{\overrightarrow{#1}} 
\newcommand{\io}[3]{\mathtt{io}_{\mathit{#1}}(\vect{#2},\vect{#3})}
\newcommand{\iossvect}[3]{\mathtt{io}_{\mathrm{#1}}({#2},{#3})}
\newcommand{\Dep}[2]{\mathrm{dep}\left(#1,#2\right)} 
\newcommand{\AllDep}[1]{\mathrm{Deps}\left(#1\right)}
\newcommand{\depSym}{\mathbb{n}}                     
\newcommand{\constrd}[2]{#1 \depSym #2}       
\newcommand{\noconstrd}[1]{{#1}^\circ} 
\renewcommand*\env@matrix[1][\arraystretch]{%
  \edef\arraystretch{#1}%
  \hskip -\arraycolsep
  \let\@ifnextchar\new@ifnextchar
  \array{*\c@MaxMatrixCols c}}
\newcommand{\I}{{\mathrel{\mathcal{I}}}}
\newcommand{\ie}{\emph{i.e.,}\xspace}
\newcommand{\eg}{\emph{e.g.,}\xspace}
\newcommand{\tr}{\mathsf{tr}}
\newcommand{\approxc}{\approx_c}
\newcommand{\Ch}{\mathcal{C}}
\newcommand{\proc}[2]{(#1;#2)}
\newcommand{\eqdef}{\stackrel{\mathsf{def}}{=}}
\newcommand{\refer}{\triangleright}
\newcommand{\X}{\mathcal{X}}
\newcommand{\W}{\mathcal{W}}
\newcommand{\T}{\mathcal{T}}
\newcommand{\N}{\mathcal{N}}
\newcommand{\triple}[3]{(#1;#2;#3)}
\newcommand{\q}{\mathcal{Q}}
\newcommand{\p}{\mathcal{P}}
\newcommand{\E}{\mathsf{E}}
\newcommand{\dom}{\mathrm{dom}}
\newcommand{\statequiv}{\sim}
\def\rightarrowfillstar@{\arrowfill@\relbar\relbar{\rightarrow\smash{^*}}}
\newcommand{\xrightarrowstar}[2][]{\ext@arrow
  0{13}{15}8\rightarrowfillstar@{#1}{#2}}
\newcommand{\lrstep}{\@ifstar{\xrightarrowstar}{\xrightarrow}}
\newcommand{\cb}{c_B}
\newcommand{\bio}{\mathtt{io}}
\def\Rightarrowfillstar@{\arrowfill@=={\Rightarrow\smash{^*}}}
\newcommand{\NewxRightarrowstar}[2][]{\ext@arrow
  0{13}{15}8\Rightarrowfillstar@{#1}{#2}}
\newcommand{\NewxRightarrow}[2][]{\ext@arrow
  0{13}{15}8\Rightarrowfill@{#1}{#2}}
\newcommand{\LRstep}{\@ifstar{\NewxRightarrowstar}{\NewxRightarrow}}
\title{A reduced semantics for deciding trace equivalence using
  constraint systems
\thanks{This work has been
    partially supported by the  project JCJC VIP  ANR-11-JS02-006, and
    the Inria large scale initiative CAPPRIS.}
}
\author{David Baelde\inst{1} \and St\'ephanie Delaune\inst{1} \and Lucca Hirschi\inst{1,2}}
\institute{
LSV, ENS Cachan \& CNRS \& Inria Saclay \^Ile-de-France
\and
ENS Lyon, France
}
\begin{document}

\maketitle
\begin{abstract}
  Many privacy-type properties of security protocols can be modelled
   using trace equivalence properties in suitable process algebras. 
  It has been shown that such properties can be decided for interesting
  classes of finite processes (\ie without replication) by means of  symbolic execution
  and constraint solving.
  However, this does not suffice to obtain practical tools. Current
  prototypes suffer from a classical combinatorial explosion problem caused
  by the exploration of many interleavings in the behaviour of processes.
  M\"odersheim \emph{et al.}~\cite{ModersheimVB10} have tackled this problem
  for reachability properties using partial order reduction techniques. We revisit their work, generalize it and
  adapt it for equivalence checking. 
We obtain an optimization in the form of a reduced symbolic semantics
that eliminates redundant interleavings on the fly.
\end{abstract}

\section{Introduction}
\label{sec:intro}

Security protocols are widely used today to secure transactions that rely on 
public channels like the Internet, where dishonest users may listen to 
communications and interfere with them. 
A secure communication has a different meaning depending on the underlying 
application. It ranges from the confidentiality of data (medical files, 
secret keys, etc.) to, \eg verifiability in electronic voting systems. Another 
example is the notion of privacy that appears in many contexts such as 
vote-privacy in electronic voting or untraceability in RFID technologies.

Formal methods have proved their usefulness for precisely analyzing the 
security of protocols. 
In particular, a wide variety of model-checking approaches have been 
developed
to analyse protocols against an 
attacker who entirely controls the communication network, and
several tools are now available to automatically verify cryptographic 
protocols~\cite{BlanchetCSFW01,Cr2008Scyther,sysdesc-CAV05}.
A major challenge faced here is that one has to account for infinitely
many behaviours of the attacker, who can generate arbitrary messages.
In order to cope with this prolific attacker problem and obtain decision
procedures, approaches based on symbolic semantics and constraint resolution
have been proposed~\cite{MS02,RT01}.
This has lead to tools for verifying reachability-based security properties 
such as confidentiality~\cite{MS02} or, more recently, equivalence-based
properties such as privacy~\cite{Tiu-csf10,cheval-ccs2011,CCK-esop12}.

In both cases, the practical impact of most of these tools is limited by a typical
state explosion problem caused by the exploration of the large number of
interleavings in the protocol's behaviour.
In standard model-checking approaches for concurrent systems, the 
interleaving problem is handled using partial order reduction 
techniques~\cite{Peled98}. 
For instance, the order of execution of two 
independent (parallel) actions is typically irrelevant for checking
reachability.
Things become more complex when working with a symbolic semantics:
the states obtained from the interleaving of parallel actions will differ,
but the sets of concrete states that they represent will have a significant
overlap.
Earlier work has shown how to limit this overlap~\cite{ModersheimVB10}
in the context of reachability properties for security protocols,
leading to high efficiency gains in the OFMC tool of the
AVISPA platform~\cite{sysdesc-CAV05}.

In this paper, we revisit the work of~\cite{ModersheimVB10} to obtain a 
partial order reduction technique for the verification of equivalence 
properties.
Specifically, we focus on trace equivalence, requiring that two processes
have the same sets of observable traces and perform indistinguishable
sequences of outputs.
This notion is well-studied and several algorithms and tools support
it~\cite{BlanchetAbadiFournetJLAP08,chevalier10,Tiu-csf10,cheval-ccs2011,CCK-esop12}.
Contrary to what happens for reachability-based properties, trace equivalence 
cannot be decided relying only on the reachable states. The sequence of 
actions that leads to this state plays a role. Hence, extra precautions
have to be taken before discarding a particular interleaving: we have to 
ensure that this is done in both sides of the equivalence in a similar
fashion.
Our main contribution is an optimized form of equivalence that discards
a lot of interleavings, and
a proof that this reduced equivalence coincides with
trace equivalence.  Furthermore, our study brings
an improvement of the original technique~\cite{ModersheimVB10} that
would apply equally well for reachability checking.

\smallskip{}

\noindent\emph{Outline.}
In Section~\ref{sec:model}, we introduce our model for security processes.
We consider the class of simple processes introduced in~\cite{CCD-tcs13},
with else branches and no replication.
Then we present two successive optimizations in the form of refined
semantics and associated trace equivalences.
Section~\ref{sec:compression} presents 
a \emph{compressed} semantics that limits interleavings by
executing blocks of actions. 
Then, this is lifted to a symbolic semantics in 
Section~\ref{sec:constraint-solving}.
Finally, Section~\ref{sec:diff} presents the \emph{reduced} semantics
which makes use of dependency constraints to remove more interleavings.
We conclude in Section~\ref{sec:conclu}, mentioning a preliminary implementation
that shows efficiency gains in practice and 
some directions for future work.

\section{Model for security protocols}
\label{sec:model}

In this section, we introduce the cryptographic process calculus that
we will use to describe security protocols. This calculus is close to
the applied pi calculus~\cite{AbadiFournet2001}.

\subsection{Messages}
\label{subsec:messages}

A protocol consists of some agents communicating on a network.
Messages sent by agents are modeled using a term algebra.
We assume two infinite and disjoint sets of variables, $\X$ and $\W$.
Members of $\X$ are denoted $x$, $y$, $z$, whereas members of~$\W$ are
denoted~$w$ and used 
as \emph{handles} for previously
output terms.
We also assume a set~$\N$ of \emph{names}, which are used
for representing keys or nonces,
and a signature $\Sigma$ consisting of a finite set of function symbols.
Terms are generated inductively from names, variables, and function 
symbols applied to other terms. For $S \subseteq \X\cup\W\cup\N$,
the set of terms built from $S$ by applying function symbols in
$\Sigma$ is denoted by $\T(S)$. 
Terms in $\T(\N\cup\X)$ represent messages and are denoted by
$u$, $v$, etc. while terms in $\T(\W)$ represent \emph{recipes}
(describing how the attacker built a term from the available outputs)
and are written $M$, $N$, $R$. 
We write $\fv(t)$ for the set of variables (from $\X$ or $\W$)
occurring in a term $t$. A term is \emph{ground} if it does not contain any
variable, \ie it belongs to $\T(\N)$. We may rely on a sort system for terms, but its details
are unimportant for this paper.

To model algebraic properties of cryptographic primitives, we consider
an equational theory $\E$. The theory will usually be generated for finite
axioms and enjoy nice properties, but these aspects are irrelevant for the
present work.

\begin{example}
\label{ex:signature}
In order to model asymmetric encryption and pairing, we consider:\\[1mm]
\null\hfill $\Sigma = \{ \aenc{\cdot}{\cdot},\; \adec{\cdot}{\cdot}, \;\pk{\cdot}, \;\pair{\cdot}{\cdot}, \;\projl{\cdot},\; \projr{\cdot}\}.$\hfill\null

\smallskip{}

To take into account the properties of
these operators, we consider the equational
theory~$\E_{\mathsf{aenc}}$ generated by the three following equations:\\[1mm]
\null\hfill
$\adec{\aenc{x}{\pk{y}}}{y} =x, \;\;\;\;
\projl{\pair{x_1}{x_2}} = x_1, \mbox{ and }
\projr{\pair{x_1}{x_2}} = x_2. 
$\hfill\null

\smallskip{}

\noindent For instance, we have $\projr{\adec{\aenc{\pair{n}{\pk{ska}}}{\pk{skb}}}{skb}}  =_{\E_\mathsf{aenc}} \pk{ska}$.
\end{example}

\subsection{Processes}
\label{subsec:processes}

We do not need the full applied pi calculus to represent security
protocols.  Here, 
we only consider public channels and we assume that each process communicates on
a dedicated channel.


Formally, we assume a set $\Ch$ of \emph{channels} and we consider the fragment of
\emph{simple processes} without replication built on \emph{basic processes} as defined in~\cite{CCD-tcs13}.
A basic process represents a party in a protocol, which may sequentially
perform  actions such as waiting for a message, checking that a message has a 
certain form, or outputting a message.
Then, a simple process is a parallel composition of such basic processes
playing on distinct channels.

\begin{definition}[basic/simple process]
\label{def : basic process}
The set  of \emph{basic processes} 
on ${c \in \Ch}$
is defined using the following grammar (below $u,v\in\T(\N\cup\X)$ and $x \in \X$):\\[1mm]
\null\hfill
$\begin{array}{lcll}
  P,Q &:=& 0                            &\mbox{null} \\
      &|& \test{u=v}{P}{Q}  \quad\quad &\mbox{conditional}\\
      &|& \In(c,x).P                   &\mbox{input}\\
      &|& \Out(c,u).P                  &\mbox{output}\\
\end{array}$
\hfill\null

\smallskip{}

 A \emph{simple process} $\p = \{ P_1, \ldots, P_n \}$
  is a {multiset} of basic processes $P_i$ on pairwise
  distinct channels $c_i$. {We assume that null processes are removed.}
\end{definition}

For conciseness, we often omit brackets, null processes, and even
``\texttt{else} 0''.
Basic processes are denoted by the letters $P$ and $Q$, whereas simple
processes are denoted using $\p$ and $\q$.

\smallskip{}

During an execution, the attacker learns the messages that have been
sent on the different public channels. Those messages are organized
into a \emph{frame}. 

\begin{definition}[frame]
  A \emph{frame} $\Phi$ is a substitution whose domain is included in
  $\W$ and image is included in $\T(\N\cup\X)$. It is written
  $\{ w \refer u, \ldots \}$. A frame is \emph{closed} when
  its image only contains ground terms.
\end{definition}

An \emph{extended simple proces} (denoted $A$ or $B$) is a pair made of a simple process
and a frame.  Similarly, we define \emph{extended basic processes}.
Note that we
 do not have an explicit set of restricted names. Actually, all
 names are restricted and public ones are explicitly given to
 the attacker through a frame.

\begin{example}
\label{ex:private}
We consider the protocol given
in~\cite{AbadiF04} designed for  transmitting a secret without revealing its
identity to other participants.
In this protocol, $A$ is willing to engage in
communication with~$B$ and  wants to reveal its identity to~$B$.
However, $A$ does not want to compromise its privacy by revealing its identity
or the identity of~$B$ more broadly. The participants~$A$ and~$B$ proceed as follows:\\[2mm]
\null\hfill
$ \begin{array}{rcl}
 A \rightarrow B& \; :\; & \{N_a,\pub_A\}_{\pub_B}\\
 B \rightarrow A& \;: \;& \{N_a,N_b,\pub_B\}_{\pub_A}
 \end{array}$
\hfill\null

\smallskip{}
Moreover, if the message received by $B$ is not of the expected form
then~$B$ sends out a ``decoy'' message: $\{N_b\}_{\pub_B}$. This message should basically look like~$B$'s other
 message from the point of view of an outsider.

\medskip{}

Relying on the signature and equational theory introduced in
Example~\ref{ex:signature}, a session of role~$A$ played by agent~$a$
(with private key $ska$)
with~$b$ (whose public key is $pkb$) can be modeled as follows:\\[1mm]
\null\hfill
$\begin{array}{lcl}
P(\ska, pkb) &\eqdef&
\Out(c_A, \aenc{\pair{n_a}{\pk{ska}}}{pkb}). \\
&& \In(c_A,x). \\ 
&& \testt{\pair{\projl{\adec{x}{ska}}}{\projr{\projr{\adec{x}{ska}}}} =
  \pair{n_a}{pkb}}{0}
\end{array}$\hfill\null
\smallskip{}
\\ \noindent
Here, we are only 
  considering the authentication protocol.
A more comprehensive model should include the access to an
application in case of a success.
Similarly, a session of role~$B$ played by agent~$b$ with~$a$ can be
modeled by the following basic proces where $N = \adec{y}{skb}$.\\[1mm]
\null\hfill
$\begin{array}{lcl}
Q(skb, pka) &\eqdef & \In(\cb,y) . \\
&& \texttt{if}\ \projr{N} = pka\;  \texttt{then} \; \Out(\cb, \aenc{\pair{\projl{N}}{\pair{n_b}{\pk{skb}}}}{pka}) \\
&&\phantom{\texttt{if}\ \projr{N} = pka}\; {\texttt{else} \; \Out(\cb, \aenc{n_b}{\pk{skb}})} 
\end{array}$\hfill\null

\smallskip{}

To model a scenario with one session of each role (played by the
agents~$a$ and~$b$), we may consider the extended process
$\proc{\p}{\Phi_0}$ where:
\begin{itemize}
\item  $\p
= \{P(ska, \pk{skb}), Q(skb, \pk{ska})\}$, and 
\item $\Phi_0 = \{w_0 \refer \pk{ska'}, w_1 \refer \pk{ska}, w_2
\refer \pk{skb}\}$. 
\end{itemize}
The purpose of $\pk{\ska'}$ will be clear later on. It allows us to 
consider
the existence of
another agent $a'$ whose public key  $\pk{ska'}$ is known by the attacker.
\end{example}

\subsection{Semantics}
\label{subsec:semantics}

We first define a standard concrete semantics. Thus, in this section, we work only with
closed extended processes, \ie processes $\proc{\p}{\Phi}$ where
$\fv(\p) = \varnothing$.

\bigskip{}

\noindent \resizebox{\textwidth}{!}{
$\begin{array}{lrcl}
 \mbox{\sc Then} \;\;\;\;\;&
  \multicolumn{3}{l}{\proc{\{\test{u=v}{Q_1}{Q_2}\}\uplus\p}{\Phi} 
\; \lrstep{\;\tau\;}\; \proc{\{Q_1\}\uplus\p}{\Phi} 
\hspace{1cm} \mbox{if $u =_{\E} v$}}\\[1.5mm]
\mbox{\sc Else} &
 \multicolumn{3}{l}{\proc{\{\test{u=v}{Q_1}{Q_2}\}\uplus\p}{\Phi}
 \; \lrstep{\;\tau\;} \; 
 \proc{\{Q_2\}\uplus\p}{\Phi} 
 \hspace{1cm} \mbox{if $u \neq_{\E} v$}}\\[1.5mm]
\mbox{\sc In} &
\proc{\{\In(c,x).Q\}\uplus\p}{\Phi} 
& \lrstep{\In(c,M)} & 
\proc{\{Q\{x\mapsto u\}\}\uplus\p}{\Phi}
\\ \multicolumn{4}{r}{\mbox{if $M \in \T(\dom(\Phi))$ and
    $M\Phi = u$}}
\\
 \mbox{\sc Out}
&
 \proc{\{\Out(c,u).Q\}\uplus\p}{\Phi}
 & \lrstep{\Out (c,w)} & \proc{\{Q\}\uplus\p}{\Phi\cup\{w\refer u\}}\\
\multicolumn{4}{r}{\mbox{if $w$ is a fresh variable}}
\\
\multicolumn{4}{l}{\text{where } c\in\mathcal{C}, w\in\mathcal{W}\text{ and }x\in\mathcal{X}.}\\
\end{array}
$}

\bigskip{}

A process may input any term that an attacker can build (rule {\sc
  In}): $\{x \mapsto u\}$ is a substitution that replaces any occurrence
of $x$ with $u$.
In the {\sc Out} rule, we enrich the attacker's knowledge by adding the newly 
output term~$u$, with a fresh handle~$w$, to the frame.
The two remaining rules are unobservable ($\tau$ action) from the point of 
view of the attacker.

\smallskip{}
%
 The relation $A \lrstep{a_1 \ldots a_k} B$
 between extended simple processes,
 where $k\geq 0$ and
 each $a_i$ is an observable or a $\tau$ action,
is defined in the usual way. We also consider the relation $\LRstep{\;\tr}$
defined as follows: $A \LRstep{\;\tr} B$ if, and only if, there exists
$a_1 \ldots a_k$ such that $A \lrstep{a_1 \ldots a_k} B$, and $\tr$ is
obtained from $a_1 \ldots a_k$ by erasing all occurrences of
$\tau$.


\begin{example}
\label{ex:semantics}
Consider the process $\proc{\p}{\Phi_0}$
introduced in Example~\ref{ex:private}. We have:\\[1mm]
\null\hfill
$\proc{\p}{\Phi_0}
 \lrstep{\Out(c_A,w_3) \cdot \In(\cb,w_3) \cdot \tau \cdot
   \Out(\cb,w_4) \cdot \In(c_A,w_4) \cdot \tau}
 \proc{\varnothing}{\Phi}$. \hfill\null

\smallskip{}

This trace corresponds to the normal execution of one instance of the
protocol. The two silent actions have been triggered using the {\sc
  Then} rule. The resulting frame $\Phi$ is as follows:\\[1mm]
\null\hfill
$\Phi_0 \uplus \{w_3 \refer
\aenc{\pair{n_a}{\pk{ska}}}{\pk{skb}}, \;w_4 \refer
\aenc{\pair{n_a}{\pair{n_b}{\pk{skb}}}}{\pk{ska}}\}.$\hfill\null
\end{example}


\subsection{Trace equivalence}
\label{subsec:equivalence}

Many interesting security properties, such as privacy-type
properties studied \eg in~\cite{arapinis-csf10}, are formalized
using the notion of \emph{trace equivalence}. 
We first introduce the notion of \emph{static
equivalence}  that compares sequences of messages. 

\begin{definition}[static equivalence]
Two frames $\Phi$ and $\Phi'$ are in \emph{static equivalence}, $\Phi
\statequiv \Phi'$, when we have that $\dom(\Phi) = \dom(\Phi')$,
and:\\[1mm]
\null\hfill
$M\Phi =_\E N\Phi \;\;\Leftrightarrow \;\; M\Phi' =_\E N\Phi'
\mbox{ for any terms $M, N \in \T(\dom(\Phi))$}.
$\hfill\null
\end{definition}

Intuitively, two frames are equivalent if an attacker cannot see the
difference between the two situations they represent, \ie   they satisfy the same equalities.

\begin{example}
\label{ex:static}
Consider the frame $\Phi$ given in Example~\ref{ex:semantics} and the
frame $\Phi'$ below:\\[1mm]
\null\hfill
$\Phi' \eqdef \Phi_0 \uplus \{w_3 \refer
\aenc{\pair{n_a}{\pk{ska'}}}{\pk{skb}}, \;\;w_4 \refer
\aenc{n_b}{\pk{skb}}\}.$
\hfill\null 

\smallskip{}

\noindent
Actually, we have that $\Phi \statequiv \Phi'$.  
Intuitively, the
equivalence holds since the attacker is not able to decrypt any of the
ciphertexts, and each ciphertext contains a nonce that prevents him to
build it from its components. 
Now, if we decide to give access to $n_a$ to the attacker, \ie considering 
$\Phi_+ = \Phi \uplus \{w_5 \refer n_a\}$ and $\Phi'_+ = \Phi' \uplus \{w_5
\refer n_a\}$, then the two frames $\Phi_+$ and $\Phi'_+$ are not in
static equivalence anymore. Let $M = \aenc{\pair{w_5}{w_1}}{w_2}$ and
$N = w_3$. We have that $M\Phi_+ =_{\E_\mathsf{aenc}} N\Phi_+$ whereas
$M\Phi'_+ \neq_{\E_{\mathsf{aenc}}} N\Phi'_+$.
\end{example}

\begin{definition}[trace equivalence]
Let~$A$ and~$B$ be two simple processes. We have that $A \sqsubseteq B$ if,
for every
sequence of actions $\tr$ such that $A \LRstep{\;\tr} \proc{\p}{\Phi}$, there
exists $\proc{\p'}{\Phi'}$ such that $B \LRstep{\;\tr}
\proc{\p'}{\Phi'}$ and $\Phi \statequiv \Phi'$. 
The processes~$A$ and~$B$ are 
\emph{trace equivalent}, denoted by $A \approx B$, if $A \sqsubseteq B$
and $B \sqsubseteq A$.
\end{definition}


\begin{example}
\label{ex:trace-equiv}
 Intuitively, the private authentication protocol presented in
 Example~\ref{ex:private} 
preserves anonymity if an attacker cannot
 distinguish whether $b$ is willing to talk to $a$ (represented by the
 process $Q(\skb,\pk{\ska})$) or willing to talk to $a'$  (represented by the
 process $Q(\skb,\pk{\ska'})$), provided $a$, $a'$ and~$b$ are honest participants.
This can be expressed relying on the following equivalence:\\[1mm]
\null\hfill
$\proc{ Q(\skb,\pk{\ska})}{\Phi_0} \stackrel{?}{\approx}
\proc{Q(\skb,\pk{\ska'})}{\Phi_0}.$\hfill\null

\smallskip{}

For illustration purposes,  we also consider a variant of the process~$Q$, denoted~$Q_0$, where its \texttt{else} branch has been
replaced by $\texttt{else } 0$. We will see that the ``decoy''
 message plays a crucial role to ensure privacy.
We have that:\\[1mm]
\null\hfill
$\proc{Q_0(\skb,\pk{\ska})}{\Phi_0} \lrstep{\In(\cb,\aenc{\pair{w_1}{w_1}}{w_2}) \cdot \tau
  \cdot  \Out(\cb,w_3)} \proc{\varnothing}{\Phi}$\hfill\null

\smallskip{}

\noindent where $\Phi = \Phi_0 \uplus \{w_3 \refer
\aenc{\pair{\pk{ska}}{\pair{n_b}{\pk{skb}}}}{\pk{ska}}\}$.

This trace has no counterpart in $\proc{Q_0(\skb,\pk{\ska'})}{\Phi_0}$. Indeed, we have
that:\\[1mm]
\null\hfill
$\proc{Q_0(\skb,\pk{\ska'})}{\Phi_0} \lrstep{\In(\cb,\aenc{\pair{w_1}{w_1}}{w_2}) \cdot \tau}
\proc{\varnothing}{\Phi_0}.$\hfill\null

\medskip{}

Hence, we
have that $\proc{Q_0(\skb,\pk{\ska})}{\Phi_0} \not\approx
\proc{Q_0(\skb,\pk{\ska'})}{\Phi_0}$.
Actually, it can been shown that $\proc{Q(\skb,\pk{\ska})}{\Phi_0} \approx
\proc{Q(\skb,\pk{\ska'})}{\Phi_0}$.
This is a non trivial equivalence that
can be checked using the tool APTE~\cite{APTE} within few seconds for a
simple scenario as the one considered here, and that takes few
minutes/days as soon as we want to consider 2/3 sessions of each
role.
\end{example}

\section{Reduction based on grouping actions}
\label{sec:compression}

A large number of possible interleavings results into multiple
occurrences of identical states. The compression step
lifts a common optimization that partly tackles this issue
in the case of reachability properties
to trace equivalence. The key idea is to force processes
to perform all enabled output actions as soon
as possible. In our setting, we can even safely force them
to perform a complete \emph{block} of input actions
followed by ouput actions.

\begin{example}
Consider the process $\proc{\p}{\Phi}$ with 
$\mathcal{P} = \{\In(c_1,x).P_1,\ \Out(c_2,b).P_2\}$.
In order to reach $\proc{\{P_1\{x \mapsto u\},\ P_2\}}{\Phi \cup \{w
  \refer b\}}$, we have to execute the action
$\In(c_1,x)$ (using a recipe $M$ that allows one to deduce $u$)
and the action $\Out(c_2,b)$ (giving us a label of the form $\Out(c_2,w)$). 
In case of reachability properties, the execution order of these
actions only 
matters if $M$ uses $w$.  Thus we can safely perform the outputs in priority.

The situation is more complex when considering trace equivalence.
In that case, we are concerned not only with reachable states, but also
with \emph{how} those states are reached. Quite simply, traces matter.
Thus, if we want to discard the trace $\In(c_1,M).\Out(c_2,w)$ when studying
process $\mathcal{P}$ and consider only its permutation
$\Out(c_2,w). \In(c_1,M)$, we have to 
make sure that the same permutation is available on the other process.
The key to ensure that identical permutations will be available on both
sides of the equivalence is our restriction to the class of simple processes.
\end{example}


\subsection{Compressed semantics}
\label{subsec:comp-semantics}

We now introduce the compressed semantics. Compression is an 
optimization, since it removes some interleavings. But it also gives rise to 
convenient ``macro-actions'', called \emph{blocks}, that combine a sequence of 
inputs followed by some outputs, potentially hiding silent actions.
Manipulating those blocks rather than indiviual actions
makes it easier to define our second optimization. 

\smallskip{}

For sake of simplicity, we consider \emph{initial} simple
processes. A simple
process $A = \proc{\p}{\Phi}$ is \emph{initial} if for any $P \in \p$,
we have that
$P = \In(c,x).P'$ for some channel~$c$, \ie each
basic process composing~$A$ starts with an input
action.

\begin{example}
\label{ex:initial}
Continuing Example~\ref{ex:private}, $\proc{\{P(\ska, \pk{\skb}),Q(\skb,\pk{\ska})\}}{\Phi_0}$ is not initial. 
Instead, we may consider $\proc{\{P_\init,
  Q(\skb,\pk{\ska})\}}{\Phi_0}$ where:\\[1mm]
\null\hfill
$
P_\init \eqdef \In(c_A,z). \testt{z = \start}{P(\ska,\pk{\skb})}$ \hfill\null
\smallskip{}

\noindent assuming that $\start$ is a (public) constant in our signature.
\end{example}

\begin{figure}[h]
  \[
  \begin{array}{lc}
\mbox{\sc In} &  \infer[\mbox{with }\ell\in\{i^*;i^+\}]
  {\proc{P}{\Phi}\sintf{\;\In(c,M). \tr\;}{\ell} \proc{P''}{\Phi''}}
  {\proc{P}{\Phi}\lrstep{\In(c,M)} \proc{P'}{\Phi'}
  & \proc{P'}{\Phi'}\sintf{\;\;\tr\;\;}{i^*} \proc{P''}{\Phi''}}
   \\ [2mm]
\mbox{\sc Out} & 
 \infer[\mbox{with } \ell\in\{i^*;o^*\}]
   {\proc{P}{\Phi}\sintf{\;\Out(c,w). \tr\;}{\ell} \proc{P''}{\Phi''}}
   {\proc{P}{\Phi}\sint{\Out(c,w)} \proc{P'}{\Phi'}
   & \proc{P'}{\Phi'}\sintf{\;\;\tr\;\;}{o^*} \proc{P''}{\Phi''}}
 \\ [2mm]
\mbox{\sc Tau} &
 \infer[\mbox{with }\ell\in\{o^*;i^+;i^*\}]
   {\proc{P}{\Phi} \sintf{\;\;\tr\;\;}{\ell} \proc{P''}{\Phi''}}
   {\proc{P}{\Phi} \lrstep{\;\;\tau\;\;} \proc{P'}{\Phi'} &
    \proc{P'}{\Phi'} \sintf{\;\;\tr\;\;}{\ell} \proc{P''}{\Phi''}}
 \\ [2mm]
\mbox{\sc Proper} &
   \infer{{\proc{0}{\Phi}\sintf{\;\;\epsilon\;\;}{o^*}
   \proc{0}{\Phi}}}{}
   \quad
   \infer{
     {\proc{\In(c,x).P}{\Phi}\sintf{\;\;\epsilon\;\;}{o^*}
     \proc{\In(c,x).P}{\Phi}}
   }{}
 \\ [2mm]
 \mbox{\sc Improper} &
 \infer{
   {\proc{0}{\Phi}\sintf{\;\;\epsilon\;\;}{i^*} \proc{\bot}{\Phi}}
 }{}
\end{array}
\]
  \caption{Focused semantics on extended basic processes}
  \label{fig:sintf}
\end{figure}

The main idea of the compressed semantics is to ensure
that when a basic process starts executing some actions, it actually
executes a maximal block of actions. In analogy with focusing in sequent
calculus, we say that the basic process takes the focus, and can only
release it under particular conditions.
We define in Figure~\ref{fig:sintf} how blocks can be executed by
extended basic processes. In that semantics,
the label~$\ell$ denotes the stage of the execution, starting with $i^+$, then
$i^*$ after the first input and $o^*$ after the first output.

\begin{example}
\label{ex:compressed-semantics}
Going back to Example~\ref{ex:trace-equiv}, we have that:\\[1mm]
\null\hfill
$
\proc{Q_0(\skb,\pk{\ska})}{\Phi_0}
\sintf{\;\;
  \In(\cb,\aenc{\pair{w_1}{w_1}}{w_2})\cdot\Out(\cb,w_3) \;\;}{i^+}
\proc{0}{\Phi}$\hfill\null

\smallskip{}

\noindent  where $\Phi$ is as given in Example~\ref{ex:trace-equiv}.
As illustrated by the prooftree below, we have also $\proc{Q_0(\skb,\pk{\ska})}{\Phi_0}
\sintf{\;\tr\;}{i^+} \proc{\bot}{\Phi_0}$ with $\tr =
\In(\cb,\aenc{\pair{w_1}{w_1}}{w_2})$. 
\smallskip{}
\noindent \resizebox{\textwidth}{!}{
$
\begin{prooftree}
\proc{Q_0(\skb,\pk{\ska})}{\Phi_0} \lrstep{\tr}
\proc{Q'}{\Phi_0}
\begin{prooftree}
\proc{Q'}{\Phi_0} \lrstep{\tau} \proc{0}{\Phi_0} 
\;\;
\begin{prooftree}
\justifies
\proc{0}{\Phi_0} \sintf{\epsilon}{i^*} \proc{\bot}{\Phi_0}
\using{\mbox{\sc {Improper}} \hspace{-0.7cm}}
\end{prooftree}
\justifies
\proc{Q'}{\Phi_0} \sintf{\epsilon}{i^*} \proc{\bot}{\Phi_0}
\using{\mbox{\sc {Tau}} \hspace{-0.4cm}} 
\end{prooftree}
\justifies
\proc{Q_0(\skb,\pk{\ska})}{\Phi_0}
\sintf{\tr}{i^+} \proc{\bot}{\Phi_0}
\using{\mbox{\sc In}}
\end{prooftree}
$}
\noindent where $Q' \eqdef \testt{\pk{\ska} = \pk{\ska}}{\Out(c_B,u)}$ for some
message $u$. 
\end{example}

Then we define the compressed reduction $\sintc{}$
between extended simple processes
{as the least reflexive transitive relation satisfying the
following rules}:

\bigskip{}

\noindent \resizebox{\textwidth}{!}{
$\begin{array}{lclclcl}
 \mbox{\sc Block} &&
  \infer
  {\proc{\{Q\}\uplus \p}{\Phi}\sintc{\;\;\tr\;\;}
    \proc{\{Q'\}\uplus\p}{\Phi'}}
  {\proc{Q}{\Phi}\sintf{\;\;\tr\;\;}{i^+} \proc{Q'}{\Phi'}
    & Q'\neq\bot} 
&\;&
\mbox{\sc Failure} &&
  \infer
  {\proc{\{Q\}\uplus\p}{\Phi}\sintc{\;\;\tr\;\;} \proc{\varnothing}{\Phi'}}
  {\proc{Q}{\Phi}\sintf{\;\;\tr\;\;}{i^+} \proc{Q'}{\Phi'} & Q' = \bot}
  \end{array}$
}

\bigskip{}

A basic process is allowed to \emph{properly} end a block execution when it 
has performed outputs and it cannot perform any more.
Accordingly, we call \emph{proper block} a non-empty 
sequence of inputs followed by a non-empty sequence of outputs, all on the 
same channel.
For completeness, we also allow improper termination of a block,
when the basic process
that is currently executing is not able to perform any visible action
(input or output) and it has not yet performed an output.

\begin{example}
Continuing Example~\ref{ex:compressed-semantics}, 
using the rule {\sc block}, we can derive that:\\[1mm]
\null\hfill
$\proc{\{P_\init, Q_0(\skb,\pk{\ska})\}}{\Phi_0}
\sintc{\;\;
  \In(\cb,\aenc{\pair{w_1}{w_1}}{w_2})\cdot\Out(\cb,w_3) \;\;}
\proc{P_\init}{\Phi}.$\hfill\null

\smallskip{}

\noindent We can also derive 
$\proc{\{P_\init, Q_0(\skb,\pk{\ska'})\}}{\Phi_0}
\sintc{\;\;
  \In(\cb,\aenc{\pair{w_1}{w_1}}{w_2})\;\;}
\proc{\varnothing}{\Phi_0}$ (using the rule {\sc Improper}).
Note that the resulting simple process is reduced to~$\varnothing$ even though
$P_\init$ has never been executed.
\end{example}

At first sight, killing the whole process when applying the rule {\sc Improper} may seem too strong. 
Actually, even if this kind of scenario is observable by the attacker, it does not bring
him any new knowledge, hence it plays only a limited role:
it is in fact sufficient to consider such improper blocks
at the end of traces.

 \begin{example}
 \label{ex:dirty-block}
 Consider $\p = \{ \In(c,x).\In(c,y),\; \In(c',x') \}$.
 Its compressed traces are of the form $\In(c,M).\In(c,N)$ and $\In(c',M')$.
 The concatenation of those two improper traces cannot be executed in 
 the compressed semantics. Intuitively, we do not loose anything for
 trace equivalence, because if a process can exhibit those two improper
 blocks they must be in parallel and hence considering their combination is 
 redundant.
 \end{example}


We define the notion of \emph{compressed trace equivalence} (resp. \emph{inclusion}) accordingly
relying on $\sintc{}$ instead of $\LRstep{}$,
and we denote them $\approx_c$ (resp. $\sqsubseteq_c$). 


\subsection{Soundness and completeness}
\label{subsec:soundness-completeness}

 The purpose of this section is to establish the soundness and
 completeness of the compressed semantics. More precisely, we show that
 the two relations~$\approx$ and~$\approx_c$ coincide on initial simple
 processes.
%

Intuitively, we can always permute output  (resp. input) actions occurring
on distinct channels, and we can also permute an output
with an input if the outputted message is not used to build the inputted
term.
More formally, 
we define an \emph{independence relation~$\Ind_a$ over actions} as
the least symmetric relation satisfying:
\vspace{-2pt}                   
\begin{itemize}
\item $\Out(c_i,w_i) \Ind_a \Out(c_j,w_j)$ and $\In(c_i,M_i) \Ind_a
  \In(c_j,M_j)$ as soon as $c_i \neq c_j$, 
\item $\Out(c_i,w_i) \Ind_a \In(c_j,M_j)$ when in addition $w_i
  \not\in \fv(M_j)$.
\end{itemize}
Then, we consider ${=}_{\Ind_a}$ to be the least
congruence (w.r.t. concatenation) satisfying $\act\cdot\act' =_{\Ind_a} \act'\cdot \act$
for all $\act$ and $\act'$ with $\act \Ind_a \act'$, and we show that
processes are equally able to execute equivalent (w.r.t. $=_{\Ind_a}$) traces.

\begin{restatable}{lemma}{lempermute}
\label{lem:permute}
Let $A$, $A'$ be two simple extended processes and $\tr$, $\tr'$ be such that
$\tr =_{\Ind_a} \tr'$. We have that $A \LRstep{\;\tr} A'$ if, and only if, $A
\LRstep{\;\tr'} A'$. 
\end{restatable}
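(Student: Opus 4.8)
The plan is to prove the lemma by reducing the general case to the single adjacent transposition, and then analyzing that atomic case directly against the concrete semantics. Since $=_{\Ind_a}$ is defined as the least congruence generated by the swaps $\act\cdot\act' =_{\Ind_a} \act'\cdot\act$ for independent $\act \Ind_a \act'$, any two traces with $\tr =_{\Ind_a} \tr'$ are connected by a finite chain of such elementary swaps applied within a common context. Moreover, since the biconditional ``$A \LRstep{\tr} A'$ iff $A \LRstep{\tr'} A'$'' is itself symmetric and transitive in the pair $(\tr,\tr')$, it suffices to establish it when $\tr$ and $\tr'$ differ by exactly one adjacent transposition of independent actions, say $\tr = \tr_1 \cdot \act \cdot \act' \cdot \tr_2$ and $\tr' = \tr_1 \cdot \act' \cdot \act \cdot \tr_2$ with $\act \Ind_a \act'$. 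An induction on the length of the connecting chain then closes the general statement.

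For the atomic case, I would first argue that it is enough to treat the single swap in isolation: if $A \LRstep{\tr_1} B$, the prefix $\tr_1$ is executed identically on both sides, and likewise the suffix $\tr_2$ is executed from whatever state results after the swapped pair; so I only need to show that for any intermediate extended process $B$, we have $B \LRstep{\act\cdot\act'} C$ iff $B \LRstep{\act'\cdot\act} C$ (same target $C$), possibly modulo the bookkeeping of fresh handles introduced by outputs. The heart of the proof is then a case analysis on the three forms of independence. When $\act$ and $\act'$ are both inputs, or both outputs, on distinct channels $c_i \neq c_j$, they act on different basic processes in the multiset $\p$ (recall simple processes put each basic process on its own channel), so the two {\sc In}/{\sc Out} rule applications commute trivially: the rewrites touch disjoint components of the multiset and, for two outputs, the two fresh handles can be chosen so that the resulting frame is the same regardless of order. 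When $\act = \Out(c_i,w_i)$ and $\act' = \In(c_j,M_j)$ with $c_i \neq c_j$ and $w_i \notin \fv(M_j)$, the key observation is that the recipe $M_j$ does not mention the handle $w_i$ produced by the output, so the message $M_j\Phi$ received by the input is unchanged whether the output has already extended the frame or not; hence the input can equally well fire before or after the output, yielding the same final process and frame.

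The main obstacle I anticipate is the careful handling of frames and fresh handles rather than any deep conceptual difficulty. Concretely, I must make sure that when two outputs are reordered the chosen fresh variables $w_i, w_j$ can be aligned so that the two execution orders land in literally the same extended process $C$ (the semantics allows any fresh $w$, so this is a matter of choosing witnesses consistently, but it needs to be stated). The side condition $w_i \notin \fv(M_j)$ in the output/input case is precisely what guarantees the interpreted input term is frame-order-independent, so I would make explicit that this condition is both necessary and exactly what the proof consumes. Everything else is a routine verification that the premises of the relevant semantic rules ({\sc In}, {\sc Out}, and possibly intervening $\tau$ steps from {\sc Then}/{\sc Else}, which commute freely since they do not depend on the frame handles in question) remain satisfied after the swap, which I would organize as a symmetric inspection of the two derivations.
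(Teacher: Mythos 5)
Your proposal is correct and follows essentially the same route as the paper's proof: reduce $=_{\Ind_a}$ to single adjacent swaps of independent actions, then do a case analysis (input/input, output/output, output/input with the side condition $w_i \notin \fv(M_j)$ guaranteeing the recipe is interpretable in the smaller frame), while commuting the interleaved $\tau$ actions since they act on a fixed basic process and never read the frame. The only cosmetic difference is that the handles in output labels are already fixed by the trace itself, so no choice of fresh witnesses is needed—one merely checks that freshness side conditions survive the reordering, which is the verification you flag.
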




Now, considering traces that are only made of proper blocks, a
strong relationship can be established between the two semantics.

\begin{restatable}{proposition}{procompsoundness}
\label{pro:comp-soundness}
Let $A$, $A'$ be two simple extended processes, and $\tr$ be a trace
made of proper blocks  such that $A \sintc{\;\tr\;} A'$. Then we have
that  $A \LRstep{\;\tr} A'$.
\end{restatable}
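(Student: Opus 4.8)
The plan is to prove Proposition~\ref{pro:comp-soundness} by induction on the compressed derivation $A \sintc{\;\tr\;} A'$, peeling off one proper block at a time. Since $\sintc{}$ is defined as the least reflexive transitive relation closed under {\sc Block} and {\sc Failure}, the reflexive case ($\tr = \epsilon$, $A = A'$) is immediate, and the transitive case reduces to showing that each single application of {\sc Block} that contributes a proper block can be simulated by the concrete semantics via $\LRstep{}$. Note that because $\tr$ is assumed to consist only of \emph{proper} blocks, the {\sc Failure} rule (which produces improper blocks and discards the rest of the simple process) never occurs in the derivation; this is precisely where the hypothesis on $\tr$ is used, and it lets us ignore the $Q'=\bot$ branch entirely.

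For the single-block step, I would prove an auxiliary claim about the focused semantics on basic processes: if $\proc{Q}{\Phi}\sintf{\;\tr_b\;}{i^+}\proc{Q'}{\Phi'}$ with $Q'\neq\bot$ and $\tr_b$ a proper block, then $\proc{Q}{\Phi}\LRstep{\;\tr_b\;}\proc{Q'}{\Phi'}$ in the concrete semantics. This is shown by induction on the derivation in Figure~\ref{fig:sintf}: each instance of the {\sc In}, {\sc Out}, and {\sc Tau} rules has a concrete premise $\proc{P}{\Phi}\lrstep{a}\proc{P'}{\Phi'}$ whose action label $a$ is exactly the contribution to $\tr_b$ (with {\sc Tau} contributing an unobservable $\tau$ that is erased in $\LRstep{}$), so the focused derivation is simply a concrete derivation decorated with stage annotations $\ell$. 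The {\sc Proper} axioms correspond to the empty concrete step. Lifting this from the basic process $Q$ to the whole simple process $\{Q\}\uplus\p$ is immediate, since the concrete rules {\sc In}, {\sc Out}, {\sc Then}, {\sc Else} all operate on a single component of the multiset and leave the rest $\p$ untouched, exactly as {\sc Block} does.

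The main obstacle I anticipate is bookkeeping rather than conceptual: I must be careful that the stage labels $\ell \in \{i^+, i^*, o^*\}$ do not impose any constraint that the concrete semantics fails to satisfy. Concretely, the focused semantics forbids an input \emph{after} an output has occurred (an $o^*$-stage derivation admits only {\sc Out}, {\sc Tau}, and {\sc Proper}), so a focused block really is a sequence of inputs followed by a sequence of outputs; but this is a \emph{restriction} on the focused side, and the concrete semantics can always perform the same actions in the same order, so soundness goes through in this easy direction. Since we are only proving soundness (every compressed proper-block trace is a concrete trace) and not completeness, there is no need here to argue that the chosen interleaving can be permuted to any other; Lemma~\ref{lem:permute} will be needed for the converse direction but not for this proposition. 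I would therefore expect the proof to be a routine double induction, with the only genuine care point being the explicit verification that the proper-block hypothesis rules out {\sc Improper}/{\sc Failure}.
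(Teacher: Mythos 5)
Your proposal is correct and follows essentially the same route as the paper, whose proof is the one-line observation that $\sintc{}$ is included in $\LRstep{}$ on traces made of proper blocks because {\sc Failure} cannot arise for them; your double induction (on the number of blocks, then on the focused derivation of a single block) is simply the careful expansion of that observation, including the right justification that {\sc Failure} steps yield only improper, output-free blocks and hence are excluded by the hypothesis. Nothing is missing, and the direction of the argument (soundness needs no permutation lemma) is correctly identified.
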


\begin{restatable}{proposition}{procompcompleteness}
\label{pro:comp-completeness}
Let $A$, $A'$ be two initial simple processes, and $\tr$ be a trace
made of proper blocks such that
 $A \LRstep{\;\tr} A'$.
Then, we have that $A \sintc{\;\tr\;} A'$.
\end{restatable}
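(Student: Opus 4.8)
The plan is to prove completeness by showing that any concrete execution realizing a proper-block trace can be reorganized, using the independence relation, into an execution that respects the focusing discipline of Figure~\ref{fig:sintf}, and then matched against the compressed rules {\sc Block}. The central tool is Lemma~\ref{lem:permute}: since $A \LRstep{\tr} A'$ and $\tr$ is a concatenation of proper blocks $b_1 \cdots b_n$, I first want to argue that the concrete actions can be permuted (within each block the actions already live on a single channel, so the ordering input-then-output is fixed, but across the whole trace I may need to interleave the $\tau$ actions and reorder blocks on distinct channels). The key observation is that each proper block $b_k$ acts on a single channel $c_k$, and because $A$ is a \emph{simple} process, the basic process on channel $c_k$ is uniquely determined; hence the inputs and outputs of $b_k$ are produced by a single basic process executing a maximal focused sequence.

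First I would proceed by induction on the number of blocks $n$. For the inductive step, consider the first block $b_1 = \In(c,M_1)\cdots\In(c,M_p)\cdot\Out(c,w_1)\cdots\Out(c,w_q)$ on channel $c$. I want to show that $A \sintf{b_1}{i^+} \proc{Q'}{\Phi'}$ with $Q' \neq \bot$, where the basic process $Q$ on channel $c$ in $A$ is the one executing $b_1$. This requires an auxiliary lemma at the level of basic processes: if a single basic process performs, in the concrete semantics, a sequence of inputs followed by a nonempty sequence of outputs interspersed with $\tau$'s, and is then unable (or not required) to continue, then the focused relation $\sintf{}{i^+}$ derives exactly this block. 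This is essentially structural: the {\sc In}, {\sc Out}, {\sc Tau} rules of Figure~\ref{fig:sintf} mirror the concrete rules, and the phase label $\ell$ enforces precisely the shape input$^+$output$^*$ of a proper block. The {\sc Proper} axioms allow termination when the basic process is null or back at an input, which is exactly when a proper block has finished. I would establish this by induction on the length of the block derivation, tracking the phase label.

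The main obstacle I anticipate is the \emph{interleaving} between blocks in the concrete trace. Nothing forces $\tr$, as executed by $A \LRstep{\tr} A'$, to run $b_1$ entirely before touching the basic process of $b_2$; the concrete execution may have interleaved their actions arbitrarily, as long as the resulting \emph{trace} (after erasing $\tau$'s) equals $b_1 \cdots b_n$. However, since the trace itself is already presented as the concatenation $b_1 \cdots b_n$, and distinct blocks use distinct channels (proper blocks are single-channel, and simple processes assign one basic process per channel), the actions of $b_1$ are $\Ind_a$-independent from those of later blocks except through recipe dependencies: an output handle $w_i$ from $b_1$ may be used in an input recipe $M_j$ of a later block. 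But the ordering $b_1$ before $b_j$ already respects such dependencies, so I can invoke Lemma~\ref{lem:permute} to push all $\tau$ actions belonging to $b_1$'s basic process adjacent to $b_1$'s visible actions, obtaining an execution $A \LRstep{b_1} A_1 \LRstep{b_2 \cdots b_n} A'$ where $A_1$ is again a simple process and the residual trace is again made of proper blocks on the remaining channels.

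Having isolated the first block, I apply the basic-process lemma to get $A \sintf{b_1}{i^+} \proc{Q'}{\Phi'}$ with $Q' \neq \bot$ (non-bottom because $b_1$ is \emph{proper}, i.e.\ it contains at least one output, so the {\sc Improper} axiom is not the one used), and then the {\sc Block} rule yields $A \sintc{b_1} A_1$. Finally I must check that $A_1$ is \emph{initial} so the induction hypothesis applies: after a proper block, the executing basic process $Q'$ is either null (removed) or sits at an input, and every other basic process was untouched, hence still begins with an input; thus $A_1$ is initial and the residual trace $b_2 \cdots b_n$ is executable by $A_1$ in the concrete semantics, letting me conclude $A_1 \sintc{b_2\cdots b_n} A'$ and therefore $A \sintc{\tr} A'$ by transitivity of $\sintc{}$. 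The delicate bookkeeping will be confirming that the phase-label discipline forces the focused derivation of $b_1$ to stop exactly at the end of the block and not earlier, which is where the {\sc Proper} axioms and the maximality of outputs come into play.
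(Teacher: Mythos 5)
Your proposal is correct and follows essentially the same route as the paper's proof: an auxiliary focused-execution lemma for a single basic process (proved by induction tracking the phase label, matching the paper's chained inductions for $o^*$, $i^*$, $i^+$), followed by an induction on the number of blocks, using the fact that each proper block lives on one channel and hence on one basic process, rearranging $\tau$ actions so that each block execution ends in an initial process, and closing with the {\sc Block} rule. The only cosmetic difference is that you invoke Lemma~\ref{lem:permute} for the $\tau$-rearrangement (which strictly speaking concerns only visible actions), where the paper handles the same point by a without-loss-of-generality argument of exactly the same nature.
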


\begin{restatable}{theorem}{theocompsoundnesscompleteness}
\label{theo:comp-soundness-completeness}
Let $A$ and $B$ be two initial simple processes. We have that:\\[1mm]
\null\hfill
$A \approx B \;\; \Longleftrightarrow\;\; A \approxc B.$\hfill\null
\end{restatable}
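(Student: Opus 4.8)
The plan is to prove the two implications $A\approx B \Rightarrow A\sqsubseteq_c B$ and $A\approx_c B \Rightarrow A\sqsubseteq B$. Since both $\approx$ and $\approx_c$ are symmetric two-sided inclusions, applying each implication to the pairs $(A,B)$ and $(B,A)$ then yields the stated equivalence. Two simple facts underlie everything. First, a compressed trace is a sequence of proper blocks optionally terminated by a single improper block, since rule {\sc Failure} leads to $\varnothing$ and nothing can follow it. Second, static equivalence is stable under restriction of the frame domain: if $\Phi_1\statequiv\Phi_2$ and $D\subseteq\dom(\Phi_1)$, then $\Phi_1|_D\statequiv\Phi_2|_D$, because every test over $\T(D)$ is already a test over the larger domain.

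For $A\approx B \Rightarrow A\sqsubseteq_c B$, I would take a compressed trace $A\sintc{\tr}(\p;\Phi)$ and split on its last block. If $\tr$ consists only of proper blocks, Proposition~\ref{pro:comp-soundness} gives $A\LRstep{\tr}(\p;\Phi)$, the hypothesis $A\sqsubseteq B$ provides $B\LRstep{\tr}(\p';\Phi')$ with $\Phi\statequiv\Phi'$, and Proposition~\ref{pro:comp-completeness} (using that $B$ is initial and $\tr$ is made of proper blocks) re-compresses it into $B\sintc{\tr}(\p';\Phi')$. If $\tr=\tr_0\cdot b$ with $b$ a final improper block, I first match $\tr_0$ as above, reaching $B\sintc{\tr_0}(\{R\}\uplus\q_0;\Phi')$ where $R$ is the unique basic process on the channel $c$ of $b$; since $b$ is a sequence of inputs driving $A$'s process on $c$ to $0$, the trace $\tr_0\cdot b$ is executable by $B$ in the standard semantics. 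The delicate point is that $b$ need not be improper on $B$: a priori $R$ could output after consuming $b$. This is exactly where I would invoke the other inclusion $B\sqsubseteq A$: were $R$ able to fire some $\Out(c,w)$, then $B\LRstep{\tr_0\cdot b\cdot\Out(c,w)}\cdots$ would have no counterpart in $A$, whose process on $c$ is already dead, contradicting $B\sqsubseteq A$. Hence $R$ reaches $0$, the block is improper on $B$ too, and {\sc Failure} yields $B\sintc{\tr_0\cdot b}(\q_0;\Phi')$ with statically equivalent frame.

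For the converse $A\approx_c B \Rightarrow A\sqsubseteq B$, I would start from an arbitrary standard trace $A\LRstep{\tr}(\p;\Phi)$ and reorganise it, via Lemma~\ref{lem:permute}, into a $=_{\Ind_a}$-equivalent trace of the form (complete proper blocks) followed by a partial block on a single channel. The key sub-lemma is that any such trace extends to a genuine compressed trace: one drives the focused process forward to complete its current block, either into a proper block (if it can still output) or into a final improper block (if it reaches $0$ without outputting), producing $\tr_c=_{\Ind_a}\tr\cdot\tr_r$ with $A\sintc{\tr_c}(\cdot;\Phi_c)$. The independence relation causes no trouble when pulling a block to the front, because an input occurring earlier in $\tr$ can never reference a handle produced later, so the side condition $w_i\notin\fv(M_j)$ of $\Ind_a$ is automatically satisfied. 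Applying $A\sqsubseteq_c B$ gives $B\sintc{\tr_c}(\cdot;\Phi'_c)$ with $\Phi_c\statequiv\Phi'_c$; Proposition~\ref{pro:comp-soundness} (together with the easy remark that a trailing improper block of inputs is executable in the standard semantics) turns this into $B\LRstep{\tr_c}(\cdot;\Phi'_c)$, and Lemma~\ref{lem:permute} with the factorisation $\tr_c=_{\Ind_a}\tr\cdot\tr_r$ lets me peel off the prefix $B\LRstep{\tr}(\q_1;\Phi_1)$. Finally, as $\tr$ fixes the output handles, $\Phi$ and $\Phi_1$ are the restrictions of $\Phi_c$ and $\Phi'_c$ to $\dom(\Phi)$, whence $\Phi\statequiv\Phi_1$ by the restriction fact.

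I expect the main obstacle to be the treatment of improper blocks in the forward direction: it is the only place where a one-sided inclusion does not suffice and the full equivalence must be used, and the argument hinges on simple processes assigning a single basic process to each channel, so that a dead process on $c$ genuinely blocks any further action on $c$. The secondary difficulty is making the reorganisation-and-completion sub-lemma of the backward direction precise — defining the current block on a channel, proving it can always be pulled to the front up to $=_{\Ind_a}$, and checking that completion always produces a well-formed proper or improper block.
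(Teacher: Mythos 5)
Your overall strategy is the same as the paper's (prove the two implications, moving between semantics via Propositions~\ref{pro:comp-soundness} and~\ref{pro:comp-completeness} and Lemma~\ref{lem:permute}), but both directions contain a genuine gap. In the forward direction, your proper-block case is unsound as written: Proposition~\ref{pro:comp-completeness} requires \emph{both} endpoints of the execution to be initial, i.e.\ the blocks of $\tr$ must be maximal for $B$ as well, and the one-sided hypothesis $A \sqsubseteq B$ does not give you that. Concretely, take $P = \In(c,x).\Out(c,n)$ and $Q = \In(c,x).\Out(c,n).\Out(c,m)$: then $\proc{P}{\varnothing} \sqsubseteq \proc{Q}{\varnothing}$ and $\tr = \In(c,M)\cdot\Out(c,w)$ is a proper block executed by $\proc{P}{\varnothing}$ in the compressed semantics, yet $\proc{Q}{\varnothing}$ cannot execute $\tr$ in the compressed semantics, because after $\Out(c,w)$ its basic process can still output, so rule {\sc Proper} does not apply. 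Hence your meta-claim that the improper block is ``the only place where a one-sided inclusion does not suffice'' is false: even for proper blocks you must invoke $B \sqsubseteq A$, exactly as the paper does when it argues that the process $B''$ reached by $B$ is initial (otherwise $B$ could extend some block of $\tr_{io}$ with an output that the dead-on-that-channel $A$ cannot match). A smaller gap of the same kind sits inside your improper-block argument: you rule out only an output by $R$ after consuming $b$, but rule {\sc Improper} requires the focused process to be exactly $0$, so you must also rule out a further \emph{input} on $c$; the same appeal to $B \sqsubseteq A$ handles it.

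In the backward direction, your reorganisation step and the ``key sub-lemma'' are false in general, because an arbitrary trace of $A$ may contain incomplete blocks on \emph{several} distinct channels, and after completion several of them can remain improper. This is exactly the paper's Example~\ref{ex:dirty-block}: for $\p = \{ \In(c,x).\In(c,y),\; \In(c',x') \}$, a trace interleaving inputs on $c$ and $c'$ completes into \emph{two} improper blocks, and their concatenation is not executable in the compressed semantics, since rule {\sc Failure} reduces the whole process to $\varnothing$ after one improper block. So there is in general no single compressed trace $\tr_c$ with $\tr_c =_{\Ind_a} \tr\cdot\tr_r$, and the step ``apply $A \sqsubseteq_c B$ to $\tr_c$'' has nothing to apply to. The paper circumvents this by completing the trace, factoring it as $\tr_{io}\cdot\tr_i$ where $\tr_i = \tr^{c_1}\cdots\tr^{c_n}$ consists of improper blocks on pairwise distinct channels, transferring $\tr_{io}$ and each $\tr_{io}\cdot\tr^{c_i}$ \emph{separately} through the compressed equivalence, and then recombining the resulting concurrent input executions of $B$ in the standard semantics before permuting back to $\tr$ with Lemma~\ref{lem:permute}. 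Your closing argument on frames (static equivalence is preserved under restriction to the common domain) is correct and matches the paper's.
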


\begin{proof}\emph{(Sketch)}
The main difficulty 
is that Proposition~\ref{pro:comp-completeness} only considers traces
composed of proper blocks whereas we have to consider all traces.
To prove the $\Rightarrow$ implication, we
have to pay attention to the last block of the compressed trace that
can be an improper one (composed of several inputs on a channel $c$). 
The $\Leftarrow$ implication is more difficult since we have to
consider a trace $\tr$ of a process~$A$
that is an interleaving of some prefix of proper and
improper blocks. We will first complete it to obtain an interleaving of
complete blocks and improper blocks. We then reorganize the 
actions providing an equivalent trace $\tr'$  w.r.t. $=_{\Ind_a}$
such that $\tr' = \tr_{\mathrm{io}} \cdot \tr_{\mathrm{in}}$ where $\tr_{\mathrm{io}}$
is made of proper blocks and $\tr_{\mathrm{in}}$ is made of improper blocks.
For each improper block~$b$ of $\tr_{\mathrm{in}}$, we show by applying
Lemma~\ref{lem:permute} and Proposition~\ref{pro:comp-completeness}
that $A$ is able to perform $\tr_{\mathrm{io}} \cdot b$ in the compressed
semantics and thus $B$ as well.
Finally, we show that the executions of all those (concurrent) blocks~$b$
can be put together, obtaining that~$B$ can perform~$\tr'$.
\qed
\end{proof}

Note that, as illustrated by the following example, the two underlying
notions of trace 
inclusion do \emph{not} coincide.

\begin{example}
Let $P = \In(c,x)$ and $Q = \In(c,x).\Out(c,n)$. 
Actually, we have that 
${\proc{P}{\varnothing} \sqsubseteq \proc{Q}{\varnothing}}$ whereas
%
$\proc{P}{\varnothing} \not\sqsubseteq_c \proc{Q}{\varnothing}$
since in the compressed semantics~$Q$ is not
allowed to stop its execution after its first input. 
\end{example}


\section{Deciding trace equivalence via constraint solving}
\label{sec:constraint-solving}

In this section, we propose a symbolic semantics for our compressed
semantics following, \eg \cite{MS02,baudet-ccs2005}. 
Such a semantics avoids
potentially infinite branching of our reduction semantics
due to inputs from the environment. 
Correctness is maintained by associating with each process a set of constraints on terms.

\subsection{Constraint systems}
\label{subsec:constraint-systems}

Following the notations of~\cite{baudet-ccs2005}, we consider a new set $\X^2$ of 
\emph{second-order variables}, denoted by $X$, $Y$, etc.
 We shall use those variables to abstract over recipes.
We denote by $\fvs(o)$ the set of free second-order variables of an object
$o$, typically a constraint system. To prevent ambiguities, we shall
use $\fvp$ instead of $\fv$ for free first-order variables.

\begin{definition}[constraint system]\label{def:cs}
 A \emph{constraint system} $\C = \cs{\Phi}{\Set}$ consists of a frame
 $\Phi$, and a set of
  constraints $\Set$.
 We consider three kinds of \emph{constraints}:\\[1mm]
\null\hfill $\dedi{D}{X}{x} \quad\quad u \eqi v \quad\quad u \neqi v$
\hfill\null
\smallskip{}

\noindent  where $D\subseteq\W$, $X\in\X^2$, $x\in\X$ and $u,v\in\T(\N\cup\X)$.
\end{definition}

The first kind of constraint expresses that a recipe~$X$
has to use only 
variables from a certain set~$D$, and
that the obtained term should be $x$. The handles in~$D$
 represent terms that have been previously outputted by the process.

We are not interested in general constraint systems, but only
consider constraint systems that are \emph{well-formed}.
Given $\C$, we define a dependency order on $\fvp(\C) \cap \X$
  by declaring that $x$ depends on $y$ if, and only if, $\Set$ contains a
  deduction constraint $\dedi{D}{X}{x}$ with $y\in\fvp(\Phi(D))$.
  For $\C$ to be a well-formed constraint system, we require that the
  dependency relationship is acyclic and that for every
  $x\in\fvp(\C) \cap \X$ (resp.~$X\in\fvs(\C)$) there is a unique constraint
  $\dedi{D}{X}{x}$ in $\Set$.
  For $X\in\fvs(\C)$, we write $D_\C(X)$ for the domain $D\subseteq\W$
  of the deduction constraint $\dedi{D}{X}{x}$ associated to~$X$
  in~$\C$.

\begin{example}
\label{ex:cs}
Let $\Phi = \Phi_0 \uplus \{w_3 \refer
\aenc{\pair{\projr{N}}{\pair{n_b}{\pk{skb}}}}{\pk{ska}}\}$ with $N =
\adec{y}{skb}$, and $\Set$ be a set containing two constraints:\\[1mm]
\null\hfill
$\dedi{\{w_0,w_1,w_2\}}{Y}{y} \mbox{ and } \projr{N} \eqi
\pk{ska}.$\hfill\null

\smallskip{}

\noindent We have that $\C = \cs{\Phi}{\Set}$ is a well-formed constraint system.
There is only one first-order variable $y \in \fvp(\C) \cap \X$, and it does
not occur in $\fvp(\Phi(\{w_0,w_1,w_2\}))$, which is empty. Moreover,
there is indeed a unique constraint that introduces~$y$.
\end{example}

Our notion of well-formed constraint systems is in line with 
what is used
\eg in~\cite{MS02,baudet-ccs2005}. 
We use a simpler and
(slightly) more permissive variant because we are not concerned with
constraint solving procedures in this work.

\begin{definition}[solution] \label{def:cssol}
A \emph{solution} of a constraint system $\C = \cs{\Phi}{\Set}$ is a
substitution $\theta$ such that
$\dom(\theta) = \fvs(\C)$, and
$X\theta \in \T(D_\C(X))$ for any $X \in \dom(\theta)$.
Moreover, we require that there exists a ground substitution $\lambda$
with $\dom(\lambda) = \fvp(\C)$ such that:
\begin{itemize}
  \item for every $\dedi{D}{X}{x}$ in $\Set$, we have that
  $(X\theta)(\Phi\lambda) =_\E x\lambda$;
\item for every $u \eqi v$ in $\Set$, we have that $u\lambda =_\E
  v\lambda$; and
\item for every $u \neqi v$ in $\Set$, we have that $u\lambda \neq_\E
  v\lambda$.
\end{itemize}
The set of solutions of a constraint system $\C$ is denoted $\Sol(\C$).
Since we consider constraint systems that are well-formed,
the substitution $\lambda$ is unique modulo $\E$ given 
$\theta\in\Sol(\C)$. We denote it by $\lambda_\theta$ when $\C$ is
clear from the context.
\end{definition}

\begin{example}
\label{ex:solution}
Consider again the constraint system $\C$ given in
Example~\ref{ex:cs}. We have that $\theta = \{Y \mapsto
\aenc{\pair{w_1}{w_1}}{w_2}\}$ is a solution of $\C$. Its
associated first-order solution is $\lambda_\theta = \{y \mapsto
\aenc{\pair{\pk{ska}}{\pk{ska}}}{\pk{skb}}\}$. 
\end{example}


\subsection{Symbolic processes: syntax and semantics}
\label{subsec:symbolic-calculus}

From a simple process $\proc{\p}{\Phi}$, 
we compute the constraint systems capturing its possible executions,
starting from the symbolic process $\triple{\p}{\Phi}{\varnothing}$.
Note that we are now manipulating processes that rely on free
variables.

\begin{definition}[symbolic process] \label{def:symbolic-process}
A symbolic process is a tuple $\triple{\p}{\Phi}{\Set}$ where
$\cs{\Phi}{\Set}$ is a constraint system and
$\fvp(\p)\subseteq(\fvp(\Set) \cap \X)$.
\end{definition}

\noindent We give below 
a standard symbolic semantics for 
our symbolic processes.

\medskip{}

\noindent\resizebox{\textwidth}{!}{$
\begin{array}{lcl}
\mbox{\sc In} &\;\;\;\;&
\triple{\In(c,y).P}{\Phi}{\Set} \ssym{\In(c,X)}
\triple{P\{y\mapsto x\}}{\Phi}{\Set \cup \{ \dedi{\dom(\Phi)}{X}{x} \}}\\
&& \hspace{1cm}\hfill\mbox{where $X$ (resp. $x$) is a  fresh second-order
  (resp. first-order) variable} 
\\
\mbox{\sc Out} &&
\triple{\Out(c,u).P}{\Phi}{\Set} \ssym{\Out(c,w)}
  \triple{P}{\Phi\cup\{w\refer u\}}{\Set}\\
&&\hfill\mbox{where $w$ is a fresh first-order variable}
\\
\mbox{\sc Then} &&
\triple{\test{u=v}{P}{Q}}{\Phi}{\Set} \ssym{\;\;\tau\;\;}
\triple{P}{\Phi}{\Set \cup \{ u \eqi v \}}
\\[1mm]
\mbox{\sc Else} &&
\triple{\test{u=v}{P}{Q}}{\Phi}{\Set} \ssym{\;\;\tau\;\;}
\triple{Q}{\Phi}{\Set \cup \{ u \neqi v \}}
\end{array}
$}

\medskip{}


From this semantics, we derive our compressed symbolic 
semantics $\ssymc{\;\tr\;}$ following 
the same pattern as for the concrete semantics (see
Section~\ref{subsec:app-symbolic-semantics} in Appendix).
We consider interleavings that execute maximal blocks of
actions, and we allow improper termination of a block only at
the end of a trace.
\begin{example}
\label{ex:symbolic-semantics}
We have that $\triple{Q_0(b,a)}{\Phi_0}{\varnothing} \ssymc{\;\tr\;}
\triple{\varnothing}{\Phi}{\Set}$ where:
\vspace{-1mm}
\begin{itemize}
\item $\tr = \In(\cb,
Y
)\cdot\Out(\cb,w_3)$, and
\item $\C = \cs{\Phi}{\Set}$ is the constraint system defined in
  Example~\ref{ex:cs}.
\end{itemize}
\end{example}

We are now able to define our notion
of (symbolic) trace equivalence.

\begin{definition}[trace equivalence w.r.t. $\ssymc{\tr}$]
\label{def:equiv-comp-symb}
Let~$A = \proc{\p}{\Phi}$
and~$B=\proc{\q}{\Psi}$ be two simple processes. We have that $A \sqsubseteq_s B$
when, for every sequence $\tr$ such
  that $\triple{\p}{\Phi}{\varnothing} \ssymc{\;\tr}
  \triple{\p'}{\Phi'}{\Set_A}$, for every $\theta \in
  \Sol(\Phi';\Set_A)$, we have that:
\vspace{-1mm}
\begin{itemize}
\item $\triple{\q}{\Psi}{\varnothing} \ssymc{\;\tr}
  \triple{\q'}{\Psi'}{\Set_B}$ with $\theta \in
  \Sol(\Psi';\Set_B)$, and 
\item $\Phi\lambda^A_\theta \statequiv
  \Psi\lambda^B_\theta$ where $\lambda^A_\theta$ (resp. $\lambda^B_\theta$) is the substitution
  associated to $\theta$ w.r.t. $\cs{\Phi'}{\Set_A}$
  (resp. $\cs{\Psi'}{\Set_B}$).
\end{itemize}
\vspace{-1mm}
We have that $A$ and $B$ are in \emph{trace equivalence
  w.r.t. $\ssymc{\tr}$} if $A \sqsubseteq_s B$ and $B
\sqsubseteq_s A$.
\end{definition}

\begin{example}
We have that $\proc{Q_0(b,a)}{\Phi_0} \not\sqsubseteq_s
\proc{Q_0(b,a')}{\Phi_0}$. Continuing
Example~\ref{ex:symbolic-semantics}, we have seen that $\triple{Q_0(b,a)}{\Phi_0}{\varnothing} \ssymc{\;\tr\;}
\triple{\varnothing}{\Phi}{\Set}$, and $\theta \in \Sol(\Phi;\Set)$
(see Example~\ref{ex:cs}). The only symbolic process that
is reachable from $\triple{Q_0(b,a')}{\Phi_0}{\varnothing}$ 
using $\tr$ is $\triple{\varnothing}{\Phi'}{\Set'}$ with:
\vspace{-1mm}
\begin{itemize}
\item $\Phi' = \Phi_0 \uplus\{w_3 \refer \aenc{\pair{\projr{N}}{\pair{n_b}{\pk{skb}}}}{\pk{ska'}}\}$, and
\item $\Set' = \big\{\dedi{\{w_0,w_1,w_2\}}{Y}{y};\;\; \projr{N} \eqi
  \pk{ska'}\big\}$.
\end{itemize} 
\vspace{-1mm}
We can check that $\theta$ is not a solution of $\cs{\Phi'}{\Set'}$.
\end{example}

For processes without replication, the symbolic transition system is
finite. Thus, deciding (symbolic) trace equivalence between
processes boils down to
the problem of deciding a notion of equivalence between sets of
constraint systems. This problem is well-studied and several
procedures already exist~\cite{baudet-ccs2005,chevalier10,cheval-ccs2011}.


\subsection{Soundness and completeness}
\label{subsec:symbolic-soundness-completeness}

Using the usual approach, such as the one developed in~\cite{baudet-ccs2005,CCD-tcs13}, we can
show soundness and completeness of our symbolic compressed semantics
w.r.t. our concrete compressed semantics. We have:
\begin{itemize}
\item \emph{Soundness}: each transition in the compressed symbolic semantics represents a set
of transitions that can be done in the concrete compressed semantics.
\item \emph{Completeness}: each transition in the compressed semantics can be
matched by a transition in the compressed symbolic semantics.
\end{itemize}

Finally, relying on these two results, we can establish  
that symbolic trace equivalence ($\approx_s$) exactly captures
compressed trace equivalence ($\approx_c$). 

\begin{restatable}{theorem}{thrmsintcssymv}
 \label{thrm:sintc-ssymc}
For any extended simple processes $A$ and $B$, we have that:\\[1mm]
\null\hfill
 $A \inceintc B
\iff A \incesym B$.
\hfill\null
\end{restatable}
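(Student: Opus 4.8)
The plan is to prove Theorem~\ref{thrm:sintc-ssymc} by establishing the two directions of the biconditional through a bridge between the concrete compressed semantics and the symbolic one, exactly as announced in Section~\ref{subsec:symbolic-soundness-completeness}. The two announced results — \emph{Soundness} (every symbolic transition $\ssymc{}$ abstracts a set of concrete transitions $\sintc{}$) and \emph{Completeness} (every concrete transition is matched by a symbolic one) — are the workhorses. Concretely, soundness should say: if $\triple{\p}{\Phi}{\varnothing} \ssymc{\;\tr\;} \triple{\p'}{\Phi'}{\Set}$ and $\theta \in \Sol(\Phi';\Set)$, then $\proc{\p}{\Phi} \sintc{\;\tr\theta\;} \proc{\p'\lambda_\theta}{\Phi'\lambda_\theta}$, where $\tr\theta$ is the concrete trace obtained by instantiating each second-order variable $X$ in $\tr$ by its recipe $X\theta$. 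Completeness is the converse: every concrete compressed run $\proc{\p}{\Phi} \sintc{} \proc{\p'}{\Phi''}$ arises as an instance of some symbolic run sharing the same ``shape'' of trace. I would state both as auxiliary lemmas (or cite them as the standard development of~\cite{baudet-ccs2005,CCD-tcs13}) and then derive the theorem from them.

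For the $\Rightarrow$ direction, I assume $A \inceintc B$ and aim to show $A \incesym B$. I take an arbitrary symbolic run $\triple{\p}{\Phi}{\varnothing} \ssymc{\;\tr\;} \triple{\p'}{\Phi'}{\Set_A}$ and an arbitrary $\theta \in \Sol(\Phi';\Set_A)$. By \emph{Soundness}, this yields a concrete run $A \sintc{\;\tr\theta\;} \proc{\p'\lambda^A_\theta}{\Phi\lambda^A_\theta}$. Since $A \inceintc B$, the process $B$ can match this concrete trace: $B \sintc{\;\tr\theta\;} \proc{\q'}{\Psi_c}$ with $\Phi\lambda^A_\theta \statequiv \Psi_c$. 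Now I invoke \emph{Completeness} on this concrete $B$-run to recover a symbolic run $\triple{\q}{\Psi}{\varnothing} \ssymc{\;\tr\;} \triple{\q'}{\Psi'}{\Set_B}$ of which $\tr\theta$ is an instance, and I must argue the very same $\theta$ lies in $\Sol(\Psi';\Set_B)$ with $\Psi'\lambda^B_\theta = \Psi_c$. This matching of the \emph{same} symbolic trace $\tr$ on both sides is where the restriction to simple processes pays off: the channel on which each action occurs is fixed, so the symbolic trace extracted from the $B$-run is syntactically $\tr$ (up to the shared second-order variable names), not merely $=_{\Ind_a}$-equivalent to it. The static-equivalence requirement $\Phi\lambda^A_\theta \statequiv \Psi'\lambda^B_\theta$ then transfers directly from the concrete static equivalence. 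The $\Leftarrow$ direction is symmetric: from $A \incesym B$ and an arbitrary concrete run $A \sintc{\;\tr_c\;} \cdot$, I use \emph{Completeness} to lift it to a symbolic run with solution $\theta$, apply the symbolic inclusion to get a matching symbolic $B$-run with the same $\theta$, and use \emph{Soundness} to push it back down to a concrete $B$-run realizing $\tr_c$, concluding the concrete static equivalence from the symbolic one.

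I expect the main obstacle to be the bookkeeping that makes ``the same $\theta$'' and ``the same symbolic trace $\tr$'' meaningful on both sides simultaneously. In the symbolic semantics each \textsc{In} step introduces fresh second-order and first-order variables, so a priori the variables introduced along the $A$-run and along the $B$-run are unrelated. The cleanest fix is to fix a canonical naming convention driven by the position in the trace (the $k$-th input introduces $X_k$, etc.), guaranteed to be well-defined because simple processes force a unique channel and hence a unique action type at each position; then $\fvs(\Set_A) = \fvs(\Set_B) = \dom(\theta)$ automatically, and a single $\theta$ is simultaneously testable against both constraint systems. I would isolate this as a determinism/alignment remark about $\ssymc{}$ on simple processes before the main argument. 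A secondary subtlety is the role of improper blocks, already handled at the compressed level by Theorem~\ref{theo:comp-soundness-completeness}; since the symbolic semantics is derived by ``the same pattern'' and improper termination is again confined to the end of a trace, no new permutation argument is needed here — the heavy $=_{\Ind_a}$ reorganization was discharged in Section~\ref{sec:compression} and does not resurface. Finally, I would double-check that static equivalence is preserved under the translation between concrete frames $\Phi\lambda_\theta$ and the frames appearing in Definition~\ref{def:equiv-comp-symb}, which is immediate once $\Phi'\lambda^A_\theta$ is identified with the concrete frame produced by Soundness.
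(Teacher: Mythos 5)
Your proposal is correct and takes essentially the same route as the paper: its proof of Theorem~\ref{thrm:sintc-ssymc} likewise derives the two implications from exactly the soundness and completeness statements you formulate (Propositions~\ref{pro:symb-soundness} and~\ref{pro:symb-completeness} in the appendix), using them to move between symbolic compressed runs with solutions and concrete compressed runs in both directions. The variable-alignment bookkeeping you isolate (canonical fresh-variable naming so that the same $\tr$ and $\theta$ make sense on both sides) is left implicit in the paper's one-line proof, but the very same convention is invoked explicitly in its proof of Lemma~\ref{lem:diff-swap}, so your treatment matches the paper's intent.
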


\section{Reduction using dependency constraints}
\label{sec:diff}

Unlike compression, 
which is based only on the input/output nature of actions, our second
optimization takes
into account the exchanged messages.

Let us first illustrate one simple instance
of our optimization and how dependency constraints~\cite{ModersheimVB10}
may be used to incorporate it in symbolic semantics.
Let $P_i = \In(c_i,x_i).\Out(c_i,u_i).P'_i$ with $i \in \{1,2\}$,
and $\Phi_0 = \{w_0 \refer n\}$ be a closed frame.
We consider the simple process $A = \proc{\{P_1,P_2\}}{\Phi_0}$, and
the  two symbolic interleavings depicted below.

\begin{wrapfigure}[9]l{6.0cm}
\small
\vspace{-0.3cm}
\begin{tikzpicture}[inner sep=0pt,node distance=0.7cm]


\node (sommet) {$\bullet$};
\node[below of=sommet, left=1.2cm of sommet] (l1) {$\bullet$};
\node[below of=l1] (l2) {$\bullet$};
\node[below of=l2] (l3) {$\bullet$};
\node[below of=l3] (l4) {$\bullet$};
\node[below of=sommet, right=1.2cm of sommet] (r1) {$\bullet$};
\node[below of=r1] (r2) {$\bullet$};
\node[below of=r2] (r3) {$\bullet$};
\node[below of=r3] (r4) {$\bullet$};


\path (sommet) edge  node[left,pos=0.3]{$\In(c_1,X_1)\;\;\;\;$} (l1);
\path (l1) edge node[left]{$\Out(c_1,w_1)\;$} (l2);
\path (l2) edge node[left]{$\In(c_2,X_2)\;$} (l3);
\path (l3) edge node[left]{$\Out(c_2,w_2)\;$} (l4);
\path (sommet) edge  node[right,pos=0.3]{$\;\;\;\;\In(c_2,X_2)$} (r1);
\path (r1) edge node[right]{$\;\Out(c_2,w_2)$} (r2);
\path (r2) edge node[right]{$\;\In(c_1,X_1)$} (r3);
\path (r3) edge node[right]{$\;\Out(c_1,w_1)$} (r4);


\coordinate (l4t) at ($ (l4) - (90:0.1cm) $);
\coordinate (l4l) at ($ (l4t) - (60:0.5cm) $);
\coordinate (l4r) at ($ (l4t) - (120:0.5cm) $);
\filldraw[color=gray!20] (l4t) -- (l4l) -- (l4r) -- (l4t);
\coordinate (r4t) at ($ (r4) - (90:0.1cm) $);
\coordinate (r4l) at ($ (r4t) - (60:0.5cm) $);
\coordinate (r4r) at ($ (r4t) - (120:0.5cm) $);
\filldraw[color=gray!20] (r4t) -- (r4l) -- (r4r) -- (r4t);


\node[right=1.0cm of l4] (cl) {} ;
\node[left=1.0cm of r4] (cr) {} ;
\filldraw[pattern=north east lines] (cl) circle (0.5);
\filldraw[pattern=north west lines] (cr) circle (0.5);
\draw[dashed] (l4) -- ($ (cl) + (120:0.5cm) $);
\draw[dashed] (l4) -- ($ (cl) + (-120:0.5cm) $);
\draw[dashed] (r4) -- ($ (cr) - (120:0.5cm) $);
\draw[dashed] (r4) -- ($ (cr) - (-120:0.5cm) $);

\end{tikzpicture}
\end{wrapfigure}
\noindent
The two resulting symbolic processes are of the form
$\triple{\{P'_1,P'_2\}}{\Phi}{\Set_i}$ where: 
\begin{itemize}
\item 
$\Phi = \Phi_0 \uplus \{w_1 \refer u_1, w_2 \refer u_2\}$, 
\item
$\Set_1 = \big\{\dedi{w_0}{X_1}{x_1}; \; \dedi{w_0,
  w_1}{X_2}{x_2}\big\}$, 
\item 
$\Set_2 = \big\{\dedi{w_0}{X_2}{x_2};\; \dedi{w_0,
  w_2}{X_1}{x_1}\big\}$. 
\end{itemize}

The sets of concrete processes that these two symbolic processes
represent are different, which means that we cannot discard any of
those interleavings. However, these sets have a significant overlap
corresponding to concrete instances of the interleaved blocks that
are actually independent, \ie where the output of one block is not
necessary to obtain the input of the next block.
In order to avoid considering such concrete processes twice,
we may add a \emph{dependency constraint} $\constrd{X_1}{w_2}$ in $\C_2$,
whose purpose is to discard all solutions $\theta$
such that the message $x_1\lambda_\theta$ can be derived without using
$w_2 \refer u_2\lambda_\theta$.
For instance, the concrete trace 
$\In(c_2, w_0)\cdot \Out(c_2,w_2)\cdot \In(c_1,w_0)\cdot \Out(c_1,w_1)$
would be discarded thanks to this new constraint.

The idea of \cite{ModersheimVB10} is to accumulate dependency constraints
generated whenever such a pattern is detected in an execution, and use
an adapted constraint resolution procedure to narrow and eventually
discard the constrained symbolic states.
We seek to exploit similar ideas for optimizing the verification of
trace equivalence rather than reachability.
This requires extra care, since pruning traces as
described above may break completeness when considering trace equivalence. As 
before, the key to obtain a valid optimization will be to discard traces in a 
similar way on the two processes being compared.
In addition to handling this necessary subtlety, we also propose a new proof
technique for justifying dependency constraints. The generality of that 
technique allows us to add more dependency constraints, taking into account
more patterns than the simple diamond shape from the previous example.

There are at least two natural semantics for dependency constraints.
The simplest semantics focuses on the second-order notion of recipe. In the
above example, it would require that recipe $X_1\theta$ contains the variable
$w_2$. That is weaker than a first-order semantics requiring that \emph{any} 
recipe derivinig $x_1\lambda_\theta$ would involve $w_2$ since spurious
dependencies may easily be introduced.
%
Our ultimate goal in this section is to show that trace equivalence w.r.t.
the first-order reduced semantics coincides with the regular symbolic 
semantics. However, we first establish this result for the second-order
semantics, which is more easily analysed and provides a useful stepping
stone.

\subsection{Second-order reduced semantics}
\label{subsec:reduced-semantics}

We start by introducing \emph{dependency constraints}, in a more general form
than the one used above, and give them a second-order semantics.

\begin{definition}[dependency constraint]
  A \emph{dependency
  constraint} is a constraint of the form $\constrd{\vect X}{\vect w}$
  where $\vect X$ is a vector of variables in $\X^2$, 
  and $\vect w$ is a vector of handles, i.e. variables in $\W$.

Given a substitution $\theta$ with $\dom(\theta) \subseteq \X^2$, and
$X\theta \in \T(\W)$ for any $X \in \dom(\theta)$. 
We say that $\theta$ satisfies
$\constrd{\vect X}{\vect w}$, denoted $\theta \models  \constrd{\vect
  X}{\vect w}$, if  either $\vect w = \varnothing$ or there exist
  $X_i\in\vect{X}$ and $w_j\in\vect w$ such that
  $w_j\in\fvp(X_i\theta)$.
\end{definition}

A constraint system with dependency constraints is called a \emph{dependency
constraint system}.
We denote by $\noconstrd{\C}$ the regular constraint system 
obtained by removing all dependency constraints from $\C$.
We only consider \emph{well-formed} dependency constraint systems, that is
those $\C$ such that $\noconstrd{\C}$ is well-formed.
A solution of~$\C$
is a substitution~$\theta$ such that $\theta \in \Sol(\noconstrd{\C})$ and $\theta
\models  \constrd{\vect X}{\vect w}$ for each dependency constraint $\constrd{\vect X}{\vect w} \in \C$.
We denote this set $\Sol^2(\C)$. 





We shall now define how dependency constraints will be added to
our constraint systems. For this, we fix an arbitrary
total order $\och$ on channels. Intuitively, this order expresses
which executions should be favored, and which should be allowed
only under dependency constraints.
To simplify the presentation, we use the notation $\io{c}{X}{w}$ as a
shortcut for $\In(c,X_1)\cdot \ldots \cdot \In(c,X_\ell). \Out(c,w_1)
\cdot \ldots \cdot \Out(c,w_k)$ assuming that $\vect{X} =
(X_1,\ldots,X_\ell)$ and $\vect{w} = (w_1,\ldots,w_k)$. Note that
$\vect{X}$ and/or $\vect{w}$ may be empty.

\begin{definition}[generation of dependency constraints]
Let $c$ be a channel, and
 $\tr = \io{c_1}{X_1}{w_1}\cdot\ldots\cdot\io{c_{n}}{X_{n}}{w_{n}}$
 be a trace.
 If there exists a rank $k\leq n$ such that
 $c_i \och c \och c_k$ for all $k < i \leq n$, then we define
 \[
 \Dep{\tr}{c} =\{\; w ~|~ w \in \vect{w_i} \mbox{ with $k \leq i
     \leq n$}\}
 \]
 Otherwise, we have that $\Dep{\tr}{c} = \varnothing$.
\end{definition}

We obtain our reduced semantics by integrating those dependency constraints
into the symbolic compressed semantics. We define $\ssymd{}$ as the least
reflexive relation satisying the following rule:
\[
\begin{array}{c}
\infer[]
{{\triple{\p}{\Phi}{\varnothing} \ssymd{\tr\cdot\io{c}{X}{w}}
    \triple{\p''}{\Phi''}{\Set'' \cup \{\constrd{\vect X}{\Dep{\tr}{c}}\}}}}
{\triple{\p}{\Phi}{\varnothing} \ssymd{\;\tr\;}
  \triple{\p'}{\Phi'}{\Set'} & \triple{\p'}{\Phi'}{\Set'}
  \ssymc{\io{c}{X}{w}} \triple{\p''}{\Phi''}{\Set''}}
\end{array}
\]


\medskip{}


Given a proper trace, we define $\AllDep{\tr}$ to be the accumulation of the
generated constraints as defined above for all prefixes of $\tr$
(where each proper block is considered as an atomic action).
We may observe that:
\begin{itemize}
\item  if $A \ssymd{\;\tr\;} \triple{\p}{\Phi}{\Set}$ then
$\Set = \noconstrd{\Set} \cup \AllDep{\tr}$ and
$A \ssymc{\;\tr\;}\triple{\p}{\Phi}{\noconstrd{\Set}}$;
\item  if $A \ssymc{\;\tr\;} \triple{\p}{\Phi}{\Set}$ then 
$\Set = \noconstrd{\Set'}$ and $A \ssymd{\;\tr\;} \triple{\p}{\Phi}{\Set'}$.
\end{itemize}

\begin{example}
\label{ex:taille-trois}
Let $a$, $b$, and $c$ be channels in $\C$ such that $a \och b \och c$.
The dependency constraints generated during the symbolic execution of a simple process of the form
$\proc{\{\In(a,x_a).\Out(a,u_a), \,\In(b,x_b).\Out(b,u_b),  \,\In(c,x_c).\Out(c,u_c)\}}{\Phi}$
are depicted below.
\begin{center}
  \begin{tikzpicture}[inner sep=0pt]
\draw (4,3) node 		(racine) {$\bullet$};

\draw (1,2) node 		(a) {$\bullet$};
\draw (4,2) node 		(b) {$\bullet$};
\draw (7,2) node 		(c) {$\bullet$};

\draw (0,1) node 		(ab) {$\bullet$};
\draw (2,1) node 		(ac) {$\bullet$};
\draw (3,1) node 		(ba) {$\bullet$};
\draw (5,1) node 		(bc) {$\bullet$};
\draw (6,1) node 		(ca) {$\bullet$};
\draw (8,1) node 		(cb) {$\bullet$};

\draw (0,0) node 		(abc) {$\bullet$};
\draw (2,0) node 		(acb) {$\bullet$};
\draw (3,0) node 		(bac) {$\bullet$};
\draw (5,0) node 		(bca) {$\bullet$};
\draw (6,0) node 		(cab) {$\bullet$};
\draw (8,0) node 		(cba) {$\bullet$};

\path (racine) edge  node[left]{$\mathtt{io}_{a}\;\;\;\;\;$} (a);
\path (racine) edge  node[left]{$\mathtt{io}_{b}\;$} (b);
\path (racine) edge  node[right]{$\;\;\;\;\mathtt{io}_{c}$} (c);

\path (a) edge node[left]{$\mathtt{io}_{b}\;\;$} (ab);
\path (a) edge node[right]{$\;\;\mathtt{io}_{c}$} (ac);
\path (b) edge node[left]{$\mathtt{io}_{a}\;\;$} (ba);
\path (b) edge node[right]{$\;\;\mathtt{io}_{c}$} (bc);
\path (c) edge node[left,above]{$\mathtt{io}_{a}\;\;\;\;\;\;$} (ca);
\path (c) edge node[left,below]{$\mathtt{io}_{b}\;\;\,$} (cb);

\path (ab) edge node[left]{$\mathtt{io}_{c}\;$} (abc);
\path (ac) edge node[left]{$\mathtt{io}_{b}\;$} (acb);
\path (ba) edge node[left]{$\mathtt{io}_{c}\;$} (bac);
\path (bc) edge node[left]{$\mathtt{io}_{a}\;$} (bca);
\path (ca) edge node[left,pos=0.6]{$\mathtt{io}_{b}\;$} (cab);
\path (cb) edge node[left]{$\mathtt{io}_{a}\;$} (cba);

\draw[->,>=latex,color=blue, line width=1pt] (acb) to[bend right] (ac);
\draw[->,>=latex,color=blue, line width=1pt] (ba) to[bend right] (b);
\draw[->,>=latex,color=blue, line width=1pt] (bca) to[bend right] (bc);
\draw[->,>=latex,color=blue, line width=1pt] (ca) to[out=20,in=250] (c);
\draw[->,>=latex,color=blue, line width=1pt] (cb) to[bend right] (c);
\draw[->,>=latex,color=blue, line width=1pt] (cba) to[bend right] (cb);

\draw[-latex,color=red,dashed,line width=1pt] (cab) .. controls +(20:8mm) and +(290:4mm) .. (c);
\draw[-latex,color=red,dashed,line width=1pt] (cab) .. controls +(20:4mm) .. (ca);
\end{tikzpicture}
\end{center}

We use $\mathsf{io}_i$ as a shortcut for $\In(i,X_i)\cdot\Out(i,w_i)$ and
we represent  dependency constraints using arrows.
For instance, on the trace $\mathsf{io}_a\cdot\mathsf{io}_c\cdot\mathsf{io}_b$, a dependency constraint
of the form $\constrd{X_b}{w_c}$ (represented by the left-most arrow)
is generated. 
    Now, on the trace
    $\mathsf{io}_c\cdot\mathsf{io}_a\cdot\mathsf{io}_b$
    we add $\constrd{X_a}{w_c}$ after the second transition,
    and $\constrd{X_b}{\{w_c,w_a\}}$ (represented by the 
    dashed $2$-arrow) after the third transition. Intuitively,
    the latter constraint expresses that
    $\mathsf{io}_b$ is only allowed to come after $\mathsf{io}_c$ if it 
    depends on it, possibly indirectly through $\mathsf{io}_a$.
\end{example}

This reduced semantics gives rise to a notion of trace equivalence.
It is defined as in
Definition~\ref{def:equiv-comp-symb}, relying on $\ssymd{}$ instead of
$\ssymc{}$ and on $\Sol^2$ instead of $\Sol$.
We denote it $\esymd$, and the associated notion of
inclusion is denoted $\incesymd$

\subsection{Soundness and completeness}

In order to establish that $\esym$ and $\esymd$ coincide, we are going to
study more carefully concrete traces made of (not necessarily proper)
blocks. We denote by~$\Bio$ the set of blocks $\io{c}{M}{w}$ such that
$c\in\mathcal{C}$, $M_i\in\T(\W)$ for each $M_i\in\vect M$, and $w_j
\in \W$ for each $w_j \in \vect{w}$.
In this section, a concrete trace is necessarily made of blocks,
\ie it belongs to $\Bio^*$.
Note that all traces from executions in the compressed semantics
are concrete traces in this sense.
We show that we can view $\Bio^*$ as a partially commutative monoid
in a meaningful way.
This allows us to lift a classic result in which we ground our reduced
semantics.

  We lift the ordering on channels to blocks:
  $\io{c}{M}{w} \och \io{c'}{M'}{w'}$ if, and only if, $c \och c'$.
  Finally, we define $\och$ on concrete traces as the lexicographic
  extension of the order on blocks. We define similarly $\och$ on
  symbolic traces.

\paragraph*{Partially commutative monoid.}
  We define an \emph{independence relation} ${\I_b}$ over $\Bio$:
  we say that $\io{c}{M}{w} \;\I_b\; \io{c'}{M'}{w'}$ when
   $c \neq c'$,  none of the variables of $\vect w'$ occurs in
  $\vect M$, and none of the variables of $\vect w$ occurs in $\vect M'$.
  Then we define ${=}_{\I_b}$ as the least congruence satisfying \\[1mm]
  \null\hfill $
\io{c}{M}{w}\cdot \io{c'}{M'}{w'} =_{\I_b} \io{c'}{M'}{w'}\cdot
\io{c}{M}{w}
  $\hfill\null\smallskip \\ \noindent
for all
  $\io{c}{M}{w}$ and $\io{c'}{M'}{w'}$ with $\io{c}{M}{w} \;\I_b\; \io{c'}{M'}{w'}$.
  The set of concrete traces, quotiented by this
  equivalence relation, is the \emph{partially commutative monoid}
  obtained from $\I_b$.
%
  Given a concrete trace $\tr$, we denote 
by $\mini(\tr)$ the minimum
  for $\och$ among all the traces that are equal to $\tr$ modulo $=_{\I_b}$.
\smallskip{}

  First, we prove that the symbolic semantics is equally able
    to execute equivalent (w.r.t. $=_{\I_b}$) traces.
    Second we prove that the reduced semantics generates
    dependency constraints that are (only) satisfied
    by minimal traces.

\begin{restatable}{lemma}{lemmadiffswap}
  \label{lem:diff-swap}
  Let $\triple{\p_0}{\Phi_0}{\varnothing} \ssymc{\tr}
  \triple{\p}{\Phi}{\Set}$ with $\tr$ made of proper blocks,
  and $\theta\in\Sol\cs{\Phi}{\Set}$.  For any concrete trace
  $\tr_c =_{\I_b} \tr\theta$ there exists a symbolic trace $\tr'$
  such that $\tr_c = \tr'\theta$,
  $\triple{\p_0}{\Phi_0}{\varnothing} \ssymc{\tr'}
  \triple{\p}{\Phi}{\Set'}$ and $\theta\in\Sol(\Phi;\Set')$.
\end{restatable}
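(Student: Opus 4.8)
The plan is to prove Lemma~\ref{lem:diff-swap} by reducing the general case (where $\tr_c =_{\I_b} \tr\theta$ differs from $\tr\theta$ by an arbitrary permutation allowed by the independence relation) to the case of a single adjacent swap of two independent blocks, and then iterating. Since $=_{\I_b}$ is defined as the least congruence generated by commuting adjacent independent blocks, any $\tr_c =_{\I_b} \tr\theta$ is reachable from $\tr\theta$ by a finite sequence of such elementary swaps. So I would first prove the statement for one swap and then argue by induction on the length of that derivation, re-applying the single-swap result at each step while carrying along the invariant that $\theta$ remains a solution of the running constraint system.

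For the single-swap base case, suppose $\tr = \tr_1 \cdot b \cdot b' \cdot \tr_2$ where $b = \io{c}{X}{w}$ and $b' = \io{c'}{X'}{w'}$ are two consecutive symbolic blocks, and the concrete instances $b\theta \;\I_b\; b'\theta$ are independent, so $\tr_c$ is obtained by swapping them. The key point is that independence of the concrete blocks means $c \neq c'$, no handle of $\vect{w'}$ occurs in $\vect{X}\theta$, and no handle of $\vect{w}$ occurs in $\vect{X'}\theta$. First I would observe that because the two blocks are on distinct channels $c \neq c'$, in the simple-process setting they belong to two different basic processes, so they act on disjoint parts of the multiset $\p$ and can be scheduled in either order. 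I would then build the alternative symbolic execution $\triple{\p_0}{\Phi_0}{\varnothing} \ssymc{\tr_1 \cdot b' \cdot b \cdot \tr_2} \triple{\p}{\Phi}{\Set'}$, where the new constraint system $\Set'$ differs from $\Set$ only in the dependency-sets attached to the deduction constraints: after the swap, the block that now comes second may see one more or one fewer prior output in its domain $D$. The independence hypothesis is exactly what guarantees the recipe $\theta$ still type-checks, i.e. $X\theta \in \T(D_{\Set'}(X))$ and likewise for $X'$: the handles from $\vect{w'}$ that are removed from the domain available to $X$ do not occur in $X\theta$, and symmetrically for $X'$. Hence $\theta$ remains a solution, and since the underlying first-order constraints ($\eqi$, $\neqi$) and the frame $\Phi$ are unchanged by a mere reordering of independent blocks, the associated $\lambda_\theta$ and the final frame are identical. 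I also need to define the swapped symbolic trace $\tr'$ by literally transposing the two symbolic blocks, so that $\tr'\theta = \tr_c$ holds on the nose.

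For the inductive step, I would take a derivation $\tr\theta = s_0 =_{\I_b} s_1 =_{\I_b} \cdots =_{\I_b} s_n = \tr_c$ where each $s_{i+1}$ is obtained from $s_i$ by one adjacent swap of independent concrete blocks. I apply the base case repeatedly: at stage $i$ I have a symbolic trace $\tr^{(i)}$ with $\tr^{(i)}\theta = s_i$, a symbolic execution reaching $\triple{\p}{\Phi}{\Set^{(i)}}$, and $\theta \in \Sol\cs{\Phi}{\Set^{(i)}}$; the single-swap lemma produces $\tr^{(i+1)}$, the corresponding execution, and preserves membership of $\theta$. The final $\tr' = \tr^{(n)}$ and $\Set' = \Set^{(n)}$ witness the claim. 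One subtlety I expect to handle carefully is that an elementary concrete swap is only licensed when the two concrete blocks are \emph{independent} ($\I_b$ on $\Bio$), so before invoking the base case I must check that the concrete blocks being transposed at step $i$ are instances of adjacent symbolic blocks and are indeed independent in the concrete sense; this should follow from how $=_{\I_b}$ is generated, but it requires lining up the symbolic block boundaries with the concrete ones.

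The main obstacle, as I see it, is the bookkeeping around the deduction-constraint domains $D$ when blocks are swapped. The frame and first-order (dis)equality constraints are genuinely invariant under reordering, so the real content is verifying that the second-order solution $\theta$ still lies in $\Sol\cs{\Phi}{\Set'}$ after the domains shrink or grow. This is precisely where the independence condition ``none of the variables of $\vect{w'}$ occurs in $\vect{M}$'' (here with $\vect{M} = \vect{X}\theta$) is indispensable: it ensures that narrowing a recipe's available handle set does not invalidate it. Getting the direction of this inclusion right for both of the swapped blocks simultaneously, and confirming that no other deduction constraint later in $\tr_2$ is affected (it is not, since those blocks' domains already contain both $\vect w$ and $\vect{w'}$ regardless of their relative order), is the delicate part; everything else is a routine congruence/induction argument. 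Note that the statement deliberately speaks of $\Sol$ (not $\Sol^2$) and of the \emph{compressed} symbolic semantics $\ssymc{}$, so I need not reason about dependency constraints at all here — this lemma is the pure ``swap'' ingredient that will later be combined with the minimality analysis of generated dependency constraints.
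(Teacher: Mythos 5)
Your proof is correct, but it follows a genuinely different route from the paper's. The paper never manipulates swaps at the symbolic level: it pushes the symbolic execution down to the concrete compressed semantics via soundness (Proposition~\ref{pro:symb-soundness}), observes that on traces made of proper blocks the relation $=_{\I_b}$ coincides with the action-level relation $=_{\I_a}$, transfers to the uncompressed semantics (Proposition~\ref{pro:comp-soundness}), performs the permutation there using Lemma~\ref{lem:permute}, and then climbs back up through Proposition~\ref{pro:comp-completeness} and symbolic completeness (Proposition~\ref{pro:symb-completeness}); a final matching step argues that the execution returned by completeness has the same structure as the original one (same \textsc{Then}/\textsc{Else} choices, same fresh variables), so that it ends in the same $\p$ and $\Phi$ with $\theta'=\theta$. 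You instead stay entirely at the symbolic level: you decompose $=_{\I_b}$ into elementary adjacent transpositions and check by hand that one transposition preserves executability and the solution, the only real content being that the deduction-constraint domain of the block moved earlier shrinks by $\vect{w}$ --- harmless exactly because of the independence condition --- while the other domain only grows, and all other constraints are untouched. What the paper's detour buys is reuse: the swap property is proved once, concretely, and never re-proved symbolically; its cost is the final matching argument, which is somewhat delicate since it requires inspecting the proof of Proposition~\ref{pro:symb-completeness}. Your direct argument avoids that inspection and makes the role of independence completely transparent, at the cost of redoing the scheduling argument symbolically; for it to be airtight you should state explicitly the fact you implicitly rely on, namely that a symbolic block execution of a basic process is insensitive to the current frame except for the domains recorded in its deduction constraints (this is what licenses re-scheduling the two blocks and guarantees that the suffix $\tr_2$ contributes identical constraints). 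Finally, a small indexing slip: when $b\cdot b'$ becomes $b'\cdot b$, it is the domain of $X'$ (the block moved earlier) that loses $\vect{w}$, while the domain of $X$ gains $\vect{w'}$; your text states it the other way around, though the symmetric phrasing makes the intended argument clear.
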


\begin{restatable}{lemma}{lemmadiffmin}
  \label{lem:diff-min}
  Let $A \ssymc{\tr} \triple{\p}{\Phi}{\Set}$
  and $\theta\in\Sol\cs{\Phi}{\Set}$.
  We have that $\theta \models \AllDep{\tr}$ if, and only if,
  $\tr\theta = \mini(\tr\theta)$.
\end{restatable}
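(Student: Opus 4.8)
The plan is to reduce the statement to a classical characterisation of lexicographic normal forms in the partially commutative monoid $(\Bio^*,=_{\I_b})$ defined just above. Write the concrete trace as $\tr\theta = b_1\cdots b_n$ with $b_j = \io{c_j}{M_j}{w_j}$ and $\vect{M_j}=\vect{X_j}\theta$, and write $\tr_{<j}$ for the prefix $b_1\cdots b_{j-1}$; since channels and output handles are untouched by $\theta$, each set $\Dep{\tr_{<j}}{c}$ may be read off indifferently from the symbolic or the concrete trace. The classical fact I will invoke is that $\tr\theta = \mini(\tr\theta)$ if, and only if, there is no pair $i<j$ with $c_j \och c_i$ and $b_j\;\I_b\;b_m$ for all $m$ with $i\le m<j$: whenever such a \emph{bad pair} exists one commutes $b_j$ leftward past $b_{j-1},\dots,b_i$, obtaining an $=_{\I_b}$-equivalent trace that agrees with $\tr\theta$ on positions $1,\dots,i-1$ but is strictly smaller at position~$i$ (as $c_j\och c_i$); the converse is the standard normal-form argument for trace monoids. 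The bridge between this combinatorial condition and dependency satisfaction is the observation that, in any symbolic execution and for $m<j$, one has $\neg(b_j\;\I_b\;b_m)$ exactly when $c_m=c_j$ or some handle of $\vect{w_m}$ occurs in $\vect{M_j}$; the symmetric clause of $\I_b$ is automatic, since the handles $\vect{w_j}$ are freshly introduced at block~$j$ and hence cannot occur in the recipes of an earlier block.

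For the implication $\theta\models\AllDep{\tr}\Rightarrow \tr\theta=\mini(\tr\theta)$, I would argue by contradiction. A non-minimal $\tr\theta$ yields, via the characterisation, a bad pair $i<j$; set $c=c_j$ and let $k$ be the largest index in $[i,j)$ with $c\och c_k$, which exists because $c\och c_i$. By maximality $c_m\not\succ c$ for $k<m<j$, and $c_m\neq c$ there because $b_j\;\I_b\;b_m$ forces distinct channels, so $c_m\och c$; hence $k$ witnesses the rank condition and $\Dep{\tr_{<j}}{c}\neq\varnothing$ (it collects the outputs of $b_k,\dots,b_{j-1}$, which are non-empty as the blocks are proper). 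Since $\theta$ satisfies the constraint $\constrd{\vect{X_j}}{\Dep{\tr_{<j}}{c}}$ generated when appending $b_j$, some handle of some $b_p$ with $k\le p\le j-1$ occurs in $\vect{M_j}$; by the bridge observation this gives $\neg(b_j\;\I_b\;b_p)$ with $i\le k\le p<j$, contradicting the independence furnished by the bad pair.

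For the converse, assume $\tr\theta=\mini(\tr\theta)$ and consider an arbitrary constraint $\constrd{\vect{X_j}}{\Dep{\tr_{<j}}{c}}$ of $\AllDep{\tr}$, where $c=c_j$. If $\Dep{\tr_{<j}}{c}=\varnothing$ it holds by definition, so suppose the rank condition is met with witness $k<j$: $c\och c_k$ and $c_m\och c$ for $k<m<j$. Were no handle of $\Dep{\tr_{<j}}{c}$ to occur in $\vect{M_j}$, then no output of $b_k,\dots,b_{j-1}$ would appear in $b_j$'s recipes; combined with $c_m\neq c$ for all $k\le m<j$ and the automatic symmetric clause, the bridge observation yields $b_j\;\I_b\;b_m$ for every $k\le m<j$. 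Together with $c\och c_k$ this is precisely a bad pair $(k,j)$, contradicting minimality. Hence the constraint is satisfied, and as $j$ was arbitrary, $\theta\models\AllDep{\tr}$.

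The genuinely delicate point, and the step I expect to require the most care, is pinning down the trace-monoid characterisation and matching the witness rank $k$ of $\Dep$ with the position past which a non-minimal block can be commuted: this is where the interplay of the total channel order $\och$, the maximality choice of $k$, and the exclusion of same-channel blocks (which are always $\I_b$-dependent) must be managed precisely. Once the bridge observation that backward handle dependencies cannot arise in an execution is established, the remaining reasoning is routine bookkeeping.
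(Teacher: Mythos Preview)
Your proof is correct and follows essentially the same approach as the paper: both reduce the lemma to the Anisimov--Knuth characterisation of lexicographic normal forms in the trace monoid and match it against the definition of $\Dep$, using the observation that freshness of output handles rules out backward dependencies. The only cosmetic difference is that the paper first reformulates the characterisation (Proposition~\ref{prop:carac}) so that the rank condition already appears in the minimality criterion, whereas you start from the raw bad-pair formulation and recover the rank $k$ inline; the substance is identical.
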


\begin{proof}[Sketch]
    Let $A \ssymc{\tr} \triple{\p}{\Phi}{\Set}$
  and $\theta\in\Sol\cs{\Phi}{\Set}$.
  We need a characterization  of minimal traces. We exploit
  the following one, which is equivalent to the characterization of
  Anisimov and Knuth~\cite{knuth-sorting}:
\unskip{}
\begin{quote}
 The trace $t$ is minimal if, and only if,
    for all factors $aub$ of $t$ such that
    (1) $a,b\in \Bio$, $u\in \Bio^*$ and
    $d \och b \och a$ for all $d \in u$,
    we have (2) some $c\in a u$ such that
    $c \;\I_b\; b$ does not hold.
\end{quote}
We remark that condition (1) characterizes the factors of (symbolic) traces
for which we generate a dependency constraint. Here, that constraint would be 
\\[1mm]
\null \hfill $\constrd{\vect X_b}{\cup_{d\in au}{\vect w_d}}$
\hfill\null
\smallskip{}

\noindent where $\alpha \in \Bio$ is also written
$\mathsf{io}_{c_\alpha}(\vect X_\alpha,\vect w_\alpha)$ to have an
access to its components. 

Then we note that (2) corresponds to the satisfaction of that dependency
constraint in a concrete instance of the trace.
\qed
\end{proof}

Finally, relying on these results, we can establish  
that trace equivalence ($\approx_d$) w.r.t. the reduced semantics exactly captures
symbolic trace equivalence ($\approx_s$). 

\begin{restatable}{theorem}{theoesymesymd}
\label{theo:esymesymd}
For any extended simple processes $A$ and $B$, we have that:\\[1mm]
\null\hfill
$A \incesym B \iff A \incesymd B$.
\hfill\null
\end{restatable}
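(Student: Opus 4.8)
The plan is to prove the two inclusions $A \incesym B \Rightarrow A \incesymd B$ and its converse, reducing each to the established machinery of Lemmas~\ref{lem:diff-swap} and~\ref{lem:diff-min}. The central observation I would exploit is the two bullet points relating $\ssymc{}$ and $\ssymd{}$: every reduced execution is a compressed execution with the extra dependency constraints $\AllDep{\tr}$ appended, and $\Sol^2\cs{\Phi}{\Set\cup\AllDep{\tr}}$ selects exactly those $\theta\in\Sol\cs{\Phi}{\Set}$ with $\theta\models\AllDep{\tr}$. By Lemma~\ref{lem:diff-min} this is precisely the set of $\theta$ for which $\tr\theta$ is $\och$-minimal in its $=_{\I_b}$ class, i.e. $\tr\theta=\mini(\tr\theta)$. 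So passing from the symbolic to the reduced semantics amounts to restricting attention, on both sides of the equivalence, to minimal representatives of each commutation class of concrete traces.

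For the easy direction, I would first argue $A\incesym B \Rightarrow A\incesymd B$. Take a reduced transition $A\ssymd{\tr}\triple{\p'}{\Phi'}{\Set_A}$ and a solution $\theta\in\Sol^2$. By the first bullet we have the underlying compressed transition $A\ssymc{\tr}\triple{\p'}{\Phi'}{\noconstrd{\Set_A}}$ with $\theta\in\Sol\cs{\Phi'}{\noconstrd{\Set_A}}$ and $\theta\models\AllDep{\tr}$. Applying $A\incesym B$ gives a matching compressed transition $B\ssymc{\tr}\triple{\q'}{\Psi'}{\noconstrd{\Set_B}}$ with $\theta$ a solution and the frames statically equivalent. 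Since both executions use the \emph{same} symbolic trace $\tr$ on the same channel order $\och$, the generated dependency constraints coincide: $\AllDep{\tr}$ is determined by $\tr$ alone, so $\theta\models\AllDep{\tr}$ already holds, and the second bullet lets me promote the $B$-side compressed transition to a reduced one with the same constraints, yielding $\theta\in\Sol^2$ on the $B$ side. This establishes $A\incesymd B$.

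The harder direction is $A\incesymd B \Rightarrow A\incesym B$, where I must recover \emph{all} compressed behaviours of $A$ from only the minimal (reduced) ones. Here I would take a compressed transition $A\ssymc{\tr}\triple{\p'}{\Phi'}{\Set_A}$ with $\theta\in\Sol$; the instantiated concrete trace $\tr\theta$ may not be minimal. Using Lemma~\ref{lem:diff-swap}, I reorganize to a symbolic trace $\tr_m$ with $\tr_m\theta=\mini(\tr\theta)=_{\I_b}\tr\theta$, $A\ssymc{\tr_m}\triple{\p'}{\Phi'}{\Set_A'}$, and $\theta\in\Sol$. By Lemma~\ref{lem:diff-min}, $\theta\models\AllDep{\tr_m}$, so this is genuinely a reduced transition $A\ssymd{\tr_m}\cdots$ satisfiable by $\theta$. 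Now $A\incesymd B$ yields a matching reduced transition $B\ssymd{\tr_m}\triple{\q'}{\Psi'}{\Set_B'}$ with $\theta\in\Sol^2$ and $\Phi'\lambda^A_\theta\statequiv\Psi'\lambda^B_\theta$, hence in particular a compressed transition $B\ssymc{\tr_m}\triple{\q'}{\Psi'}{\noconstrd{\Set_B'}}$. Finally I apply Lemma~\ref{lem:diff-swap} once more, this time on the $B$ side and in the reverse direction, to transport this execution back along $\tr_m\theta=_{\I_b}\tr\theta$ to an execution labelled by the original $\tr$, obtaining $B\ssymc{\tr}\cdots$ with $\theta$ a solution and the same frame (up to the static equivalence already secured).

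The main obstacle I anticipate is the bookkeeping in this last direction: I must ensure that the \emph{same} second-order solution $\theta$ works throughout, that the first-order substitutions $\lambda^A_\theta,\lambda^B_\theta$ and hence the static equivalence are preserved under the $=_{\I_b}$ swaps, and that Lemma~\ref{lem:diff-swap} can indeed be applied symmetrically to reconstruct the $\tr$-labelled execution on $B$ from the minimal one. The delicate point is that minimality is a property of the concrete instance $\tr\theta$, not of the symbolic trace, so I must carefully track that reducing $A$ to its minimal representative and then expanding $B$ back out along the same commutation class are mutually inverse at the level of frames and solutions; establishing this compatibility is where the real work lies, whereas the constraint-side reasoning is handled cleanly by the two bullet-point observations together with Lemma~\ref{lem:diff-min}.
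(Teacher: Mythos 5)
Your two directions follow the paper's own proof essentially verbatim for the case where every block of the trace is proper: the ($\Rightarrow$) direction rests, as in the paper, on the observation that $\AllDep{\tr}$ is determined by $\tr$ alone, and the ($\Leftarrow$) direction uses minimization (Lemma~\ref{lem:diff-min}) and commutation (Lemma~\ref{lem:diff-swap}) exactly as the paper does. However, there is a genuine gap in your ($\Leftarrow$) direction: you ignore compressed executions whose last block is \emph{improper} (a block of inputs only, produced by the rule \textsc{Failure}). These are legitimate transitions quantified over in the definition of $\incesym$, so they must be matched on the $B$ side; yet Lemma~\ref{lem:diff-swap} is stated only for traces made of proper blocks, so your very first step --- reorganizing $\tr\theta$ into its minimal representative --- is not licensed for such traces. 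The obstacle is not a mere hypothesis check either: if $\tr = \tr_{\mathrm{io}}\cdot b$ with $b$ improper, the minimal trace $\mini(\tr\theta)$ may place $b$ strictly before other blocks, and such a trace is not executable \emph{at all} in the compressed (hence reduced) semantics, since an improper block kills the whole process and is therefore only allowed at the very end of a trace. So the scheme ``pass to the minimal representative, apply $\incesymd$, come back'' breaks down precisely where minimality and executability come apart.

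The paper resolves this with a separate case that occupies roughly half of the ($\Leftarrow$) proof: writing $\mini(\tr\theta)$ as $\tr_1\cdot b\cdot\tr_2$ where $\tr_2$ contains no block on $b$'s channel, it applies the minimization argument to the two traces that \emph{are} executable, namely $\tr_1\cdot\tr_2$ (all proper) and $\tr_1\cdot b$ (a prefix of a minimal trace being itself minimal), invokes the hypothesis $A\incesymd B$ on each of them separately, and then recombines the two resulting executions of $B$, using that $b$ contains no outputs and shares no channel with $\tr_2$ to commute $b$ back to the end, i.e. $\tr_1\cdot\tr_2\cdot b =_{\I_b} \tr$, before a final application of Lemma~\ref{lem:diff-swap}. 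Your proposal needs this (or an equivalent) argument to be complete; the proper-block case you do treat is correct and is the same route as the paper's.
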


\begin{proof}[Sketch]
  Implication $(\Rightarrow)$ is straightforward and only
relies on the fact that dependency constraints generated
by the reduced semantics only depend on the trace that is
executed.
The other direction $(\Leftarrow)$ is more interesting.
Here, we only outline the main idea, in the case of a trace made of proper blocks.
We show that a concrete trace $\tr\theta$ which is not captured when considering
$\ssymd{}$ (\ie a trace $\tr\theta$ that does not satisfy the generated dependency
constraints) can be mapped to another trace, namely $\mini{(\tr\theta)}$,
which manipulates the same
recipes/messages but where blocks are executed in a different order.
Lemma~\ref{lem:diff-swap} is used to obtain an execution of the minimal trace,
and
Lemma~\ref{lem:diff-min} ensures that dependency constraints
are satisfied in that execution. Thus the minimal trace can also be
executed by the other process. We go back to $\tr\theta$ using
Lemma~\ref{lem:diff-swap}.
\qed
\end{proof}

\subsection{First-order reduced semantics}

We finally introduce the stronger, first-order semantics for dependency
constraints, and we prove soundness and completeness for the corresponding
equivalence property by building on the previous theorem.


\begin{definition}
  Let $\C = \cs{\Phi}{\Set}$ be a dependency constraint system.
  We define $\Sol^1(\C)$ to be the set of substitutions
  $\theta\in\Sol(\noconstrd{\C})$ such that, for each
  $\constrd{\vect X}{\vect w}$ in $\C$ with non-empty $\vect w$
  there is some $X_i\in\vect{X}$ such that
  for all recipes $M\in\T(D_\C(X))$ satisfying
  $M(\Phi\lambda_\theta) {=_{\E}} (X\theta)(\Phi\lambda_\theta)$,
  we have $\fvp(M) \cap \vect w \neq \varnothing$.
\end{definition}

We define the notion of trace equivalence accordingly,
as it has been done at the end of Section~\ref{subsec:reduced-semantics},
relying on $\Sol^1$ instead of $\Sol^2$.
We denote it $\esymdp$, and the associated notion of
inclusion is denoted $\incesymdp$.



\begin{restatable}{theorem}{theodifffirst}
\label{theo:diff-first}
For any extended simple processes $A$ and $B$, we have that:\\[1mm]
\null\hfill$A \incesymd B \iff A \incesymdf B$.\hfill\null
\end{restatable}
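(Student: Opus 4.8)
The plan is to route both implications through the plain symbolic compressed inclusion $\incesym$, using Theorem~\ref{theo:esymesymd} (which identifies $\incesym$ with $\incesymd$) as a bridge, so that I only have to prove $A \incesym B \Rightarrow A \incesymdf B$ and $A \incesymdf B \Rightarrow A \incesym B$. Two preliminary observations drive the argument. First, $\Sol^1(\C) \subseteq \Sol^2(\C)$ for every dependency constraint system: taking $M = X_i\theta$ in the definition of $\Sol^1$ shows that each dependency constraint is already satisfied in the second-order sense. Second, the generated constraints $\AllDep{\tr}$ depend only on $\tr$, not on the executing process; hence for a fixed $\tr$ the systems reached from $A$ and from $B$ carry the \emph{same} dependency constraints, and $A \ssymd{\tr} \triple{\p}{\Phi}{\Set}$ holds exactly when $A \ssymc{\tr} \triple{\p}{\Phi}{\noconstrd{\Set}}$.

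For $A \incesym B \Rightarrow A \incesymdf B$, take $\tr$ and $\theta \in \Sol^1(\C_A)$ with $A \ssymd{\tr} \C_A$. Since $\Sol^1(\C_A) \subseteq \Sol(\noconstrd{\C_A})$ and $A \ssymc{\tr} \noconstrd{\C_A}$, the hypothesis yields $B \ssymc{\tr} \noconstrd{\C_B}$ with $\theta \in \Sol(\noconstrd{\C_B})$ and $\Phi_A\lambda^A_\theta \statequiv \Phi_B\lambda^B_\theta$; reattaching the identical dependency constraints gives $B \ssymd{\tr} \C_B$. It remains to upgrade $\theta$ to a first-order solution on the $B$ side, and this is where static equivalence is used: for a dependency constraint $\constrd{\vect X}{\vect w}$ witnessed on $A$ by some $X_i$, the same $X_i$ works on $B$, because the domains $D_{\C_A}(X_i)=D_{\C_B}(X_i)$ coincide (same trace) and, by $\Phi_A\lambda^A_\theta \statequiv \Phi_B\lambda^B_\theta$, a recipe $M$ satisfies $M(\Phi_B\lambda^B_\theta)=_\E(X_i\theta)(\Phi_B\lambda^B_\theta)$ iff it does so on $A$; the condition $\fvp(M)\cap\vect w\neq\varnothing$ therefore transfers verbatim, giving $\theta\in\Sol^1(\C_B)$.

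The implication $A \incesymdf B \Rightarrow A \incesym B$ is the heart of the proof. Let $\tr$, $\theta \in \Sol(\noconstrd{\C_A})$ be a plain execution of $A$, with concrete trace $t=\tr\theta$ and first-order content $\bar t = \tr\lambda^A_\theta$. When the recipe-minimal form of $t$ corresponds to a first-order solution, one applies $\incesymdf$ directly and reorders with Lemma~\ref{lem:diff-swap}. The delicate case is $\theta \in \Sol^2(\C_A)\setminus\Sol^1(\C_A)$ after minimisation: $t$ is recipe-minimal (so $\tr\theta=\mini(\tr\theta)$ by Lemma~\ref{lem:diff-min}) yet message-minimality fails, because some $\constrd{\vect X}{\vect w}$ is satisfied only through a recipe mentioning an output whose \emph{message} it does not require. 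To handle this I would take $s^\star$, the $\och$-minimal concrete trace among all realisations of $\bar t$ executable by $A$. By minimality $s^\star$ is recipe-minimal, and any residual spurious dependency could be eliminated by an equi-message recipe change followed by a further minimisation, contradicting $\och$-minimality; hence the symbolic witness of $s^\star$ admits a solution in $\Sol^1$. Since $s^\star$ realises $\bar t$, it produces the same frame as $t$, so applying $A \incesymdf B$ to $s^\star$ furnishes a matching execution of $s^\star$ by $B$ with a statically equivalent frame.

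The remaining \textbf{transport-back} step is where I expect to spend most of the effort. The traces $t$ and $s^\star$ share the first-order content $\bar t$ but differ both in block ordering and in the input recipes, and these two differences are entangled (moving a block earlier forces its recipe to drop the handles of $\vect w$), so one cannot separate them into a single $=_{\I_b}$ swap. Instead I would \emph{reconstruct} $B$'s execution of the original $(\tr,\theta)$ from $B$'s execution of $s^\star$: the recipes of $\tr$ are domain-valid in the order of $\tr$ (inherited from the $A$-side execution, the domain structure being process-independent); and since $t$'s and $s^\star$'s recipes are equi-message on $A$, static equivalence makes them equi-message on $B$, so along $\tr$ the process $B$ computes exactly the input values it computed for $s^\star$, passes the same conditionals, and hence produces the same output frame. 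This yields $B \ssymc{\tr}\noconstrd{\C_B}$ with $\theta\in\Sol(\noconstrd{\C_B})$ and $\Phi_A\lambda^A_\theta\statequiv\Phi_B\lambda^B_\theta$, i.e.\ a plain match of $(\tr,\theta)$, establishing $A \incesym B$. The subtle points to discharge carefully are that the incremental, order-dependent frames reached along $\tr$ do not invalidate the equi-message transfer (handled by evaluating recipes in the common final frame, relying on domain-validity), and that the reconstructed first-order solution on $B$ is indeed the one induced by $\theta$ rather than an accidental variant.
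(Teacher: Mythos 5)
Your first implication ($\incesym \Rightarrow \incesymdf$) is correct and is essentially the paper's own argument (first-order satisfaction of dependency constraints transfers across statically equivalent frames). For traces made exclusively of \emph{proper} blocks, your treatment of the converse is also workable, and it is a genuinely different route: where the paper runs an induction on $\och$ over symbolic traces, alternating one recipe change with one application of Lemma~\ref{lem:diff-swap}, you jump directly to the $\och$-minimal realisation $s^\star$ and transport back in one step via a per-channel reconstruction (which is not a lemma of the paper and would have to be proved, but the idea is sound: equi-message recipes, domain-validity inherited from $A$, and per-channel determinism of simple processes). The genuine gap is elsewhere: you never treat improper blocks, i.e.\ executions using the \textsc{Failure} rule, and there your central claim --- that the symbolic witness of $s^\star$ admits a solution in $\Sol^1$ --- is false. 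The compressed semantics allows an improper block only as the \emph{last} block of a trace. Hence the set of realisations of $\bar t$ executable by $A$, over which you minimise, keeps the improper block pinned at the end; the trace that would contradict minimality when a spurious dependency of that block is removed (the one where the improper block is commuted to its earlier, ``true'' position) is not compressed-executable, so it lies outside that set, and the minimality of $s^\star$ gives no information about the dependency constraint attached to the improper block.

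Concretely, take $c_2 \och c_1$, $\Phi_0 = \{w_0 \refer k\}$ with $k \in \N$, and $A = \proc{\{\In(c_1,x).\Out(c_1,k),\; \In(c_2,y)\}}{\Phi_0}$. Every compressed trace containing both blocks has the form $\tr = \io{c_1}{X_1}{w_1}\cdot\In(c_2,X_2)$, with the improper block last, and since $c_2 \och c_1$ the reduced semantics generates $\constrd{X_2}{w_1}$. Consider the concrete trace in which both inputs carry the message $k$. Any solution either uses $w_1$ in $X_2\theta$ --- then it lies in $\Sol^2$ but not in $\Sol^1$, because the recipe $w_0$ also derives $k$ and avoids $w_1$ --- or it avoids $w_1$ --- then it violates the constraint even in the second-order sense. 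So \emph{no} realisation of this first-order trace admits a first-order solution, whatever ordering or recipes you choose, and your hypothesis $A \incesymdf B$ can never be invoked to match it; yet $A \incesym B$ requires $B$ to match exactly this execution. The paper's proof circumvents this by cutting the trace at the improper block: it applies the (induction) hypothesis \emph{twice}, once to the proper part $\tr_1\cdot\tr_2$ and once to the prefix $\tr_1\cdot\io{c}{X}{\varnothing}$ that places the improper block at the position dictated by its genuine dependencies (where its dependency constraints are first-order satisfiable), and then recombines the two executions of $B$, exploiting the fact that an improper block has no outputs and therefore commutes with the proper blocks following it. Some analogue of this split-and-recombine step has to be added to your argument; a single global minimisation cannot produce it.
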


\begin{proof}[Sketch]
($\Rightarrow$) This implication is relatively easy to establish. It
actually relies on the fact that $\Sol^1(\C) \subseteq
\Sol^2(\C)$ for any dependency constraint system~$\C$. This allows us to use our hypothesis $A \incesymd B$. 
Then, in order to come back to our more constrainted first-order reduced semantics, 
we may notice that as soon as~$\theta$ is a solution of $\C$ and
$\C'$ (w.r.t. $\Sol^2$) with static equivalence of their associated
frames, we have that: $\theta \in \Sol^1(\C)$ if, and only if, $\theta \in \Sol^1(\C')$.
%
($\Leftarrow$) In order to exploit our hypothesis $A \incesymdf B$, given a trace
\\[1mm] \null\hfill $A \ssymd{\tr}
\triple{\p}{\Phi}{\Set}$ with $\theta \in \Sol^2(\Phi; \Set)$, 
\hfill\smallskip \\
we build
 $\tr'$ and $\theta'$ such that $A \ssymd{\tr'}
\triple{\p}{\Phi}{\Set'}$ with ``$\tr =_{\Ind_b} \tr'$'', and  $\theta'
\in \Sol^1(\Phi; \Set')$. Actually, we do this without changing the underlying
first-order substitution, \ie $\lambda_\theta =
\lambda_{\theta'}$.
This is done by a sub-induction; iteratively modifying $\theta$ and $\tr$.
Whenever $\theta$ is not already a first-order solution, we slightly modify
it. We obtain a new substitution $\theta'$ that is not a second-order solution
anymore w.r.t. $\tr$, and we use
Lemmas~\ref{lem:diff-swap}~and~\ref{lem:diff-min}
to obtain a new trace $\tr' \och \tr$ for
which $\theta'$ is a second-order solution.
By induction hypothesis on $\tr'$ we obtain a first-order
solution. We finally go back to the original trace $\tr$, using an argument
similar to the one in the first direction to handle static equivalence.
\qed 
\end{proof}

\begin{example}
\label{ex:iteration}
We illustrate the construction of $\tr'$, which is at the core
of the above proof.
Consider $A = \proc{\{P_1, P_2, P_3\}}{\Phi}$ where
$P_i=\In(c_i,x_i).\Out(c_i,n_i)$, and $\Phi_0 = \{w_0 \refer
n_0\}$, and $n_i \in \N$ for $0\leq i\leq 3$. We assume that $c_1\och c_2\och c_3$, and we consider the
situation where the nonces
$n_0$ and $n_2$ (resp. $n_1$ and $n_3$) are the same.

Let 
$\tr = \iossvect{c_3}{X_3}{w_3}.
\iossvect{c_2}{X_2}{w_2}.
\iossvect{c_1}{X_1}{w_1}$
and $\cs{\Phi}{\Set}$ the dependency constraint
system such that $A \ssymd{\tr} \triple{\varnothing}{\Phi}{\Set}$.
We consider the substitution
$\theta = \{X_3\mapsto \start,\, X_2\mapsto w_3, \,X_1\mapsto w_2\}$.
We note that $\theta \in \Sol^2(\Phi;\Set)$ but we have that $\theta \not\in
\Sol^1(\Phi;\Set)$ due to the presence of 
$\constrd{X_1}{w_2}$ in $\Set$.
We could try to fix this problem by building a ``better'' solution
$\theta'$ 
that yields the same first-order solution:
$\theta'=\{X_3\mapsto \start, \, X_2 \mapsto w_3,\,X_1\mapsto w_0\}$
is such a candidate.
Applying Lemmas~\ref{lem:diff-swap}~and~\ref{lem:diff-min}, we obtain 
a smaller symbolic trace:
\\[1mm] \null \hfill $
\tr' = \iossvect{c_1}{X_1}{w_1} \cdot
\iossvect{c_3}{X_3}{w_3} \cdot
\iossvect{c_2}{X_2}{w_2}
$. \hfill \smallskip \\
Let $\cs{\Phi}{\Set'}$ be the constraint system obtained from the
execution of $\tr'$.
We have that $\theta' \in \Sol^2(\Phi;\Set')$ but again $\theta' \not
\in \Sol^1(\Phi;\Set')$. This is due to the presence of
$\constrd{X_2}{w_3}$ in $\Set'$ --- which was initially satisfied by
$\theta$ in the first-order sense.
With one more iteration of this transformation, we obtain a third
candidate:
$\theta''=\{X_3\mapsto \start,\,X_2\mapsto  w_1,\,X_1\mapsto w_0\}$ and
\\[1mm]\null\hfill $\tr'' = \iossvect{c_1}{X_1}{w_1} \cdot
\iossvect{c_2}{X_2}{w_2} \cdot
\iossvect{c_3}{X_3}{w_3}$. \hfill\smallskip \\
The associated constraint system does not contain any
dependency constraint, and thus
$\theta''$ is trivially a first-order solution.
\end{example}
\subsection{Applications}

We first describe two situations 
showing that  our reduced semantics can yield an exponential benefit.
Then, we illustrate the effect of our reduced semantics on our
running example, \ie the private authentication protocol. 

\smallskip{}

  Consider first the simple process
  $\mathcal{P} = \{P_1, P_2,\ldots, P_n\}$ where
  each $P_i$ denotes the basic process
  $\In(c_i,x).\testt{x = \ok}{\Out(c_i,n_i)}$ with $n_i \in \N$. 
There are $(2n)!/2^n$
  different traces  of size $2n$
  (\ie containing $2n$ visible actions) in the concrete
  semantics. This number is actually
  the same in the standard symbolic semantics.
  In the compressed semantics (as well as the symbolic
  compressed semantics) this number goes down to $n!$.
  Finally, in the reduced semantics, there is only one
  trace such that the resulting constraint system admits a
  solution. Assuming that $c_1 \och \ldots \och c_n$, that trace
  is simply:\\[2mm]
\null\hfill
$\tr = \io{c_1}{X_1}{w_1} \cdot \ldots \cdot\io{c_n}{X_n}{w_n}.$\hfill\null


\medskip{}

Next, we consider the simple process
  $\mathcal{P} = \{P_1^n, \, P_2^n\}$ where  $P_i^0 = 0$, and 
  $P_i^{n+1}$ denotes the basic process
  $\In(c_i,x_j).\testt{x_j = \ok}{\Out(c_i,n_j)}. P_i^{n}$.
  We consider traces of size~$4 n$.
  In the concrete semantics, there are $\binom{4n}{2n}$
  different traces, whereas the number of such  traces is reduced to
  $\binom{2n}{n}$ in the compressed semantics.
  Again, there is only one trace left in the reduced semantics.


\medskip{}

Going back to our running example (see Examples~\ref{ex:private}
 and~\ref{ex:initial}),  
we represent some symbolic traces obtained using our reduced
semantics. We 
 consider:\\[1mm]
\null\hfill
$\triple{\{P_\init,
  Q_0(\skb,\pk{\ska})\}}{\Phi_0}{\varnothing}$\hfill\null
\smallskip{}

\noindent  and we assume that $c_A \och c_B$.
 We consider all symbolic traces obtained without considering the 
 \textsc{Else} rule.

\setlength{\columnsep}{10pt}%
\begin{wrapfigure}[8]l{4cm}
   \begin{tikzpicture}[inner sep=0pt]
     \draw (2,3) node 		(racine) {$\bullet$};

     \draw (1,2) node 		(a1) {$\bullet$};
     \draw (3,2) node 		(b) {$\bullet$};

     \draw (0,1) node 		(a1a2) {$\bullet$};
     \draw (2,1) node 		(a1b) {$\bullet$};
     \draw (3,1) node              (ba1) {$\bullet$};

     \draw (2,0) node 		(a1ba2) {$\bullet$};
     \draw (3,0) node 		(ba1a2) {$\bullet$};

     \path (racine) edge  node[left]{$\bio^1_a\;\;$} (a1);
     \path (racine) edge  node[right]{$\;\;\bio_b$} (b);

     \path (a1) edge node[left]{$\bio^2_{a}\;\;\;$} (a1a2);
     \path (a1) edge node[left,below]{$\bio_{b}\;\;\;\;\;\;$} (a1b);
     \path (b) edge node[right]{$\bio^1_{a}$} (ba1);

     \path (a1b) edge node[left]{$\bio^2_{a}$} (a1ba2);
     \path (ba1) edge node[right]{$\bio^2_{a}$} (ba1a2);


     \draw[->,>=latex,color=blue, line width=1pt] (ba1) to[bend left] (b);
     \draw[->,>=latex,color=blue, line width=1pt] (a1ba2) to[bend right] (a1b);
   \end{tikzpicture}
 \end{wrapfigure}

Those executions are represented in the diagram on the 
 left, where\vspace{1pt}
   \begin{itemize}
   \item  $\bio^1_a$ to denote 
     $\iossvect{c_A}{X^1_a}{w_a}$,
   \item $\bio^2_a$ to denote $\iossvect{c_A}{X^2_a}{\varnothing}$, and
\item $\bio_b$ to denote  $\iossvect{c_B}{X_b}{w_b}$.
\end{itemize}
The block $\bio^2_a$ is an improper block since it only contains an input
action. 
First, we may note that many interleavings are not taken into account
thanks to compression. 
Now, consider the symbolic trace $\bio^1_a \cdot \bio_b \cdot \bio^2_a$.
A dependency constraint of the form $\constrd{X^2_a}{w_b}$ is
generated. Thus, a concrete trace that satisfies this dependency
constraint
must use the output of the role $Q_0(b,a)$ to build
the second input of the role $P_\init$.

Second, consider the rightmost branch. A dependency
constraint of the form $\constrd{X^1_a}{w_b}$ is generated, and since
$X^1_a$ has to be instantiated by a recipe that gives the public constant~$\start$
(due of the constraint $x^1_a =^{?} \start$ present in the system),
the reduced semantics makes it possible to prune all executions starting with
$\bio_b \cdot \bio_a^1$.




\section{Conclusion}
\label{sec:conclu}

We have proposed two refinements of the symbolic semantics for simple
processes. 
The first refinement groups actions in blocks, while
the second one uses dependency constraints to restrict to minimal 
interleavings among a class of permutations. In both cases, the
refined semantics has less traces, yet we show that the associated
trace equivalence coincides with the standard one. In theory, this
yields a potentially exponential algorithmic optimization.


In order to validate our approach, an experimental implementation has
been developed\footnote{
  Available at 
  \url{<http://perso.ens-lyon.fr/lucca.hirschi/spec_en.html>}.
}.
This tool is based on SPEC~\cite{SPEC}
(which does not support $\mathsf{else}$ branches)
and implements our modified semantics as well as an adapted constraint
resolution procedure that takes (first-order) dependency constraints
into account. The latter procedure is quite preliminary and far from
optimal. Yet, the modified checker already shows significant improvements
over the original version on various benchmarks (\cite{rr-lsv-13-13}, Figure~9).

\smallskip{}

We are considering several directions for future work.
Constraint solving procedures should be studied in depth:
we may optimize the one we already developed~\cite{rr-lsv-13-13} and we are
also interested in studying the problem in other frameworks,
\eg\cite{APTE}.
We also believe that stronger reductions can be achieved:
for instance, exploiting symmetries should be very useful for
dealing with multiple sessions.





\bibliographystyle{abbrv}
\bibliography{reference}

\newpage
\appendix

\section{Proofs of Section~\ref{sec:compression}}

\renewcommand{\LRstep}[1]{\xRightarrow{#1}}



\lempermute*
\begin{proof}
  It suffices to establish that
  $A \LRstep{\alpha\cdot\alpha'\;} A'$ implies
  $A \LRstep{\alpha'\cdot\alpha\;} A'$
  for any $\alpha\;\I_{a}\;\alpha'$.
\begin{itemize}
\item
  \newcommand{\acti}{\Out(c_i,w_i)}
  \newcommand{\actj}{\Out(c_j,w_j)}
  Assume that we have
  $A \LRstep{\acti} A_i \LRstep{\actj} A'$ with $c_i \neq c_j$.
  Because we are considering simple processes,
  the two actions must be concurrent.
  More specifically, our process $A$ must be of the form
  $\proc{\{P_i, P_j\}\uplus\p_r}{\Phi}$ with $P_i$ (resp.~$P_j$)
  being a basic process on channel $c_i$ (resp.~$c_j$).
  We assume that in our sequence of reductions, $\tau$ actions
  pertaining to $P_i$ are all executed before reaching $A_i$,
  and that $\tau$ actions pertaining to $\p_r$ are executed last.
  This is without loss of generality, because a $\tau$ action
  on a given basic process can easily be permuted with actions
  taking place on another basic process, since it does not depend
  on the context and has no effect on the frame.
  Thus we have that
  $A_i = \proc{\{P'_i,P_j\}\uplus\p_r}{\Phi\uplus\{w_i\refer m_i\}}$,
  $A'  = \proc{\{P'_i;P'_j\}\uplus\p'_r}{\Phi\uplus\{w_i\refer m_i, w_j\refer 
  m_j\}}$. Since the $\tau$ actions taking place on $\p'_r$ rely neither
  on the frame nor on the first two basic processes, we easily obtain
  the permuted execution:
  \[ \begin{array}{rcl}
    A
    & \LRstep{\actj} &
    \proc{\{P_i,P'_j\}\uplus\p_r}{\Phi\uplus\{w_j\refer m_j\}} \\
    & \LRstep{\acti} &
    \proc{\{P'_i,P'_j\}\uplus\p'_r}{\Phi\uplus\{w_i\refer m_i,w_j \refer 
    m_j\}}
  \end{array} \]

\item
  The permutation of two input actions on distinct channels is very
  similar. In this case, the frame does not change at all, and the
  order in which messages are derived from the frame does not matter.
  Moreover, the instantiation of the input variable on one basic
  process has no impact on the other ones.

\item
  \renewcommand{\acti}{\Out(c_i,w_i)}
  \renewcommand{\actj}{\In(c_j,R)}
  Assume that we have
  $A \LRstep{\acti} A_i \LRstep{\actj} A'$ with $c_i \neq c_j$ and 
  $w_i\not\in \fv(R)$.
  Again, the two actions are concurrent, and we can assume that
  $\tau$ actions are organized conveniently so that
  $A$ is of the form $\proc{\{P_i,P_j\}\uplus\p_r}{\Phi}$
  with $P_i$ (resp. $P_j$) a basic process on $c_i$ (resp. $c_j$);
  $A_i$ is of the form $\proc{\{P'_i,P_j\}\uplus\p_r}{\Phi\uplus\{w_i\refer 
  m_i\}}$;
  and $A'$ is of the form
  $\proc{\{P'_i,P'_j\}\uplus\p'_r}{\Phi\uplus\{w_i\refer m_i\}}$.
  As before, the $\tau$ actions from $\p_r$ to $\p'_r$ are easily moved 
  around. Additionnally, $w_i \not\in \fv(R)$ implies
  $\fv(R)\subseteq\dom(\Phi)$ and thus we have:
  \[
    \proc{\{P_i,P_j\}\uplus\p_r}{\Phi}
    \LRstep{\actj} \proc{\{P_i,P'_j\}\uplus\p_r}{\Phi} \]
  The next step is trivial:
  \[ \proc{\{P_i,P'_j\}\uplus\p_r}{\Phi}
     \LRstep{\acti}
   \proc{\{P'_i,P'_j\}\uplus\p'_r}{\Phi\uplus\{w_i\refer m_i\}} \]

\item
  We also have to perform the reverse permutation, but we shall
  not detail it; this time we are delaying the derivation of $R$ from the 
  frame, and it only gets easier.

\qed
\end{itemize}
\end{proof}


\procompsoundness*

\begin{proof}
  This results immediately follows from
  the observation that $\sintc{}$ is included in $\LRstep{}$
  for traces made of proper blocks since for them
  {\sc Failure} can not arise.
\end{proof}




\procompcompleteness*

\begin{proof}
  We first observe that
  $A \LRstep{\tr} A'$ implies $A \sintf{\tr}{o^*} A'$
  if $A'$ is initial and $\tr$ is a (possibly empty)
  sequence of output actions on the same channel.
  We prove this by induction on the sequence of actions.
  If it is empty, we can conclude using
  one of the \textsc{Proper} rules because $A = A'$ is initial.
  Otherwise, we have:
  $$A \LRstep{\Out(c,w)} A'' \LRstep{\tr\;} A'.$$
  We obtain $A'' \sintf{\tr}{o^*} A'$ by induction hypothesis,
  and conclude using rules \textsc{Tau} and \textsc{Out}.

  The next step is to show that
  $A \LRstep{\tr\;} A'$ implies $A \sintf{\tr}{i^{*}} A'$,
  if $A'$ is initial and $\tr$ is the
  concatenation of a (possibly empty) sequence of inputs
  and a non-empty sequence of outputs,
  all on the same channel.
  This is easily shown by induction on the number of input actions.
  If there is none we use the previous result, otherwise we conclude
  by induction hypothesis and using rules \textsc{Tau} and \textsc{In}.
  Otherwise, the first output action allows us to conclude from the
  previous result and rules \textsc{Tau} and \textsc{Out}.

  We can now show that
  $A \LRstep{\tr\;} A'$ implies $A \sintf{\tr}{i^{+}} A'$
  if $A'$ is initial and $\tr$ is a proper block.
  Indeed, we must have $$A \LRstep{\In(c,R)} A'' \LRstep{\tr'\;} A'$$
  which allows us to conclude using the previous result and rules
  \textsc{Tau} and \textsc{In}.

  We finally obtain that $A \LRstep{\tr} A'$ implies
  $A \sintc{\tr} A'$ when $A$ and $A'$ are initial \emph{simple} processes
  and $\tr$ is a sequence of proper blocks. This is done by induction
  on the number of blocks. The base case is trivial.
  Because $A$ is initial, the execution of its basic processes can only
  start with visible actions, thus only one basic process is involved in
  the execution of the first block.
  Moreover, we can assume without loss of generality that the execution of
  this first block results in another initial process: indeed the basic process
  resulting from that execution is either in the final process $A'$, which is
  initial, or it will perform another block, \ie it can perform
  $\tau$ actions followed by an input, in which case we can force
  those $\tau$ actions to take place as early as possible. Thus we have
  \[ A \LRstep{\mathsf{b}} A'' \LRstep{\tr'} A' \]
  where $\mathsf{b}$ is a proper block, and we conclude using the previous
  result and the induction hypothesis.
\qed
\end{proof}


\begin{proposition} \label{prop:focus-trace}
  Let $\tr$ be a trace of visible actions such that,
  for any channel $c$ occurring in the trace, it appears first
  in an input action.
  There exists a sequence of proper blocks $\tr_{io}$ and
  a sequence of improper blocks\footnote{
    An improper block is composed of
    a sequence of input actions on the same channel.
  } $\tr_i$
  such that $\tr =_{\I_a} \tr_{io}\cdot\tr_i$.
\end{proposition}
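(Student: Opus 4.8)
The plan is to exploit the sequential, single--basic--process nature of each channel together with the fact that, in any executable trace, an input recipe mentions only handles introduced by strictly earlier outputs. This well--formedness property is exactly what makes the transpositions I use fall within $=_{\I_a}$, and it is preserved by every such transposition (each swap keeps all input recipes after the outputs they mention), so I may keep applying it throughout the argument.

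First I would fix, for each channel $c$ occurring in $\tr$, the subsequence $\tr|_c$ of actions on $c$. Since actions on a given channel all stem from one sequential basic process, $\tr|_c$ respects the order of $\tr$ and, by hypothesis, begins with an input. Cutting $\tr|_c$ at every place where an output is immediately followed by an input yields a decomposition of $\tr|_c$ into a (possibly empty) run of \emph{proper} blocks followed by at most one trailing run of inputs. The trailing run, when present, is precisely the improper block $I^c$ of channel $c$; every other piece is a proper block. This is a purely combinatorial observation, but it pins down the blocks we must separate.

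Next I would push every improper block to the end. The inputs of $I^c$ are the last actions on channel $c$, so every action occurring to their right in $\tr$ lies on another channel and is either an input, which commutes with them since the channels differ, or an output $\Out(c',w')$ occurring after them, in which case $w' \notin \fv(M)$ for the recipe $M$ of any input of $I^c$ (the handle $w'$ did not yet exist), so the two are again $\Ind_a$--independent. Hence each $I^c$ can be transported rightwards to the tail of the trace; since distinct improper blocks live on distinct channels their inputs pairwise commute, so they may be laid out contiguously as $\tr_i = I^{c_1}\cdots I^{c_k}$, leaving a prefix $\tr_{io}$ that contains exactly the proper--block actions, still interleaved.

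It remains to regroup $\tr_{io}$ into a concatenation of contiguous proper blocks, and this is the step I expect to be the main obstacle. Gathering the first or the last block naively fails: the first block may read a handle produced by another block (blocking the leftward moves of its inputs), and the last block may feed a later input of another block (blocking the rightward moves of its outputs). The fix is to respect the handle--dependency order. I would call a proper block a \emph{source}, among the blocks not yet gathered, if none of its input recipes uses a handle output by another remaining block (only initial--frame handles). Because an input may use only strictly earlier handles, the block dependency relation is acyclic, so a source always exists. A source block can be pulled to the front: moving its inputs leftwards past earlier actions on other channels is sound, as a source reads no block--produced handle and in particular not that of any output it passes; moving its outputs leftwards past earlier inputs on other channels is sound, as those inputs precede the outputs in $\tr$ and hence cannot mention their handles. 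I would then delete the gathered source and iterate on the remainder, which is again an executable interleaving of the remaining blocks. This produces $\tr_{io}$ as a genuine sequence of proper blocks with $\tr =_{\I_a} \tr_{io}\cdot\tr_i$, as required.
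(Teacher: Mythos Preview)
Your overall strategy is sound and offers a more structural view than the paper's direct induction, but step~3 has a genuine gap. Your notion of \emph{source} only forbids handle--dependencies on other remaining blocks; it does not require the block to be the first remaining block on its own channel. Without that, the chosen source may sit behind an earlier block on the same channel, and actions on the same channel are never $\I_a$--independent, so the source cannot be pulled to the front. Concretely, take
\[
\tr_{io} \;=\; \In(b,y)\cdot\Out(b,v)\cdot\In(a,x_1)\cdot\Out(a,w_1)\cdot\In(a,x_2)\cdot\Out(a,w_2)
\]
with $v\in\fv(x_1)$ and $\fv(x_2)=\varnothing$. The block $\In(a,x_2)\cdot\Out(a,w_2)$ is a source in your sense, yet $\In(a,x_2)$ cannot commute past $\In(a,x_1)$ or $\Out(a,w_1)$.

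The repair is straightforward: strengthen ``source'' to also require being the first remaining block on its channel, and show that the combined relation (handle dependency \emph{plus} same--channel precedence) is still acyclic --- for instance, both kinds of edges strictly decrease the position of the block's last input, so a minimal element always exists. The paper avoids this issue by a different, more direct choice: it always extracts the block containing the \emph{first output} of the current trace. That block is automatically first on its channel (by the ``first occurrence is an input'' hypothesis) and its inputs precede every output in the trace, hence read no block--produced handle; so it is always a valid source in the strong sense. This makes the paper's induction on trace length shorter, while your dependency--DAG argument, once patched, makes the underlying partial order on blocks explicit.
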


\begin{proof}
  We proceed by induction on the length of $\tr$, and distinguish
  two cases:
\begin{itemize}
\item If $\tr$ has no output action then, by swapping input actions
  on distinct channels,
  we reorder $\tr$ so as to obtain
  $\tr_i = \tr^{c_1}\cdot\ldots\cdot\tr^{c_n}=_{\I_a}\tr$
  where the $c_i$'s are pairwise distinct and
  $\tr^{c_i}$ is an improper block on channel $c_i$.
\item Otherwise,
  there must be a decomposition $\tr = \tr_1\cdot\out{c}{w}\cdot\tr_2$
  such that
  $\tr_1$ does not contain any output. We can perform swaps involving
  input actions of $\tr_1$ on all channel $c'\neq c$, so that they
  are delayed after the first output on $c$.
  We obtain $\tr =_{\I_a}
  \inp{c}{M_1}\cdot\ldots\cdot\inp{c}{M_n}\cdot\out{c}{w}
  \cdot \tr_1'\cdot \tr_2$
  with $n\geq 1$.
  Next,
  we swap output actions on channel $c$ from $\tr_1'\cdot\tr_2$ that
  are not preceded by another input on $c$, so as to obtain
  \[ \tr =_{\I_a}
  \inp{c}{M_1}\dots\inp{c}{M_n}\cdot\out{c}{w}
  \cdot\out{c}{w_1}\dots\out{c}{w_m}\cdot\tr_2' \] such that
  either $\tr_2'$ does not contain any action on channel $c$
  or the first one is an input action.
  We have thus isolated a first proper block,
  and we can conclude by induction hypothesis on $\tr_2'$.
  \qed
\end{itemize}
\end{proof}

Note that the above result does not exploit all the richness of $\I_a$.
In particular, it never relies on the possibility to swap an input action
before an output when the input message does not use the output handled.
Indeed, the idea behind compression does not rely on messages.
This will change in Section~\ref{sec:diff} where we will use ${\I_a}$
more fully.

\smallskip


We finally prove the main result about the compressed semantics.
Given two simple process $A = \proc{\p}{\Phi}$ and $A' =
\proc{\p'}{\Phi'}$, we shall write $\Phi(A) \statequiv \Phi(A')$ (or
even $A \statequiv A'$) instead of $\Phi
  \statequiv \Phi'$.

\theocompsoundnesscompleteness*

\begin{proof} We prove the two directions separately.

\noindent $(\Rightarrow)$ 
Let $A$ be an initial simple process such that $A \approx B$
and ${A \sintc{\tr} A'}$.
%
One can easily see that our trace $\tr$ must be of the form
$\tr_{io} \cdot \tr_i$ where $\tr_{io}$ is made of proper blocks
and $\tr_i$ is a (possibly empty) sequence of inputs
on the same channel $c_j$.
We have: $$A \sintc{\tr_{io}} A'' \sintc{\tr_i} A'$$

Using Proposition~\ref{pro:comp-soundness}, we obtain that
$A \LRstep{\tr_{io}} A''$.
We also claim that $A''\LRstep{\tr_i} A^{+}$
for some $A^{+}$ having the same frame as $A'$.
This is obvious when $\tr_i$ is empty ---
in that case we can simply choose $A^{+} = A' = A''$.
Otherwise, the execution of the improper block $\tr_i$ results from the 
application of rule \textsc{Improper}. Except for the fact that this rule
``kills'' the resulting process, its subderivation simply packages a
sequence of inputs, and so we have a suitable $A^+$.
We thus have:
\[
A\LRstep{\tr_{io}\;}
A''\LRstep{\tr_i\;}
A^{+} \]

By hypothesis, it implies that
$B \LRstep{\tr_{io}\;} B''$
and $B\LRstep{\tr_{io}\cdot\tr_i\;} B^+$
with $A'' \statequiv B''$ and $A^+ \statequiv B^+$.
Since simple processes are determinate, we have:
\[ B \LRstep{\tr_{io}} B'' \LRstep{\tr_i} B^+ \]

It remains to establish that $B \sintc{\tr} B'$
such that $B'\estat A'$.
We can assume that $B''$ does not have any basic process
starting with a test, without loss of generality since forcing
$\tau$ actions cannot break static equivalence.
Further, we observe that $B''$ is initial. Otherwise, it
would mean that a basic process of $B$ is not initial (absurd)
or that one of the blocks of $\tr_{io}$, which are maximal for
$A$, is not maximal for $B$ (absurd again, because it contradicts
$A \approx B$).
This allows us to apply Proposition~\ref{pro:comp-completeness}
to obtain $$B \sintc{\tr_{io}} B''.$$
This concludes when $\tr_i$ is empty, because $B' = B'' \statequiv A'' = A'$.
Otherwise, we note that $A^+$ cannot perform any action on channel $c_j$,
because the execution of $\tr_i$ in the compressed semantics must be maximal.
Since $A \approx B$, it
must be that $B^+$ cannot perform any visible action on the channel $c_j$
either. Thus $B''$ can complete an improper step:
\[ B''\sintc{\tr_i}B' \mbox{ where } B' = \proc{\varnothing}{\Phi(B^+)}. \]
We can finally conclude that
$B \sintc{\tr} B'$ with $\Phi(B') = \Phi(B^+) \statequiv \Phi(A^+) = \Phi(A')$.

\bigskip{}

\noindent $(\Leftarrow)$
Let $A$ be an initial simple process such that $A \approxc B$
and $A \LRstep{\tr} A'$. We ``complete'' this execution as follows:
\begin{itemize}
\item We force $\tau$ actions whenever possible.
\item If the last action on $c$ in $\tr$ is an input, we trigger
  available inputs on $c$ using an arbitrary public constant as a
  recipe.
\item We trigger all the outputs that are available.
\end{itemize}

We obtain a trace of the form $\tr\cdot\tr^{+}$.
Let $A^{+}$ be the process obtained from this trace:
\[ A \LRstep{\tr} A' \LRstep{\tr^{+}} A^{+} \]
We observe that $A^{+}$ is initial: indeed, for each basic process
that performs actions in $\tr\cdot\tr^{+}$, either the last action on its
channel is an output and the basic process is of the form
$\In(c,\_).P$, or the last action is an input and the basic
process is reduced to $0$ and disappears.

Next, we apply Proposition~\ref{prop:focus-trace} to obtain traces
$\tr_{io}$ (resp.~$\tr_i$) made of proper (resp.~improper) blocks,
such that $\tr\cdot\tr^{+} =_{\I_{a}} \tr_{io}\cdot\tr_i$.
By Lemma~\ref{lem:permute} we know that this permuted trace
can also lead to $A^{+}$:
\[
  A \LRstep{\tr_{io}} A_{io} \LRstep{\tr_i} A^{+}
\]
As before, we can assume that $A_{io}$ cannot perform any $\tau$
action. Under this condition, since $A^{+}$ is initial,
$A_{io}$ must also be initial.

By Proposition~\ref{pro:comp-completeness} we have that
$A \sintc{\tr_{io}} A_{io}$,
and $A \approxc B$ implies that:
$$B \sintc{\tr_{io}} B_{io} \mbox{ with } \Phi(A_{io}) \statequiv \Phi(B_{io}). $$
A simple inspection of the \textsc{Proper} rules shows that a basic
process resulting from the execution of a proper block must be
initial. Thus, since the whole simple process $B$ is initial,
{$B_{io}$ is initial too}.

Thanks to Proposition~\ref{pro:comp-soundness}, we have that
$B \LRstep{\tr_{io}} B_{io}$.
Our goal is now to prove that we can complete this execution with $\tr_i$.
This trace is of the form $\tr^{c_1}\cdot\tr^{c_2}\dots\tr^{c_n}$ where $\tr^{c_i}$ contains
only inputs on channel $c_i$ and the $c_i$ are pairwise disjoint.
Now, we easily see that for each $i$,
$$A_{io} \LRstep{\tr^{c_i}} A_i$$
and $A_i$ has no more atomic process on channel $c_i$.
Thus we have $A_{io} \sintc{\tr^{c_i}} A_i^0$ with
$A_i^0 = \proc{\varnothing}{\Phi(A_i)}$. Since
$A \approx_c B$, we must have some $B_i^0$ such that:
$$B \sintc{\tr_{io}} B_{io} \sintc{\tr^{c_i}} B_i^0 $$
We can translate this back to the regular semantics, obtaining
$
B \LRstep{\tr_{io}} B_{io} \LRstep{\tr^{c_i}} B_i
$.
We can now execute all these
inputs to obtain an execution of $\tr_{io}\cdot\tr_i$ towards some process
$B^{+}$:
$$B \LRstep{\tr_{io}} B_{io} \LRstep{\tr_i} B^{+}$$
Permuting those actions, we obtain thanks to Lemma~\ref{lem:permute}:
\[ B \LRstep{\tr} B' \LRstep{\tr^{+}} B^{+} \]
We observe that
$\Phi(B^{+}) = \Phi(B_{io}) \statequiv \Phi(A_{io}) \statequiv \Phi(A^{+})$,
and it immediately follows that $A' \statequiv B'$ because those frames
have the same domain, which is a subset of that of $\Phi(A^{+}) \statequiv \Phi(B^{+})$.
\qed
\end{proof}

\section{Appendix of Section~\ref{sec:constraint-solving}}
\label{sec:app-symbolic}

\subsection{Focused and compressed symbolic semantics}
\label{subsec:app-symbolic-semantics}

\noindent {\bf Focused symbolic semantics.} The main idea of
the compressed symbolic semantics is to ensure that when a process
starts executing some actions, it executes a maximal block of actions.

\smallskip{}

\noindent \resizebox{\textwidth}{!}{
  $
  \begin{array}{lc}
\mbox{\sc In} &  \infer[\mbox{with }\ell\in\{i^*;i^+\}]
  {\triple{P}{\Phi}{\Set}\ssymf{\In(c,X). \tr\;}{\ell} \triple{P''}{\Phi''}{\Set''}}
  {\triple{P}{\Phi}{\Set}\ssym{\In(c,X)} \triple{P'}{\Phi'}{\Set'}
  & \triple{P'}{\Phi'}{\Set'}\ssymf{\;\tr\;\;}{i^*} \triple{P''}{\Phi''}{\Set''}}
   \\ [4mm]
\mbox{\sc Out} & 
 \infer[\mbox{with } \ell\in\{i^*;o^*\}]
   {\triple{P}{\Phi}{\Set}\ssymf{\Out(c,w). \tr\;}{\ell} 
   \triple{P''}{\Phi''}{\Set''}}
   {\triple{P}{\Phi}{\Set}\ssym{\Out(c,w)} \triple{P'}{\Phi'}{\Set'}
   & \triple{P'}{\Phi'}{\Set'}\ssymf{\;\;\tr\;\;}{o^*} \triple{P''}{\Phi''}{\Set''}}
 \\ [4mm]
\mbox{\sc Tau} &
 \infer[\mbox{with }\ell\in\{o^*;i^+;i^*\}]
 {\triple{P}{\Phi}{\Set} \ssymf{\;\;\tr\;\;}{\ell} \triple{P''}{\Phi''}{\Set''}}
 {\triple{P}{\Phi}{\Set} \ssym{\tau} \triple{P'}{\Phi'}{\Set'} &
    \triple{P'}{\Phi'}{\Set'} \ssymf{\;\;\tr\;\;}{\ell} \triple{P''}{\Phi''}{\Set''}}
 \\ [4mm]
\mbox{\sc Proper} &
\infer{
   {\triple{0}{\Phi}{\Set}\ssymf{\;\;\epsilon\;\;}{o^*}
     \triple{0}{\Phi}{\Set}}
   }{}
\quad
\infer{
   {\triple{\In(c,x).P}{\Phi}{\Set}\ssymf{\;\;\epsilon\;\;}{o^*}
     \triple{\In(c,x).P}{\Phi}{\Set}}
   }{}
\\ [4mm]
\mbox{\sc Improper} &
\infer{
   {\triple{0}{\Phi}{\Set}\ssymf{\;\;\epsilon\;\;}{i^*} \triple{\bot}{\Phi}{\Set}}
 }{}
\end{array}
$
}

\bigskip{}

\noindent {\bf Compressed symbolic semantics.}
We define the compressed symbolic reduction $\ssymc{}$ between
symbolic processes as the least reflexive transitive relation
satisfying the following rules:

\smallskip{}

\noindent \resizebox{\textwidth}{!}{
$
\begin{array}{lcl}
  \infer
  {\triple{\{Q\}\uplus \p}{\Phi}{\Set}\ssymc{\;\;\tr\;\;}
    \triple{\{Q'\}\uplus\p}{\Phi'}{\Set'}}
  {\triple{Q}{\Phi}{\Set}\ssymf{\;\;\tr\;\;}{i^+} \triple{Q'}{\Phi'}{\Set'}
    & Q'\neq\bot} 
&\;\;\;\;\;&
  \infer
  {\triple{\{Q\}\uplus\p}{\Phi}{\Set}\ssymc{\;\;\tr\;\;} 
  \triple{\varnothing}{\Phi'}{\Set'}}
  {\triple{Q}{\Phi}{\Set}\ssymf{\;\;\tr\;\;}{i^+} \triple{Q'}{\Phi'}{\Set'} & Q' = \bot}
  \end{array}
$}

\subsection{Soundness and completeness}

\begin{proposition}
\label{pro:symb-soundness}
Let $\proc{\p}{\Phi}$ be a simple process such that
$\triple{\p}{\Phi}{\varnothing} \ssymc{\tr} \triple{\p'}{\Phi'}{\Set'}$,
and $\theta \in \Sol(\Phi';\Set')$. We have that
$\proc{\p}{\Phi} \sintc{\tr\theta}
\proc{\p'\lambda}{\Phi'\lambda}$ where $\lambda$ is the first-order
solution of $\triple{\p'}{\Phi'}{\Set'}$ associated to $\theta$.
\end{proposition}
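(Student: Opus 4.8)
The plan is to mirror, at the symbolic level, the three-layer structure of the concrete compressed semantics and to prove soundness layer by layer, following the usual approach. Concretely, I would first establish a single-step correspondence for the basic symbolic semantics $\ssym{}$, then lift it to the focused symbolic semantics $\ssymf{}{\ell}$ by induction on its derivation tree, and finally lift it to $\ssymc{}$ via the \textsc{Block} and \textsc{Failure} rules. Throughout, the second-order solution $\theta$ supplies the concrete recipes that instantiate the abstract inputs $X$, while its associated first-order solution $\lambda = \lambda_\theta$ (unique by Definition~\ref{def:cssol}) turns the symbolic processes and frames into the ground ones appearing in the concrete run. Since we start from $\triple{\p}{\Phi}{\varnothing}$, the initial $\p$ is closed, so no instantiation is needed at the root and variables are only introduced along the derivation.

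The core is a single-step lemma: if $\triple{P}{\Phi}{\Set} \ssym{a} \triple{P'}{\Phi'}{\Set'}$ and $\theta \in \Sol(\Phi';\Set')$ with associated first-order solution $\lambda$, then $\proc{P\lambda}{\Phi\lambda} \sint{a\theta} \proc{P'\lambda}{\Phi'\lambda}$. I would prove this by inspecting the four symbolic rules. For \textsc{In}, the new constraint $\dedi{\dom(\Phi)}{X}{x}$ together with $\theta \in \Sol$ gives $X\theta \in \T(\dom(\Phi))$ and $(X\theta)(\Phi\lambda) =_{\E} x\lambda$, which are exactly the side-conditions of the concrete \textsc{In} rule with recipe $M = X\theta$ and received message $x\lambda$. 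For \textsc{Out}, applying $\lambda$ to the frame extension $\Phi \cup \{w \refer u\}$ yields $\Phi\lambda \cup \{w \refer u\lambda\}$, matching the concrete \textsc{Out} rule on $\proc{\Out(c,u\lambda).P'\lambda}{\Phi\lambda}$. For \textsc{Then} (resp.\ \textsc{Else}), the added constraint $u \eqi v$ (resp.\ $u \neqi v$) is satisfied by $\lambda$ by definition of a solution, i.e.\ $u\lambda =_{\E} v\lambda$ (resp.\ $u\lambda \neq_{\E} v\lambda$), which is precisely the side-condition of the concrete rule. In each case the recipe/message data carried by $\theta,\lambda$ is exactly what the concrete rule requires.

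Lifting to the focused level is then a routine induction on the structure of the focused symbolic derivation given above, each rule \textsc{In}, \textsc{Out}, \textsc{Tau}, \textsc{Proper}, \textsc{Improper} being matched by the homonymous rule of Figure~\ref{fig:sintf}: the single-step lemma discharges the premises that are plain $\ssym{}$ transitions, and the $\ell$-labels are preserved verbatim. This yields, for a block, $\proc{Q\lambda}{\Phi\lambda} \sintf{\tr\theta}{i^+} \proc{Q'\lambda}{\Phi'\lambda}$, with $Q'\lambda = \bot$ exactly when $Q' = \bot$. The compressed-level statement follows by induction on the number of blocks, applying \textsc{Block} when $Q' \neq \bot$ and \textsc{Failure} when $Q' = \bot$; in the latter case both the symbolic and the concrete rule collapse the simple process to $\proc{\varnothing}{\Phi'}$ (resp.\ $\proc{\varnothing}{\Phi'\lambda}$), so the correspondence is preserved.

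The main obstacle is not any single rule but the bookkeeping of solutions across a multi-step derivation. Because constraint systems grow monotonically along $\ssymc{}$ and $\ssymf{}{\ell}$ (the frame only gains handles and $\Set$ only gains constraints), a solution $\theta$ of the final system $\cs{\Phi'}{\Set'}$ restricts to a solution of every intermediate system, and by uniqueness of the associated first-order substitution (Definition~\ref{def:cssol}) its $\lambda$ restricts coherently as well. I would make this invariant explicit so that, at each application of the single-step lemma, the $\lambda$ used is the restriction of the global $\lambda_\theta$ to the first-order variables introduced so far; well-formedness of the constraint systems guarantees that this restriction is itself the unique first-order solution of the relevant sub-system. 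With this invariant in place, the nested inductions go through and deliver exactly $\proc{\p}{\Phi} \sintc{\tr\theta} \proc{\p'\lambda}{\Phi'\lambda}$.
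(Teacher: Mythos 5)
Your proposal is correct and follows essentially the same route as the paper's proof: an outer induction over the blocks of the compressed derivation (distinguishing the \textsc{Block} and \textsc{Failure} cases) combined with an inner induction on the proof tree of the focused derivation, grounded in single-step soundness of the standard symbolic semantics. The only differences are presentational --- you prove the single-step correspondence for $\ssym{}$ directly by rule inspection where the paper delegates it to~\cite{CCD-tcs13}, and you make explicit the (correct) monotonicity argument that a solution of the final constraint system restricts coherently to every intermediate system, which the paper leaves implicit.
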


\begin{proof}
This proof can be done by induction on the length of the derivation
$\triple{\p}{\Phi}{\varnothing} \ssymc{\tr}
\triple{\p'}{\Phi'}{\Set'}$ considering two different cases when
dealing with the last block of actions: the case
of an application of the rule {\sc Block} and the case of {\sc Failure}.

To establish the soundness of the derivation made of one block of
actions, we do an induction on the proof tree witnessing this
derivation. At each step, we make use of the soundness of the standard 
symbolic semantics (as it has been shown, \eg in~\cite{CCD-tcs13}).
\qed
\end{proof}

\begin{proposition}
\label{pro:symb-completeness}
Let $\proc{\p}{\Phi}$ be a simple process such that
$\proc{\p}{\Phi} \sintc{\tr} \proc{\p'}{\Phi'}$. There exists a
symbolic process $\triple{\p_s}{\Phi_s}{\Set}$, a solution
$\theta \in \Sol(\Phi_s;\Set)$, and a sequence $\tr_s$ such
that:
\begin{itemize}
\item $\triple{\p}{\Phi}{\varnothing} \ssymc{\tr_s}
  \triple{\p_s}{\Phi_s}{\Set}$;
\item $\proc{\p'}{\Phi'} = \proc{\p_s'\lambda}{\Phi_s\lambda}$;
  and
\item $\tr = \tr_s\theta$
\end{itemize} 
where $\lambda$ is the first-order solution of
$\triple{\p_s}{\Phi_s}{\Set}$ associated to $\theta$.
\end{proposition}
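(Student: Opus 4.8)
The plan is to prove completeness by induction on the number of blocks in the compressed execution $\proc{\p}{\Phi} \sintc{\tr} \proc{\p'}{\Phi'}$, mirroring the structure of the soundness argument (Proposition~\ref{pro:symb-soundness}) but running it in the reverse direction. The backbone is the completeness of the \emph{standard} (non-compressed) symbolic semantics, already available from~\cite{CCD-tcs13}: whenever a concrete process performs a single action $A\lrstep{a}A'$, there is a matching symbolic step $A_s \ssym{a_s} A_s'$ together with a solution extending the current one on the freshly introduced second-order variable, such that $a = a_s\theta$ and $A' = A_s'\lambda_\theta$. I would take this as a black box and lift it first to blocks, then to whole compressed traces.

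First I would establish a block-level completeness lemma: if $\proc{Q}{\Phi} \sintf{b}{i^+} \proc{Q'}{\Phi'}$ is a concrete focused derivation of a single block, and $\triple{Q_s}{\Phi_s}{\Set}$ is a symbolic process with $\theta\in\Sol(\Phi_s;\Set)$, $\Phi = \Phi_s\lambda_\theta$ and $Q = Q_s\lambda_\theta$, then there exist a symbolic focused derivation $\triple{Q_s}{\Phi_s}{\Set} \ssymf{b_s}{i^+} \triple{Q_s'}{\Phi_s'}{\Set'}$ and an extension $\theta'\supseteq\theta$ in $\Sol(\Phi_s';\Set')$ such that $b = b_s\theta'$, $Q' = Q_s'\lambda_{\theta'}$ and $\Phi' = \Phi_s'\lambda_{\theta'}$. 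This is proved by structural induction on the concrete derivation tree: the focused semantics of Figure~\ref{fig:sintf} and its symbolic counterpart in Section~\ref{subsec:app-symbolic-semantics} share exactly the same five rules ({\sc In}, {\sc Out}, {\sc Tau}, {\sc Proper}, {\sc Improper}) with identical label bookkeeping $\ell\in\{i^+,i^*,o^*\}$, so each concrete rule application is matched by its symbolic analogue. For {\sc In} I invoke the standard completeness result to obtain the fresh variable $X$, the deduction constraint $\dedi{\dom(\Phi_s)}{X}{x}$, and set $\theta'(X)$ to the concrete recipe $M$; this is legitimate because outputs add the same handles on both sides, so $\dom(\Phi_s)=\dom(\Phi)$ throughout. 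For {\sc Tau} I record the equality or disequality constraint generated by the {\sc Then}/{\sc Else} rule. The terminal cases {\sc Proper} and {\sc Improper} leave the constraint system unchanged and are immediate.

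Next I would globalize this to the compressed relation by induction on the number of blocks. The base case is the reflexive step, where $\tr_s = \epsilon$ and $\theta$ is the empty solution of $\cs{\Phi}{\varnothing}$. For the inductive step I decompose $\tr = b\cdot\tr''$, where the first block is derived by {\sc Block} (if its focused derivation ends in some $Q'\neq\bot$) or by {\sc Failure} (if $Q'=\bot$). In either case the block-level lemma supplies a symbolic focused derivation of $b$, to which I apply the corresponding symbolic {\sc Block} or {\sc Failure} rule; I then apply the induction hypothesis to $\tr''$ starting from the resulting symbolic process. Since each block introduces only fresh second-order variables, the solutions for the successive blocks have disjoint domains and their union is the required global $\theta$; the associated first-order solution $\lambda_\theta$ reconstructs $\p'$ and $\Phi'$ by the invariant maintained through the induction. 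Note that {\sc Failure} produces $\triple{\varnothing}{\Phi_s'}{\Set'}$, so it can only occur in the last block, which matches the concrete side.

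The main obstacle I anticipate is the bookkeeping in the block-level lemma, specifically maintaining the invariant that ties concrete values to their symbolic counterparts through $\lambda_\theta$ while the constraint system and its solution grow. I must check that the accumulated system stays well-formed (the dependency order on first-order variables remains acyclic and each variable is introduced by a unique deduction constraint), that extending $\theta$ with $\theta'(X)=M$ keeps $\lambda_{\theta'}$ consistent with the previously fixed first-order solution, and that the label discipline forces exactly the same block boundaries on both sides. Once these invariants are stated precisely, each rule case is routine, so the difficulty is organizational rather than deep.
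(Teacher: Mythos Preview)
Your proposal is correct and follows essentially the same approach as the paper: the paper's proof is a two-line remark saying it is ``similar to the previous one'' (the soundness proof, which proceeds by induction on the length of the derivation with a case split on \textsc{Block}/\textsc{Failure} for the last block, and an inner induction on the proof tree for a single block using the standard symbolic semantics result from~\cite{CCD-tcs13}), and you have spelled out precisely that structure in the completeness direction.
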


\begin{proof}
The proof of this proposition is  similar to the previous one. We rely
on the completeness of the standard symbolic semantics as shown
in~\cite{CCD-tcs13}. \qed
\end{proof}

\thrmsintcssymv*

\begin{proof}
We prove the two implications separately relying on
Proposition~\ref{pro:symb-soundness} and
Proposition~\ref{pro:symb-completeness}  to go from the symbolic compressed
semantics to its concrete counterpart and vice-versa. \qed
\end{proof}

\begin{corollary}
  \label{cor:sintc-ssymc}
For any extended simple processes $A$ and $B$, we have that:
\begin{center}
 $A \eint B
\iff A \esym B$.
\end{center}
\end{corollary}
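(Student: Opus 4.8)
The plan is to establish the two inclusions separately. The whole argument rests on one elementary observation and one invariance property. The observation is that $\Sol^1(\C) \subseteq \Sol^2(\C)$ for every dependency constraint system~$\C$: if, for a constraint $\constrd{\vect X}{\vect w}$, some $X_i$ is such that \emph{every} recipe deriving its value uses a handle of $\vect w$, then in particular $X_i\theta$ does, which is exactly second-order satisfaction. The invariance property is that whenever a substitution $\theta$ is a second-order solution of two systems $\C$ and $\C'$ obtained by executing the \emph{same} symbolic trace on two processes, and their associated frames are statically equivalent, then $\theta \in \Sol^1(\C)$ if and only if $\theta \in \Sol^1(\C')$. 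This holds because the two systems carry identical dependency constraints and identical recipe domains (both depend only on the trace), while static equivalence guarantees that for any recipes $M,N$ over the common domain, $M$ and $N$ derive equal messages in one frame iff they do in the other; since the $\Sol^1$ condition quantifies precisely over such message-equalities together with the purely syntactic side-condition $\fvp(M)\cap\vect w\neq\varnothing$, it transfers from $\C$ to $\C'$.

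Granting these, $(\Rightarrow)$ is short. Assume $A \incesymd B$ and take a trace with $A \ssymd{\tr}\triple{\p}{\Phi}{\Set_A}$ and $\theta \in \Sol^1(\Phi;\Set_A)$. By $\Sol^1\subseteq\Sol^2$ we have $\theta\in\Sol^2(\Phi;\Set_A)$, so the hypothesis yields $B \ssymd{\tr}\triple{\q}{\Psi}{\Set_B}$ with $\theta\in\Sol^2(\Psi;\Set_B)$ and $\Phi\lambda^A_\theta \statequiv \Psi\lambda^B_\theta$. The invariance property applied to $\cs{\Phi}{\Set_A}$ and $\cs{\Psi}{\Set_B}$ then upgrades $\theta\in\Sol^1(\Phi;\Set_A)$ into $\theta\in\Sol^1(\Psi;\Set_B)$, which is exactly what $A \incesymdf B$ requires.

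For $(\Leftarrow)$ I would argue by well-founded induction on $\och$ (a total, hence well-founded, order on the finite set of symbolic traces of a given length): for every $\tr$ and $\theta\in\Sol^2(\Phi;\Set)$ with $A \ssymd{\tr}\triple{\p}{\Phi}{\Set}$, I show that $B$ matches $(\tr,\theta)$ in the $\Sol^2$ sense. If $\theta\in\Sol^1(\Phi;\Set)$, I apply the hypothesis $A \incesymdf B$ directly and conclude using $\Sol^1\subseteq\Sol^2$. Otherwise some constraint $\constrd{\vect X}{\vect w}$ is satisfied by $\theta$ in the second-order but not the first-order sense, meaning that for each $X_i\in\vect X$ there is a recipe in the right domain avoiding $\vect w$ that derives the same message; replacing all these recipes produces $\theta'$ with $\lambda_{\theta'}=\lambda_\theta$ that no longer satisfies $\constrd{\vect X}{\vect w}$. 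By Lemma~\ref{lem:diff-min} the trace $\tr\theta'$ is then not minimal, and Lemma~\ref{lem:diff-swap} yields a strictly smaller symbolic trace $\tr'\och\tr$ with $A \ssymd{\tr'}\triple{\p}{\Phi}{\Set'}$, $\tr'\theta'=\mini(\tr\theta')$ and $\theta'\in\Sol(\Phi;\noconstrd{\Set'})$; since $\tr'\theta'$ is then minimal, Lemma~\ref{lem:diff-min} gives $\theta'\in\Sol^2(\Phi;\Set')$, so the induction hypothesis applies to $\tr'$ and matches it on $B$ with $\theta'$.

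It remains to transport this match back to $(\tr,\theta)$, and I expect this to be the main obstacle. Since $\tr =_{\I_b}\tr'$, reordering (again Lemma~\ref{lem:diff-swap}, now on the $B$ side) lets $B$ execute $\tr$ to some $\triple{\q}{\Psi}{\Set_B}$. The delicate point is that $\theta'$ was obtained by swapping recipes that derive equal messages \emph{on $A$'s frame}, and one must verify they still derive equal messages on $B$'s frame: this is precisely guaranteed by the static equivalence delivered by the induction hypothesis, which forces $\lambda^B_\theta=\lambda^B_{\theta'}$ and hence $\Psi\lambda^B_\theta \statequiv \Phi\lambda_\theta$. Finally $\theta\models\AllDep{\tr}$ holds on the $B$ side because it holds on the $A$ side (our hypothesis $\theta\in\Sol^2(\Phi;\Set)$) and this condition is purely syntactic on $\theta$ and on $\AllDep{\tr}$, which is common to $A$ and $B$. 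The genuine difficulty throughout is the bookkeeping of the interplay between second-order recipe substitutions and first-order messages across two statically equivalent frames; a secondary subtlety I would treat separately is the possible trailing improper block of a trace, handled as in the proof of Theorem~\ref{theo:comp-soundness-completeness} rather than by the minimization argument.
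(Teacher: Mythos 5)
You have proved the wrong statement. Corollary~\ref{cor:sintc-ssymc} asserts $A \eint B \iff A \esym B$: it relates the \emph{concrete} trace equivalence $\eint$ of Section~\ref{subsec:equivalence} to the trace equivalence $\esym$ defined from the compressed \emph{symbolic} semantics $\ssymc{}$ of Section~\ref{sec:constraint-solving}. No dependency constraints occur anywhere in this statement: the constraint systems produced by $\ssymc{}$ contain only deduction, equality and disequality constraints, and their solutions are given by $\Sol(\cdot)$, not by $\Sol^1$ or $\Sol^2$. Your entire argument, however, is a proof of Theorem~\ref{theo:diff-first}, i.e., of $A \incesymd B \iff A \incesymdf B$: the inclusion $\Sol^1(\C)\subseteq\Sol^2(\C)$, the invariance of first-order satisfaction across statically equivalent frames, the well-founded induction on $\och$ with the recipe-replacement step, and the appeals to Lemmas~\ref{lem:diff-swap} and~\ref{lem:diff-min} all live in the reduced semantics $\ssymd{}$ of Section~\ref{sec:diff}, which is introduced only \emph{after} the corollary you were asked to prove, and whose equivalences neither mention nor imply the one between $\eint$ and $\esym$.

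What the corollary actually requires are two bridges that your text never touches: (i) between the standard concrete semantics and the compressed concrete semantics --- this is Theorem~\ref{theo:comp-soundness-completeness}, which gives $A \eint B \iff A \eintc B$ and whose proof rests on Lemma~\ref{lem:permute}, Propositions~\ref{pro:comp-soundness} and~\ref{pro:comp-completeness}, and the careful treatment of trailing improper blocks; and (ii) between the compressed concrete semantics and the compressed symbolic semantics --- this is Theorem~\ref{thrm:sintc-ssymc}, which gives $A \inceintc B \iff A \incesym B$ via symbolic soundness and completeness (Propositions~\ref{pro:symb-soundness} and~\ref{pro:symb-completeness}). Given these, the corollary is immediate by chaining: $A \eint B \iff A \eintc B$, and applying Theorem~\ref{thrm:sintc-ssymc} in both directions, $A \eintc B \iff A \esym B$. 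In short, your proposal is a reasonable reconstruction of the proof of Theorem~\ref{theo:diff-first}, but as a proof of Corollary~\ref{cor:sintc-ssymc} it has a fatal gap: it establishes a different equivalence, at a different layer of the development, and none of the content needed for the stated result is present.
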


\section{Proofs of Section~\ref{sec:diff}}

\lemmadiffswap*
\begin{proof}
Let $\proc{\p_0}{\Phi_0}$ be an extended simple process such that
$\triple{\p_0}{\Phi_0}{\varnothing}
\ssymc{\tr} \triple{\p}{\Phi}{\Set}$ with $\tr$ 
made of  proper blocks. Let 
$\theta\in\Sol\cs{\Phi}{\Set}$ and  $\tr_c$ be a trace
such that $\tr_c =_{\I_b} \tr\theta$.
Thanks to Proposition~\ref{pro:symb-soundness}, we have
$\proc{\p_0}{\Phi_0}
\sintc{\tr\theta} \proc{\p\lambda_\theta}{\Phi\lambda_\theta}$
where $\lambda_\theta$ is the first order solution associated to $\theta$.

By definition of $\I_a$ and $\I_b$, for all blocks $b_1 = \io{c}{u}{w}, b_2 = \io{c'}{u'}{w'}$,
we have that $b_1 \I_b b_2$ if, and only if, $a_1 \I_a a_2$ for any
action $a_1$ in $b_1$ and $a_2$ in $b_2$.
Thus we have that for all concrete traces $\tr_1$, and $\tr_2$ made
of proper blocks,
$\tr_1 =_{\I_a} \tr_2
\iff \tr_1 =_{\I_b} \tr_2$.
In particular, we have that $\tr_c =_{\I_a} \tr\theta$.

By applying Proposition~\ref{pro:comp-soundness}, we have that
$\proc{\p_0}{\Phi_0}
\LRstep{\tr\theta} \proc{\p\lambda_\theta}{\Phi\lambda_\theta}$. Using
Lemma~\ref{lem:permute}, we deduce that $\proc{\p_0}{\Phi_0}
\LRstep{\tr_c} \proc{\p\lambda_\theta}{\Phi\lambda_\theta}$, and then
relying on Proposition~\ref{pro:comp-completeness} we obtain that $\proc{\p_0}{\Phi_0}
\sintc{\tr_c} \proc{\p\lambda_\theta}{\Phi\lambda_\theta}$.

Proposition~\ref{pro:symb-completeness} gives us a symbolic execution
$\triple{\p_0}{\Phi_0}{\varnothing}
\ssymc{\tr'}
\triple{\p_s}{\Phi_s}{\Set_s}$ and a substitution
$\theta'\in\Sol(\Phi_s;\Set_s)$ such that
$\p\lambda_\theta = \p_s\lambda_\theta'$
and $\tr'\theta' = \tr_c$. 
Further, it can be seen from the proof of 
Proposition~\ref{pro:symb-completeness} that the execution of $\tr'$
actually has the same structure as the one for $\tr_c$.
In particular, they make the same choices between \textsc{Then}
and \textsc{Else} rules.
Additionally, we can assume without loss of generality that
fresh variables associated to an action in $\tr'$ are
the same as those associated with the corresponding action in $\tr$.
This means that $\tr'$ is actually a permutation of $\tr$,
and we have $\p_s = \p$ and $\Phi_s = \Phi$.
Finally, since we have $\tr'\theta' =_{\I_b} \tr\theta$, it must be
that $\theta' = \theta$.
\qed
\end{proof}


\begin{proposition} \label{prop:carac}
The trace $t\in\Bio^*$ is minimal for $\och$ if, and only if,
for all factors $aub$ of $t$ such that
$a,b\in \Bio$ and $u\in \Bio^*$, if  $b \och a$  and 
$d\och b$ for all $d\in u$,
then there exists
$c\in au$ such that $\neg (c\;\I_b\; b)$.
\end{proposition}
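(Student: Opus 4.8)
The plan is to read the equivalence as a statement about \emph{forbidden factors} and prove both implications by contraposition. Call a factor $a\,u\,b$ of $t$ (with $a,b\in\Bio$ and $u\in\Bio^{*}$) \emph{forbidden} when it satisfies the hypothesis of the proposition, $b \och a$ and $d \och b$ for every block $d$ occurring in $u$, while violating its conclusion, i.e. $b \I_b c$ for every $c$ occurring in $a u$. With this vocabulary the proposition states exactly that $t$ is minimal for $\och$ if, and only if, $t$ contains no forbidden factor, so it suffices to show (i) a forbidden factor entails non-minimality, and (ii) non-minimality entails the existence of a forbidden factor.

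For (i), suppose $t = t_1\cdot a\,u\,b\cdot t_2$ exhibits a forbidden factor. Since $b$ is independent from every block of $a u$, I commute $b$ leftwards one independent transposition at a time, past the blocks of $u$ and then past $a$, obtaining $t' = t_1\cdot b\,a\,u\cdot t_2$ with $t' =_{\I_b} t$. The traces $t$ and $t'$ agree on the prefix $t_1$ and then differ, $t'$ carrying $b$ and $t$ carrying $a$ with $b \och a$; hence $t' \och t$ by definition of the lexicographic order on $\Bio^{*}$, exhibiting a strictly smaller element in the class of $t$, so $t \neq \mini(t)$. Only the independence half of the forbidden pattern is used here.

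For (ii), I start from a non-minimal $t$, set $m = \mini(t)$ (so $m =_{\I_b} t$ and $m \och t$), and look at the first position $i$ at which $t$ and $m$ differ; writing $a = t_i$ and $b = m_i$, minimality of $m$ forces $b \och a$. The key step is to trace the occurrence of $b$ that $m$ places at position $i$ back into $t$: because $\I_b$-equivalence induces a label-preserving bijection on block occurrences and never reorders dependent occurrences (in particular equally-labelled ones), that occurrence lies at some position $j > i$ of $t$, and since $m$ has moved it strictly to the left of all blocks at positions $i,\dots,j-1$, it must be independent from each of them. This already produces a factor $a\,u\,b$ with $u = t_{i+1}\cdots t_{j-1}$ such that $b \och a$ and $b \I_b c$ for all $c$ in $a u$; what may still fail is that $u$ can contain blocks that do not precede $b$.

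To repair this I trim the factor: let $k_0$ be the largest index in $\{i,\dots,j-1\}$ with $b \och t_{k_0}$ --- one exists since $a = t_i$ qualifies --- and keep only $a' = t_{k_0}$, $u' = t_{k_0+1}\cdots t_{j-1}$, and $b$. By maximality of $k_0$ every block of $u'$ precedes $b$ for $\och$ (no block in that range equals $b$, again because like-labelled occurrences keep their order), while $a'u'b$ remains a sub-factor of $a u b$, so $b$ stays independent from all of $a'u'$ and $b \och a'$. Thus $a'\,u'\,b$ is a forbidden factor, completing (ii). The main obstacle is the occurrence-tracking step of (ii): I expect to isolate it as the standard projection (Levi) property of partially commutative monoids --- the same ingredient underlying the Anisimov--Knuth characterisation~\cite{knuth-sorting} --- rather than reprove it here.
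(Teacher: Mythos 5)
Your proof is correct, but it takes a genuinely different route from the paper's. The paper never reasons directly about commutations or minimality: it quotes the Anisimov--Knuth characterization of $\och$-minimal elements of a partially commutative monoid (a trace is minimal iff for every factor $aub$ such that $b$ is independent of every block of $au$, one has $a \och b$) and then proves, by purely logical manipulation of the two ``for all factors'' conditions --- contraposition, plus picking the \emph{rightmost} block $d$ of $u$ violating $d \och b$ and re-applying the condition to the corresponding suffix --- that this condition is equivalent to the one in the statement. You instead reprove the characterization itself from more elementary material: your direction (i) is constructive, commuting $b$ leftwards across $au$ to exhibit an $=_{\I_b}$-equivalent, lexicographically smaller trace; your direction (ii) compares $t$ with $\mini(t)$ at their first disagreement, pulls the occurrence of $b$ back into $t$ via the projection/Levi property of trace monoids (dependent occurrences, in particular same-channel ones, never swap under $=_{\I_b}$), and then trims with the rightmost index $k_0$ such that $b \och t_{k_0}$ --- which is the exact counterpart, at the level of the trace, of the paper's rightmost-$d$ step at the level of conditions. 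The trade-off is clear: the paper's argument is shorter and confines all trace-theoretic content to the cited theorem, but requires Anisimov--Knuth in precisely the quoted form; yours is self-contained modulo a more standard and more elementary fact (order-preservation of dependent occurrences), and it makes explicit \emph{why} a forbidden factor corresponds to an actual smaller reordering. A further small virtue of your occurrence-tracking is that it justifies a point both proofs need implicitly: blocks whose relative order changes between $t$ and $\mini(t)$ are independent, hence on distinct channels, so the block order $\och$ is total on every pair involved in the lexicographic comparisons.
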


\begin{proof}
  In order to prove the above result, which relies on our specific
  definitions of $\I_b$ and $\och$, we are going to exploit a more
  general result by
  Anisimov and Knuth~\cite{knuth-sorting},
  that holds in any partially commutative monoid:

  \smallskip{}

\emph{Let $t\in \Bio^*$, \ie  a word on $\Bio$. We have that 
    $t$ is a minimal trace if, and only if,
    for all factors $aub$ of $t$ such that
    $a,b\in \Bio$ and $u\in \Bio^*$, if
$c\;\I_b\; b$ for  any $c \in au$ then
    $ a \och b$.}
  
  \smallskip{}

  \noindent
  We are going to prove the equivalence between the two characterizations, 
  namely:
  \\[2mm]
  \begin{tabular}{cl}
    (H) & For all factors $aub$ of $t$ such that 
    $b\och a$ and $d\och b$ for all $d\in u$,
    \\ & there exists $c\in au$ such that $\neg (c\;\I_b\; b)$.
    \\
    (AK) & For all factors $aub$ of $t$ such that 
    $c\;\I_b\; b$ for all $c\in au$,
    \\ & we have $a\och b$.
  \end{tabular}

  \smallskip{}

  \noindent ($AK \Rightarrow H$)
    Consider a factor $aub$ of $t$, such that
    $b \och a$ and $d\och b$ for all $d \in u$.
    We have to show that there exists $c \in au$ such that $\neg (c\;\I_b\; b)$.
    Assume that this is not the case, \ie for all $c\in au$, $c \;\I_b\; b$.
    By (AK) this means that $a\och b$, which is a contradiction.

  \noindent  ($H \Rightarrow AK$)
    Let $aub$ be a factor of $t$ such that
    $c\;\I_b\; b$ for all $c \in au$.
    We want to show that $a\och b$.
    Note that, for any $c \in au$, the hypothesis $c\;\I_b\; b$ implies
    that $b$ and $c$ have different channels, and therefore we have
    either $c \och b$ or $b \och c$.
  Now, the contrapositive of (H) tells us that either $\lnot (b \och a)$ or
  there is some $d\in u$ such that $\lnot (d \och b)$.
    In the first case, we have $a \och b$, and we are done.
    Otherwise, let $a'$ be the rightmost such $d$ and let $a'u'b$ be
    the corresponding suffix of $aub$. We have that $b \och a'$ and
    $d' \och b$ for all $d' \in u'$. We can thus apply (H) on $a'u'b$,
    which leads to a contradiction: there is some $c \in au$ such
    that $\lnot(c \;\I_b b)$.
\qed
\end{proof}

\lemmadiffmin*
\begin{proof}
  Let $A \ssymc{\tr} \triple{\p}{\Phi}{\Set}$
  and $\theta\in\Sol\cs{\Phi}{\Set}$.
  We have the following equivalences:

\smallskip{}
\begin{tabular}{ll}
&  $\theta \models \AllDep{\tr}$\\[1mm]
$\iff$&  for all $\constrd{\vect X}{\vect w}\in\AllDep{\tr}$ there
exists
    $X^i\in\vect X$ \\
& such that $\vect w\cap \fv(X^i\theta)\neq\varnothing$\hfill (by
    definition of $\models$)\\[1mm]
$\iff$&for all factors $\io{c_k}{X_k}{w_k}\cdot\ldots\cdot
    \io{c_n}{X_n}{w_n}$ of $\tr$ such that
    $c_n\och c_k$, \\
    & and $c_j\och c_n$ for all $j \in \{k+1, \ldots,n-1\}$, there exists
    $X^i_n\in\vect X_n$ \\
&such that $\vect w_k.\dots.\vect w_{n-1}\cap
    \fv(X^i_n\theta)\neq\varnothing$ \hfill (by definition of
    $\AllDep{\_}$)\\[1mm]
$\iff$ &  for all factors
    $\io{c_k}{u_k}{w_k}\cdot\ldots\cdot\io{c_n}{u_n}{w_n}$ of $\tr\theta$ such that
    $c_n\och c_k$,\\
&  and $c_j\och c_n$ for all $j \in \{k+1, \ldots, n-1\}$, there exists 
  $u^i_n\in \vect{u_n}$ \\
& such that $\vect w_k \cdot\ldots\cdot \vect w_{n-1} \cap
    \fv(u^i_n)\neq\varnothing$ \\[1mm]
$\iff$ & for all factors
    $\io{c_k}{u_k}{w_k}\cdot\ldots\cdot \io{c_n}{u_n}{w_n}$ of $\tr\theta$ such that
    $c_n\och c_k$, \\
&  and $c_j\och c_n$ for all $j \in \{k+1,\ldots, n-1\}$, there exists
$p\in \{k,\ldots, n-1\}$ \\
& such that $\neg (\io{c_p}{u_p}{w_p} \,\I_{b}
\, \io{c_n}{u_n}{w_{n}})$ \hfill (by definition of $\I_b$).\\
&  (Note here that $\vect w_n\cap\fv{(u)}=\varnothing$ for all
$u\in\vect u_p$.) \\[1mm]
$\iff$ & $\tr\theta$ is minimal. \hfill (by Proposition~\ref{prop:carac})
\end{tabular}
\qed
\end{proof}

\theoesymesymd*
\begin{proof}
  \noindent $(\Rightarrow)$
  Assume $A \incesymd B$ and
  consider an execution $A \ssymd{\tr} \triple{\p}{\Phi}{\Set}$
  and a solution $\theta\in\Sol^2(\Phi;\Set)$.
  We are going to establish that $B$ can execute the same trace and
  yield a symbolic process $\triple{\p'}{\Phi'}{\Set'}$ such that
  $\theta \in \Sol^2(\Phi';\Set')$, and 
such that
  $\Phi\lambda_\theta \statequiv \Phi'\lambda'_\theta$ where
  $\lambda_\theta$ (resp. $\lambda_\theta'$) is the substitution
  associated to $\theta$ w.r.t. $\cs{\Phi}{\Set}$
  (resp. $\cs{\Phi'}{\Set'}$).

  Our execution directly translates to the regular symbolic semantics:
  $A \ssymc{\tr} \triple{\p}{\Phi}{\noconstrd{\Set}}$.
  Since $A \incesym B$, we also have
  $B \ssymc{\tr} \triple{\p'}{\Phi'}{\Set'_0}$
  with $\theta\in\Sol(\Phi;\Set'_0)$, and $\Phi\lambda_\theta
  \statequiv \Phi'\lambda'_\theta$.
  We thus have
  $B \ssymd{\tr} \triple{\p'}{\Phi'}{\Set'}$
  with $\Set'_0 = \noconstrd{(\Set')}$.
  It remains to show that $\theta$ satisfies the dependency constraints
  of $\AllDep{\tr}$. This is entailed by
  $\theta\in\Sol^2(\Phi;\Set)$ and $\Set = \noconstrd{\Set}\cup\AllDep{\tr}$.
  Finally, since $\Set' = \Set'_0\cup\AllDep{\tr}$, we have that
  $\theta\in\Sol^2(\Phi';\Set')$.
\medskip{}

\noindent $(\Leftarrow)$
  Assume that $A \incesymd B$ and
  let
  $A \ssymc{\tr} \triple{\p_A}{\Phi_A}{\Set_A}$
  be an execution of $A$ and
  $\theta$ a substitution such that $\theta\in\Sol(\Phi_A;\Set_A)$.
  We distinguish two cases whether the execution of $A$
  uses {\sc Failure} or not.

  (1) No {\sc Failure}.
  In that case, $\tr$ is made of proper blocks.
  Let $\tr_m^c$ be the minimum of the class of $\tr\theta$
  (\ie $\tr_m^c = \mini(\tr\theta)$). We have
  $\tr_m^c=_{\I_b}\tr\theta$ and thus, by applying Lemma~\ref{lem:diff-swap},
  we have that there is a symbolic trace $\tr_m$ such that:
  $\tr_m^c = \tr_m\theta$,
  $A\ssymc{\tr_m}\triple{\p_A}{\Phi_A}{\Set^0_A}$
  and $\theta\in\Sol(\Phi_A;\Set^0_A)$.
  We thus have
  $A\ssymd{\tr_m}\triple{\p_A}{\Phi_A}{\Set'_A}$
  where $\Set'_A = \Set^0_A\cup\AllDep{\tr_m}$.
  Since $\tr_m\theta$ is a minimum trace,
  by Lemma~\ref{lem:diff-min}, we have that
  $\theta\models \AllDep{\tr_m}$ and
  thus $\theta\in\Sol^2(\Phi_A;\Set'_A)$.
  By hypothesis, we have
  $B\ssymd{\tr_m}\triple{\p_B}{\Phi_B}{\Set'_B}$,
  $\theta\in\Sol^2(\Phi_B;\Set'_B)$ and
  $\Phi_A\lambda_{\theta}^A \estat\Phi_B\lambda_\theta^B$.
  Again, a similar execution exists in the symbolic semantics:
  $B\ssymc{\tr_m}\triple{\p_B}{\Phi_B}{\noconstrd{(\Set'_B)}}$ and
  $\theta\in\Sol(\Phi_B;\noconstrd{(\Set'_B)})$.
  We also have that $\tr_m$ is made of proper blocks
  and $\tr_m\theta =_{\I}\tr\theta$.
  By applying Lemma~\ref{lem:diff-swap}, we obtain
  $B\ssymc{\tr}\triple{\p_B}{\Phi_B}{\Set_B}$ such that
  $\theta\in\Sol(\Phi_B;\Set_B)$.
  \newcommand{\impro}{\io{\mathrm{c}}{X}{\varnothing}}
  \newcommand{\trio}{\tr_{\mathrm{io}}}\\
  (2) With {\sc Failure}. In that case, $\tr = \trio\cdot\impro$
  where $\trio$ is made of proper blocks, and the last block of
  $\tr$ is improper.
  We have the following execution (note that we necessarily have
  that $\p_A = \varnothing$): 
  \[
  A \ssymc{\trio} \triple{\p_A^2}{\Phi_A^2}{\Set_A^2}
  \ssymc{\impro} \triple{\varnothing}{\Phi_A^2}{\Set_A}.
  \]
  Let $\tr_m^c$ be the minimum trace of the class
  of $\tr\theta$ (\ie $\tr_m^c = \mini(\tr\theta)$).
  There exists $\tr_m$ such that $\tr_m^c = \tr_m\theta$.
  We can rewrite the trace as
  $\tr_m = \tr_1\cdot\impro\cdot\tr_2$ such that there is no
  block on channel $c$ in $\tr_2$.
  Let $\tr_m' = \tr_1\cdot\tr_2$. We have that
  $\tr_m'\theta = \mini(\trio\theta)$.
  Similarly to the first case
  (applying Lemma~\ref{lem:diff-swap} and \ref{lem:diff-min})
  we obtain the two following
  executions:
  \[
  A\ssymd{\tr_1}\triple{\p_A^1}{\Phi_A^1}{\Set_A^1}
  \ssymd{\tr_2}\triple{\p_A^2}{\Phi_A^2}{\Set_A'^2},
  \ \theta_{|\fvs(\Set_A'^2)}\in\Sol^2(\Phi_A^2;\Set_A'^2)
  \]
  and
  \[
  A\ssymd{\tr_1} \triple{\p_A^1}{\Phi_A^1}{\Set_A^1},
  \ \theta_{|\fvs(\Set_A^1)}\in\Sol^2(\Phi_A^1;\Set_A^1).
  \]
  Moreover $\triple{\p_A^2}{\Phi_A^2}{\Set_A^2}\ssymc{\impro}
  \triple{\varnothing}{\Phi_A^2}{\Set_A}$ and $\tr_2$ does not contain
  any block on channel $c$, thus we have:
  \[ \triple{\p_A^1}{\Phi_A^1}{\Set_A^1}\ssymd{\impro}
    \triple{\varnothing}{\Phi_A^1}{\Set_A'^1},
  \ \theta_{|\fvs(\Set_A')}\in\Sol(\Phi_A^1;\noconstrd{(\Set_A'^1)}) \]
  Since $(\tr_1.\impro)\theta$ is a minimum trace (any prefix of a minimum
  trace is a minimum as well), we obtain by applying Lemma~\ref{lem:diff-min},
  $\theta\models\AllDep{\tr_1.\impro}$.
  And thus $\theta_{|\fvs(\Set'_A)}\in\Sol^2(\Phi_A^1;\Set_A'^1)$.
  By hypothesis, we obtain the following executions:
  \[
  B\ssymd{\tr_1}\triple{\p_B^1}{\Phi_B^1}{\Set_B^1}
  \ssymd{\tr_2}\triple{\p_B^2}{\Phi_B^2}{\Set_B'^2},
  \ \theta_{|\fvs(\Set_B'^2)}\in\Sol^2(\Phi_B^2;\Set_B'^2)
  \]
  and
  \[
  B\ssymd{\tr_1} \triple{\p_B^1}{\Phi_B^1}{\Set_B^1}
  \ssymd{\impro}   \triple{\varnothing}{\Phi_B^1}{\Set'_B},
  \ \theta_{|\fvs(\Set_B')}\in\Sol^2(\Phi_B^1;\Set_B').
  \]
  There are corresponding executions in the symbolic compressed semantics.
  Since $\tr_2$ contains no block on channel $c$
  we have that
  \[
  B\ssymc{\tr_1} \triple{\p_B^1}{\Phi_B^1}{\Set_B^1}
  \ssymd{\tr_2}\triple{\p_B^2}{\Phi_B^2}{\Set_B'^2}
  \ssymc{\impro} \triple{\varnothing}{\Phi_B^2}{\Set_B'}
  \]
  and $\theta\in\Sol(\Phi_B^2;\Set'_B)$.
  Morevoer, relying on the fact that $\tr_2$ contains no action on channel $c$
  and that $\impro$ contains no outputs,
  we have $\tr_1\cdot\tr_2\cdot\impro=_{\I_b}\tr_1\cdot\impro\cdot\tr_2$
  and thus $\tr_1\cdot\tr_2\cdot\impro =_{\I_b} \tr$.
  The Lemma~\ref{lem:diff-swap}
  provides the required conclusion.
  \qed
\end{proof}

\begin{corollary}
\label{cor:diff-coincide}
For any extended simple processes $A$ and $B$, we have that:
\begin{center}
$A \eint B \iff A \esymd B$.
\end{center}
\end{corollary}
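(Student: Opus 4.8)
The plan is to derive this corollary purely by composing equivalence results that are already in hand, with no fresh argument about processes, recipes, or constraint systems. The key structural observation is that each of the three notions of trace equivalence at play here --- namely $\approx$, $\approx_s$ and $\approx_d^2$ --- is \emph{by definition} the conjunction of the corresponding inclusion taken in both directions. Hence it suffices to chain inclusion-level results, being careful only that those results are quantified over arbitrary ordered pairs of processes, so that they may legitimately be instantiated at $(A,B)$ and at $(B,A)$ independently.

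First I would invoke Corollary~\ref{cor:sintc-ssymc}, which already states the equivalence $A \approx B \iff A \approx_s B$ directly at the level of trace equivalence (it itself packages Theorems~\ref{theo:comp-soundness-completeness} and~\ref{thrm:sintc-ssymc}). This immediately reduces the goal to establishing $A \approx_s B \iff A \approx_d^2 B$, i.e.\ that symbolic trace equivalence coincides with second-order reduced trace equivalence.

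For this remaining step I would appeal to Theorem~\ref{theo:esymesymd}, which asserts $A \sqsubseteq_s B \iff A \sqsubseteq_d^2 B$ for \emph{any} extended simple processes $A$ and $B$. Applying it once to the pair $(A,B)$ and once to the swapped pair $(B,A)$ yields the two biconditionals $A \sqsubseteq_s B \iff A \sqsubseteq_d^2 B$ and $B \sqsubseteq_s A \iff B \sqsubseteq_d^2 A$. Conjoining them gives exactly $A \approx_s B \iff A \approx_d^2 B$, since $\approx_s$ (resp.\ $\approx_d^2$) unfolds precisely to the two inclusions $\sqsubseteq_s$ (resp.\ $\sqsubseteq_d^2$) used in that theorem. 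Chaining the two equivalences then produces $A \approx B \iff A \approx_d^2 B$, as required.

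The only point deserving care --- and the closest thing to an obstacle in an otherwise formal argument --- is the passage from the inclusion-level statement of Theorem~\ref{theo:esymesymd} to the symmetric equivalence-level statement: one must check that the theorem is genuinely universally quantified over pairs (so that the swapped instance is available) and that the definitions of $\approx_s$ and $\approx_d^2$ are indeed the mutual-inclusion closures of the very inclusions appearing there. Both hold by construction, so no further reasoning about dependency constraints, minimality of traces, or static equivalence is needed here; the corollary is a direct formal consequence of the preceding development.
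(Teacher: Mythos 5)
Your proof is correct and takes essentially the same route as the paper: the paper states this corollary without any explicit proof, treating it as an immediate consequence of Corollary~\ref{cor:sintc-ssymc} together with Theorem~\ref{theo:esymesymd}, which is exactly the composition you spell out. Your only added content --- instantiating the inclusion-level Theorem~\ref{theo:esymesymd} at both $(A,B)$ and $(B,A)$ and unfolding the equivalences as mutual inclusions --- is precisely the bookkeeping the paper leaves implicit, and it is sound.
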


\begin{example}
  We illustrate how the second-order semantics is weaker than the first-order
  one, unless restrictions can be placed on recipes.
  We start by observing that ``non-normal'' recipes create spurious 
  dependencies. For instance, $\pi_1(\pair{u}{w})$ has a second-order
  dependency on $w$, but it is hardly relevant since the recipe is equal
  to $u$ modulo $\E$.
  This kind of problem can easily be avoided by showing that it is enough
  to consider one representative modulo $\E$.
  But more complex cases arise quickly:
  assuming that two handles $w_1$ and $w_2$ refer to the same message,
  the (normal-form) recipe
  $\adec{\aenc{u}{\pk{\pair{w}{w_1}}}}{\pk{\pair{w}{w_2}}}$ seems to depend on $w$
  while it is just a convoluted way of deriving $u$.
  Again, this problem can be avoided by integrating more complex
  observations:
  it has been shown that for standard cryptographic primitives,
  it is not needed to consider recipes that apply a destructor on top of a 
  constructor (\eg~\cite{cheval-ccs2011}).
  It is possible that, under such assumptions, the simpler
  second-order semantic becomes interesting enough in itself.
\end{example}


\theodifffirst*
\begin{proof}
We prove the two directions separately.

\noindent $(\Rightarrow)$
Let $A$ and $B$ be two simple processes 
such that $A\incesymd B$,
Let $\tr$ be a symbolic trace and $\triple{\p_A}{\Phi_A}{\Set_A}$ be a
symbolic process such that 
$A\ssymd{\tr}\triple{\p_A}{\Phi_A}{\Set_A}$, 
 and $\theta$ be a substitution such that
$\theta\in\Sol^1(\Phi_A;\Set_A)$.
We have that $\theta\in\Sol^2(\Phi_A;\Set_A)$ as well and so by hypothesis,
there exists a symbolic process $\triple{\p_B}{\Phi_B}{\Set_B}$ such
that $B\ssymd{\tr}  \triple{\p_B}{\Phi_B}{\Set_B}$ with
$\theta\in\Sol^2(\Phi_B;\Set_B)$, and $\Phi_A\lambda^A_\theta \estat
\Phi_B\lambda^B_\theta$ where $\lambda^A_\theta$
(resp. $\lambda^B_\theta$) is the first-order substitution associated
to $\theta$ w.r.t. $\cs{\Phi_A}{\Set_A}$ (resp. $\cs{\Phi_B}{\Set_B}$).

We now have to show that $\theta \in \Sol^1(\Phi_B;\Set_B)$. We have
  that $\theta \in \Sol^1(\Phi_A;\Set_A)$.
Let $\constrd{\vect X}{\vect w}\in\AllDep{\tr}$. We know that 
for each recipe $M\in\T(D_{\Set_A(X)})$ satisfying
$M(\Phi_A\lambda^A_\theta) =_\E (X\theta)(\Phi_A\lambda^A_\theta)$,
we have $\fvp(M) \cap \vect w \neq \varnothing$. We have to show that
the same holds for $B$. Let $M$ be a recipe in $\T(D_{\Set_B(X)})$ such that
$M(\Phi_B\lambda^B_\theta) =_\E (X\theta)(\Phi_B\lambda^B_\theta)$.
Since $A\incesymd B$, we have that $M\in\T(D_{\Set_A(X)})$.
We also know that  $\Phi_A\lambda^A_\theta\estat
\Phi_B\lambda^B_\theta$, and thus   $M(\Phi_A\lambda^A_\theta) =_\E (X\theta)(\Phi_A\lambda^A_\theta)$.
Since $\theta \in \Sol^1(\Phi_A;\Set_A)$, we conclude that $\fvp(M)
\cap \vect w \neq \varnothing$. This allows us to conclude that
$\theta \in \Sol^1(\Phi_B;\Set_B)$.

\bigskip{}

\noindent $(\Leftarrow)$
 Let $A$ and $B$ be two symbolic processes
 such that $A\incesymdf B$.
We prove that 
for all symbolic trace $\tr$,
if $A\ssymd{\tr}\triple{\p_A}{\Phi_A}{\Set_A}$ and
 $\theta\in\Sol^2(\Phi_A;\Set_A)$,
 then we have that
$B\ssymd{\tr}\triple{\p_B}{\Phi_B}{\Set_B}$ such that
$\Phi_A\lambda^A_\theta\estat\Phi_B\lambda^B_\theta$ and $\theta\in\Sol^2(\Phi_B;\Set_B)$.
We reason by induction on $\och$ over all symbolic traces.
Let $\tr$ ba a symbolic trace  such that
$A\ssymd{\tr}\triple{\p_A}{\Phi_A}{\Set_A}$ and
 $\theta\in\Sol^2(\Phi_A;\Set_A)$.
If $\theta\in\Sol^1(\Phi_A;\Set_A)$ then we can conclude
relying on our hypothesis $A\incesymdf B$. Otherwise,
there exists at least one dependency constraint
$\constrd{\vect X}{\vect w}\in\Set_A$ such that $\theta \models
\constrd{\vect X}{\vect w}\in\Set_A$ but that $\theta$ does not satisfy
in the first-order sense.
Thus, for each $X_i\in\vect X$, there is a recipe
$M_{X_i}\in\T(D_{\Set_A(X)})$ satisfying
$M_{X_i}(\Phi_A\lambda^A_{\theta}) =_\E (X_i\theta)(\Phi_A\lambda^A_{\theta})$
and $\fvp(M_{X_i}) \cap \vect w = \varnothing$.
We construct such recipes for each second order variable that
does not satisfy a dependency constraint in the first order.
We define $\theta'$ as follows. 
\begin{center}
For all $Y\in\fvs(\Set_A)$, 
$Y\theta' =   M_{Y}$ when it exists and 
$Y\theta' =  Y\theta$ otherwise.
\end{center}
We obviously have that $\theta'\in\Sol(\Phi_A;\noconstrd{\Set}_A)$.
Let us show that
$B\ssymc{\tr}\triple{\p_B}{\Phi_B}{\Set^0_B}$ with
$\theta'\in\Sol(\Phi_B;\Set^0_B)$.
We distinguish  two cases whether the execution of $A$ uses
{\sc Failure} or not.
\smallskip{}

\noindent{}
(No {\sc Failure})
In that case, since $\tr\theta$ is made of proper blocks,
we can directly
apply Lemma~\ref{lem:diff-min} and Lemma~\ref{lem:diff-swap}
for $A\ssymc{\tr}\triple{\p_A}{\Phi_A}{\noconstrd{\Set}_A}$ and
$\theta'$.
We thus obtain a symbolic trace $\tr'$ such that
$A\ssymd{\tr'} \triple{\p_A}{\Phi_A}{\Set'_A}$,
$\tr'\theta = \mini(\tr\theta)$ and
$\theta'\in\Sol^2(\Phi_A;\Set'_A)$.
Since we know that $\theta'\notin\Sol^2(\Phi_A;\Set_A)$,
we have $\tr\neq \tr'$ and thus $\tr'\och \tr$.

By inductive hypothesis, we have that
$B\ssymd{\tr'}\triple{\p_B}{\Phi_B}{\Set'_B}$ such that
$\Phi_A\lambda^A_{\theta'}\estat\Phi_B\lambda^B_{\theta'}$ and $\theta'\in\Sol^2(\Phi_B;\Set'_B)$.
Since, $B\ssymc{\tr'}\triple{\p_B}{\Phi_B}{\Set'_0}$ (where $\Set'_0 =
\noconstrd{\Set'_B}$) and
$\tr\theta'=_{\Ind_b}\tr'\theta'$, Lemma~\ref{lem:diff-swap} implies
that
$B\ssymc{\tr}\triple{\p_B}{\Phi_B}{\Set^0_B}$ with
$\theta'\in\Sol(\Phi_B;\Set^0_B)$.

\smallskip{}

\noindent{}
(With {\sc Failure})
  \newcommand{\impro}{\io{\mathrm{c}}{X}{\varnothing}}
  \newcommand{\trio}{\tr_{\mathrm{io}}}
  In that case, the trace is of the form
  $\tr = \trio.\impro$ where $\trio$ is made of proper blocks
  and $\vect X$ is not empty.
  We thus have the following execution (note that we necessary have that
  $\p_A = \varnothing$):
  \[
  A \ssymd{\trio} \triple{\p_A^2}{\Phi_A}{\Set_A^2}
  \ssymd{\impro} \triple{\varnothing}{\Phi_A}{\Set_A}.
  \]
  With $\theta\in\Sol^2(\Phi_A;\Set_A)$ and
  $\theta_{|\fvs(\Set_A^2)}\in\Sol^2(\Phi_A;\Set_A^2)$.
  We also have shown that
  $\theta'\in\Sol(\Phi_A;\noconstrd{\Set_A})$ and so
  $\theta'_{|\fvs(\Set_A^2)}\in\Sol(\Phi_A;\noconstrd{(\Set_A^2)})$.
  Let $\tr_m^c$ be the minimum trace of the class
  of $\tr\theta'$ (\ie $\tr_m^c = \mini(\tr\theta')$).
  There exists $\tr_m$ such that $\tr_m^c = \tr_m\theta$.
  We can rewrite the trace as follow
  $\tr_m = \tr_1\cdot\impro\cdot\tr_2$ such that there is no
  block on channel $c$ in $\tr_2$.
  Let $\tr_m' = \tr_1\cdot\tr_2$. We have that
  $\tr_m'\theta' = \mini(\trio\theta')$.
  By using the fact that $\theta'$ is not a second-order
  solution of $(\Phi_A;\Set_A)$, we can deduce that
  $\tr_m^c\neq\tr\theta'$ and so $\tr_m\och\tr$.
  Similarly, one can obtain $\tr_m'\och\trio\och \tr$.

  Similarly to the first case
  (applying Lemma~\ref{lem:diff-swap} and \ref{lem:diff-min})
  for the trace $\tr_1\cdot\tr_2$
  we obtain the following
  execution:
  \[
  A\ssymd{\tr_1}\triple{\p_A^1}{\Phi_A^1}{\Set_A^1}
  \ssymd{\tr_2}\triple{\p_A^2}{\Phi_A}{\Set_A'^2},
  \ \theta'_{|\fvs(\Set_A'^2)}\in\Sol^2(\Phi_A;\Set_A'^2)
  \]

  Moreover $\triple{\p_A^2}{\Phi_A}{\Set_A^2}\ssymd{\impro}
  \triple{\varnothing}{\Phi_A}{\Set_A}$ and $\tr_2$ does not contain
  any block on channel $c$ then
  $\triple{\p_A^1}{\Phi_A^1}{\Set_A^1}\ssymd{\impro}
  \triple{\varnothing}{\Phi_A^1}{\Set_A'^1}$.
  But the block $\impro$ is independent with all the blocks
  in $\tr_2$, so the recipes of $\theta'$ for the block $\impro$
  satisfy the constraints in $\Set_A'^1$.
  This implies $\theta'_{|\fvs(\Set'_A)}\in\Sol(\Phi_A^1;\noconstrd{(\Set_A'^1)})$.
  Since $(\tr_1.\impro)\theta'$ is a minimum trace (any prefix of a minimum
  trace is a minimum as well), we obtain by applying Lemma~\ref{lem:diff-min},
  $\theta'\models\AllDep{\tr_1.\impro}$.
  And thus $\theta'_{|\fvs(\Set'_A)}\in\Sol^2(\Phi_A^1;\Set_A'^1)$.

  By induction hypothesis, we obtain the following executions:
  \[
  B\ssymd{\tr_1}\triple{\p_B^1}{\Phi_B^1}{\Set_B^1}
  \ssymd{\tr_2}\triple{\p_B^2}{\Phi_B}{\Set_B'^2},
  \ \theta'_{|\fvs(\Set_B'^2)}\in\Sol^2(\Phi_B;\Set_B'^2)
  \]
  and
  \[
  B\ssymd{\tr_1} \triple{\p_B^1}{\Phi_B^1}{\Set_B^1}
  \ssymd{\impro}   \triple{\varnothing}{\Phi_B^1}{\Set_B'^1},
  \ \theta'_{|\fvs(\Set'_B)}\in\Sol^2(\Phi_B^1;\Set_B'^1).
  \]

  We also have that $\Phi_B^1\lambda_{\theta'}^B\estat
  \Phi_A^1\lambda_{\theta'}^A$ and
$\Phi_B\lambda_{\theta'}^B\estat
  \Phi_A\lambda_{\theta'}^A$.
  There are similar executions in symbolic semantics and
  since $\tr_2$ contains no block on channel $c$
  we have that
  \[
  B\ssymc{\tr_1} \triple{\p_B^1}{\Phi_B^1}{\Set_B^1}
  \ssymc{\tr_2}   \triple{\p_B^2}{\Phi_B}{\Set'^2_B}
  \ssymc{\impro} \triple{\varnothing}{\Phi_B}{\Set_B'}
\]
  and $\theta'\in\Sol(\Phi_B;\Set'_B)$.
  Morevoer, relying on the fact that $\tr_2$ contains no action on channel $c$
  and that $\impro$ contains no outputs,
  we have $\tr_1\cdot\tr_2\cdot\impro=_{\I_b}\tr_1\cdot\impro\cdot\tr_2$
  and thus $\tr_1\cdot\tr_2\cdot\impro =_{\I_b} \tr$.
  The Lemma~\ref{lem:diff-swap}
  provides the required execution:
\[
B\ssymc{\tr}\triple{\p_B}{\Phi_B}{\Set^0_B},\ \theta'\in\Sol(\Phi_B;\Set^0_B)
\]
\smallskip{}

Let us show that $\theta\in\Sol(\Phi_B;\Set^0_B)$.
Note that we have for all $X\in\fvs(\Set'_A)$, either
$X\theta' = X \theta$ or
$X\theta' = M_X$. Since,
$M_{X}(\Phi_A\lambda^A_{\theta}) =_\E (X\theta)(\Phi_A\lambda^A_{\theta_{n}})
=x\lambda^A_{\theta}$ where $x$ is the first-order variable
associated to $X$, for all $X\in\fvs(\Set_A)$,
we have the following equation:
\begin{equation}
(X\theta')(\Phi_A\lambda^A_{\theta}) =_\E (X\theta)(\Phi_A\lambda^A_{\theta})
\label{eq:first-A}
\end{equation}
and thus $\lambda^A_{\theta} = \lambda^A_{\theta'}$.
Since $\Phi_A\lambda^A_{\theta'}\estat\Phi_B\lambda^B_{\theta'}$, the equations~\eqref{eq:first-A} hold
for $\Phi_B\lambda_{\theta'}$ as well. More formally, for all $X\in\fvs(\Set_B^0)$,
we have the following equation:
\begin{equation}
(X\theta')(\Phi_B\lambda^B_{\theta'}) =_\E (X\theta)(\Phi_B\lambda^B_{\theta'})
\label{eq:first-B}
\end{equation}
Since the recipes of $\theta'$ satisfy the constraints
of $\Set^0_B$, so do the recipes of $\theta$.
Then $\theta\in\Sol(\Phi_B;\Set^0_B)$ and $\lambda^B_{\theta} = \lambda^B_{\theta'}$.

Further, we have that
$B\ssymd{\tr}\triple{\p_B}{\Phi_B}{\Set_B}$ where
$\Set_B = \Set^0_B\cup\AllDep{\tr}$. Since
$\theta\in\Sol(\Phi_A;\Set_A)$ and $\Set_A = \noconstrd{\Set_A}\cup\AllDep{\tr}$,
we have $\theta\models\AllDep{\tr}$.
Hence, we have that $\theta\in\Sol^2(\Phi_B;\Set_B)$.
\qed
\end{proof}

\begin{corollary}
For any extended simple processes $A$ and $B$, we have that:
\begin{center}
$A \eint B \iff A \esymdf B$.
\end{center}
\end{corollary}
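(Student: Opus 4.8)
The plan is to obtain this statement as a pure corollary of two results already proved above, with no need to reopen any of the low-level trace reasoning. The target equivalence $A \eint B \iff A \esymdf B$ is precisely the last link in the chain of semantics refinements, and the two ingredients that close it — Corollary~\ref{cor:diff-coincide} and Theorem~\ref{theo:diff-first} — are both at hand.

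First I would invoke Corollary~\ref{cor:diff-coincide}, which states that concrete trace equivalence coincides with the second-order reduced symbolic trace equivalence, i.e. $A \eint B \iff A \esymd B$ for all extended simple processes $A$ and $B$. This reduces the goal to establishing $A \esymd B \iff A \esymdf B$, that is, to showing that the second-order and first-order reduced trace equivalences coincide.

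Next I would unfold the definitions of $\esymd$ and $\esymdf$. Both are defined as two-sided inclusions, following Definition~\ref{def:equiv-comp-symb} (see Section~\ref{subsec:reduced-semantics} and the end of the first-order subsection): $A \esymd B$ holds exactly when $A \incesymd B$ and $B \incesymd A$, and likewise $A \esymdf B$ holds exactly when $A \incesymdf B$ and $B \incesymdf A$. Theorem~\ref{theo:diff-first} asserts that $A \incesymd B \iff A \incesymdf B$ for arbitrary $A$ and $B$; applying it once to the pair $(A,B)$ and once to the pair $(B,A)$ gives the two equivalences $A \incesymd B \iff A \incesymdf B$ and $B \incesymd A \iff B \incesymdf A$. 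Conjoining these yields $A \esymd B \iff A \esymdf B$.

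Finally I would chain the two established equivalences, $A \eint B \iff A \esymd B \iff A \esymdf B$, which is exactly the claim. I do not expect any genuine obstacle in this argument: all of the substantive work — in particular the $(\Leftarrow)$ direction of Theorem~\ref{theo:diff-first}, where a minimal trace is built and a second-order solution is re-instantiated into a first-order one by iterating Lemmas~\ref{lem:diff-swap} and~\ref{lem:diff-min} — has already been discharged. The present corollary is purely a bookkeeping step that lifts the inclusion-level result of Theorem~\ref{theo:diff-first} to the equivalence level and splices it onto Corollary~\ref{cor:diff-coincide}.
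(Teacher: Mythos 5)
Your proposal is correct and is exactly the paper's argument: the paper proves this corollary as a direct consequence of Corollary~\ref{cor:diff-coincide} and Theorem~\ref{theo:diff-first}, and your only addition is to spell out the (routine) step of unfolding $\esymd$ and $\esymdf$ into their two inclusions so that Theorem~\ref{theo:diff-first} can be applied to each direction. Nothing is missing.
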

\begin{proof}
  Direct consequences of
  Corollary~\ref{cor:diff-coincide} and Theorem~\ref{theo:diff-first}.
\end{proof}

\end{document}